\documentclass[11pt]{article}
\usepackage{color}
\usepackage[colorlinks=true, allcolors=blue,backref=page]{hyperref}
\usepackage{amsmath, amssymb, amsthm, thmtools}
\usepackage{mathrsfs}
\usepackage{mathtools}
\usepackage[noabbrev,capitalize,nameinlink]{cleveref}
\usepackage{fullpage}
\usepackage[noadjust]{cite}
\usepackage{graphics}
\usepackage{pifont}
\usepackage{tikz}
\usetikzlibrary{decorations.markings,shapes.geometric}
\usepackage{bbm}
\usepackage{float} 
\usepackage[T1]{fontenc}

\usetikzlibrary{arrows.meta}

\usepackage{environ}
\usepackage{framed}
\usepackage{url}
\usepackage[linesnumbered,ruled,vlined]{algorithm2e}
\usepackage[noend]{algpseudocode}
\usepackage[labelfont=bf]{caption}
\usepackage{cite}
\usepackage{framed}
\usepackage[framemethod=tikz]{mdframed}
\usepackage{appendix}
\usepackage{graphicx}
\usepackage{tcolorbox}
\usepackage{enumerate}
\allowdisplaybreaks[1]
\usepackage{enumerate}
\usepackage{stmaryrd}

\usepackage[margin=1in]{geometry}
\usepackage{braket}

\usepackage[shortlabels]{enumitem}
\crefformat{enumi}{#2#1#3}
\crefrangeformat{enumi}{#3#1#4 to~#5#2#6}
\crefmultiformat{enumi}{#2#1#3}
{ and~#2#1#3}{, #2#1#3}{ and~#2#1#3}

\DeclareSymbolFont{symbolsC}{U}{pxsyc}{m}{n}
\SetSymbolFont{symbolsC}{bold}{U}{pxsyc}{bx}{n}
\DeclareFontSubstitution{U}{pxsyc}{m}{n}
\DeclareMathSymbol{\medcircle}{\mathbin}{symbolsC}{7}

\crefformat{equation}{#2(#1)#3}
\crefrangeformat{equation}{#3(#1)#4 to~#5(#2)#6}
\crefmultiformat{equation}{#2(#1)#3}{ and~#2(#1)#3}{, #2(#1)#3}{ and~#2(#1)#3}
\AtBeginEnvironment{appendices}{\crefalias{section}{appendix}} %

\numberwithin{equation}{section}
\newtheorem{theorem}{Theorem}[section]
\newtheorem{proposition}[theorem]{Proposition}
\newtheorem{lemma}[theorem]{Lemma}

\newtheorem{corollary}[theorem]{Corollary}

\newtheorem*{question*}{Question}

\theoremstyle{definition}
\newtheorem{definition}[theorem]{Definition}

\newtheorem*{definition*}{Definition}
\newtheorem*{theorem*}{Theorem}

\newtheorem{remark}[theorem]{Remark}

\newcommand{\mb}[1]{\mathbb{#1}}
\newcommand{\mc}[1]{\mathcal{#1}}
\newcommand{\mbf}[1]{\mathbf{#1}}

\newcommand{\on}{\operatorname}

\newcommand{\eps}{\varepsilon}

\let\R\Real

\let\F\Field
\let\FF\Field

\newcommand{\smcdef}{\sum_{j=0}^{\ell}c_j \alpha^j}
\newcommand{\smudef}{\sum_{j=0}^{\ell}u_j \alpha^j}

\let\originalleft\left
\let\originalright\right
\renewcommand{\left}{\mathopen{}\mathclose\bgroup\originalleft}
\renewcommand{\right}{\aftergroup\egroup\originalright}

\allowdisplaybreaks

\title{Locality of Curve-Decoding and Improved Proximity Gaps}
\begin{document}
\author{
Rohan Goyal\thanks{Massachusetts Institute of Technology. \texttt{rohan\_g@mit.edu}}\and 
Venkatesan Guruswami \thanks{University of California, Berkeley. \texttt{venkatg@berkeley.edu}}\and 
Yihang Sun\thanks{Stanford University.  \texttt{kimisun@stanford.edu}}\and  Mary Wootters\thanks{Stanford University. \texttt{marykw@stanford.edu}}}
\date{}
\maketitle

\begin{abstract}

Proximity gaps are a property of error correcting codes that arise in the study of Interactive Oracle Proofs (IOPs) and Succinct Non-interactive Arguments of Zero Knowledge (SNARKs).  Informally, we say that a code $C \subset  \Sigma^n$ exhibits a \emph{proximity gap} (with respect to degree-$\ell$ curves) if for any degree-$\ell$ curve $u(x) \in \Sigma^n$, either \emph{every} point on $u(x)$ is close to $C$, or else \emph{most} of them are far from $C$.

Recent work~\cite{gg25a} has established near-optimal proximity gaps for many families of codes, including subspace design codes, as well as random ensembles like random linear codes, Reed-Solomon codes with random evaluation points, and Gallager's ensemble of LDPC codes.  However, the parameters for these latter randomized ensembles are worse than the parameters for subspace design codes, and degrade as the degree $\ell$ increases.

In this work, we obtain improved proximity gaps for random ensembles of codes, including random linear codes, Reed-Solomon codes with random evaluation points, and Gallager's ensemble.  Quantitatively, our results for these random ensembles match the results that \cite{gg25a}  attained for subspace design codes.  In fact, our techniques are a black-box transference from subspace design codes: Any progress on subspace design codes will automatically lead to analogous progress for these random ensembles.

To obtain our results, we extend the Local Coordinate-wise Linear (LCL) property framework developed in \cite{lms25,bcdz25b} to a \emph{row-span constrained} version.  This allows us to cast \emph{curve-decodability}---a property that implies proximity gaps---directly as an (row-span constrained) LCL property, and make use of that machinery. In contrast, because curve-decodability is not obviously a (vanilla) LCL property, prior work had worked with a proxy property instead, leading to the aforementioned parameter losses. {In addition, we extend the framework to also show an equivalence theorem for Gallager's ensemble of random LDPC codes and random linear codes for our row-span constrained LCL properties.}
\end{abstract}

\newpage
\tableofcontents
\newpage
\section{Introduction}
\label{sec:intro}

Proximity gaps are a property of error correcting codes that arise in the study of Interactive Oracle Proofs (IOPs) and Succinct Non-interactive Arguments of Zero Knowledge (SNARKs), both of which are relevant for blockchains and related technologies. Let $\mathbb{F}$ be a finite field, and let $\Sigma$ be a vector space over $\F$.  An \emph{$\F$-additive code} $C$ (with alphabet $\Sigma$ and block length $n$) is an $\F$-linear subspace $C \subset  \Sigma^n$.  Informally, we say that $C$ exhibits a \emph{proximity gap} (for degree-$\ell$ curves, which is the focus of this paper) if the following holds.  For any degree-$\ell$ curve $u(x) = \sum_{i=0}^\ell u_i x^i$ in $\Sigma^n$ given by $u_0, \ldots, u_\ell \in \Sigma^n$ {and $x\in\mb{F}$}, either \emph{every} point $u(\alpha)$ on the curve is close to $C$, or else a $1 - \eps$ fraction of the points on the curve are far from $C$.  Formally, we have the following definition.  Below, $\Delta(x,y) := \frac{1}{n} \sum_{i=1}^n \mathbf{1}[x_i \neq y_i]$ is the (relative) Hamming distance between $x,y \in \Sigma^n$, {and $\Delta(x, C):=\min_{y\in C}\Delta(x, y)$ for code $C$.}

\begin{definition}[Proximity Gaps (PG)]\label{def: curve PG}
   An $\FF$-additive code $C\subset  \Sigma^n$ is said to have $(\ell, \delta, \eps, \gamma)$ proximity gap if for all $u_0, u_1, \cdots, u_\ell \in \Sigma^n$, and for all $\delta'\le \delta$, \[\mb{P}_{\alpha\in \FF}\left(\Delta\left(\sum_{j=0}^\ell u_j \alpha^j, C\right)\le \delta'\right)>\eps \implies \forall \alpha \in \FF, \ \Delta\left(\sum_{j=0}^\ell u_j \alpha^j, C\right)\le \frac{\delta'}{1-\gamma}\ .   \]
   When $\gamma = 0$, we say that $C$ has an $(\ell, \delta, \eps)$ proximity gap.  
\end{definition}

Proximity gaps, especially for \emph{Reed-Solomon (RS) Codes},\footnote{Reed-Solomon codes are a classical family of codes based on low-degree polynomials.  Given fixed distinct evaluation points $\alpha_1, \ldots, \alpha_n \in \mathbb{F}_q =: \Sigma$, the corresponding Reed-Solomon code $C \subset  \Sigma^n$  of length $n$ and dimension $k \leq n$  is given by $C = \{ (f(\alpha_1), \ldots, f(\alpha_n)) : f \in \mathbb{F}_q[x], \mathrm{deg}(f) < k\}$. } 
have been a hot topic lately, with the Ethereum Foundation offering \$1M in prizes for resolving (or disproving) conjectures related to proximity gaps for RS codes~\cite{proximityprize}.  See \cite{ABF26} for a recent survey on proximity gaps and problems relevant to the proximity prize.

Until recently, the best proximity gap guarantees required $\delta < 1 - \sqrt{R}$, where $R = \frac{\log_{|\Sigma|}|C|}{n}$ is the \emph{rate} of the code $C \subset  \Sigma^n$.  In recent work \cite{gg25a}, Guruswami and Goyal established near-optimal proximity gaps---with $\delta$ approaching $1 - R$---for several families of codes.  For \emph{Folded Reed-Solomon Codes, Multiplicity Codes},\footnote{Informally, a \emph{folded Reed-Solomon code} is obtained by ``folding'' (aka, bundling symbols together) an RS code $C \in \F_q^n$ to obtain a code $C' \in (\F_q^m)^{n/m}$.  A (univariate) \emph{multiplicity code} is similar to an RS code, except each symbol contains not just $f(\alpha_i)$, but also $f^{(j)}(\alpha_i)$ for $j = 1, 2, \ldots, m-1$, where $f^{(j)}$ denotes the $j$'th (Hasse) derivative.  In either case, the alphabet size increases from $q$ to $q^m$.
By a result of \cite{GK16}, both are special cases of Subspace Design Codes, defined in \cref{def: Subspace-designs}.} and more generally any \emph{Subspace Design Code} (\cref{def: Subspace-designs}), they established $(\ell, \delta, \eps)$ proximity gaps with $\delta = 1 - R - 2\eta$ for any $\eta > 0$, provided that
\begin{equation}\label{eq:target}
\eps q \geq \frac{n \ell}{\eta} + O\left( \frac{\ell^2}{\eta^3} \right),
\end{equation}
where $q$ is the size of the base field.  Thus, the left-hand side $\eps q$ is a bound on the number of points on the curve $u(x)$ that can still be close to $C$, in the case that not all $q$ points are.  We'd like for this bound to be as small as possible.

Via a framework from \cite{bcdz25b,lms25} that establishes a connection between random linear codes and subspace design codes, the work \cite{gg25a} extended their results for subspace design codes to several random ensembles of codes, including random linear codes, Reed-Solomon codes with random evaluation points, and Gallager's ensemble of random LDPC codes.  However, some parameters degraded in this extension: The final result that \cite{gg25a} established for these random ensembles was $(\ell, \delta, \eps)$ proximity gaps with $\delta = 1 - R - 2\eta$, provided that
\begin{equation}\label{eq:previous}
\eps q \geq \frac{n \ell(1-R)}{\eta} + \left( \frac{\ell}{\eta^2}\right)^{O(\ell)}.
\end{equation}
When $\ell$ is large, on the order of $\Omega(\log n)$, the error term $(\ell/\eta^2)^{O(\ell)}$ in \cref{eq:previous} swamps the main term, and the requirement on $\eps q$ quickly becomes large. 
This leaves us with the question of whether a bound like \cref{eq:target} is achievable for these random ensembles of codes.  This question has practical as well as theoretical motivations: Larger values of $\ell$ can occur in practice, for example in WHIR~\cite{whir}; random ensembles of codes are also practically interesting, for example in Blaze~\cite{bcfrrz25} or Bolt~\cite{gnr26};\footnote{In more detail, Blaze considers a random ensemble of \emph{RAA} codes, which use random permutations to decide the parity checks.  Bolt uses \emph{Gallager's ensemble} of LDPC codes, which we also study in this paper.} and understanding RS codes with random evaluation points may deepen our understanding of explicit RS codes, relevant to the proximity prize~\cite{proximityprize}.

\subsection{Our Contributions}
\label{sec:main}
Our main contribution is to improve the bound \cref{eq:previous} for random ensembles of codes to match the bound \cref{eq:target} that was already established for subspace design codes.  We establish these improved bounds for random linear codes, RS codes with random evaluation points, and  {Gallager's ensemble} of LDPC codes.  See \cref{def: RLC,def: RRS,def: RLDPC} for formal definitions of these code families.

\begin{theorem}
[Informal; see \cref{thm:rlc-main,thm:rrs-main,thm:rldpc-main}]\label{thm:main}
Fix any $\ell\in\mb{N}$ and $\eta>0$. Let $C\in \mb{F}_q^n$ be a random linear code, random Reed-Solomon code, or random LDPC code\footnote{Technically, the result for random LDPC codes requires the leading term $n\ell (1-R)/\eta$ to be slightly larger; it is stated in \cref{thm:rldpc-main} as $2n \ell (1-R)/\eta$; the factor of two can be improved to any constant larger than one (see \cref{rem:factor of two}).} of rate $R\in (0, 1)$.
Then, with probability at least $2/3$ over the choice of $C$, $C$ has $(\ell, 1-R-2\eta, \eps)$ proximity gaps (with respect to degree-$\ell$ curves), provided that
\begin{equation}\label{eq:good-error-2}
\eps q \ge \frac{n\ell(1-R)}{\eta}+O\left(\frac{\ell^2}{\eta^3}\right).
\end{equation}   
\end{theorem}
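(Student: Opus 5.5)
We will obtain \cref{thm:main} by transferring the known result for subspace design codes to the random ensembles, rather than analyzing each ensemble from scratch. The route goes through \emph{curve-decodability}, which implies proximity gaps. Informally, $C$ is $(\ell,\delta,L)$ curve-decodable if for every degree-$\ell$ curve $u(x)$ the following holds: whenever more than $L$ points $u(\alpha)$ are $\delta$-close to $C$, there is a degree-$\ell$ curve of codewords $c(x)=\sum_j c_j x^j$ agreeing with $u(\alpha)$ on many coordinates for those $\alpha$. Curve-decodability with $L\approx \eps q$ yields the $(\ell,\delta,\eps)$ proximity gap by the standard argument of \cite{gg25a}. That argument interpolates the $c_j$ and uses that a coordinate where the curves disagree can agree for at most $\ell$ values of $\alpha$. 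From \cite{gg25a} we take as given that subspace design codes are curve-decodable with $\delta=1-R-2\eta$ and $L$ matching \cref{eq:target}.

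The first key step is to express the \emph{failure} of curve-decodability as a local property. We want it to be the containment of a bad configuration, namely a matrix of codewords satisfying coordinate-wise linear constraints. Concretely, a witness of failure is a set of $L+1$ values $\alpha_1,\dots,\alpha_{L+1}$, codewords $v_1,\dots,v_{L+1}\in C$, and agreement sets $S_i$ of density at least $1-\delta$ with $v_i|_{S_i}=u(\alpha_i)|_{S_i}$, such that no curve of codewords explains them. Eliminating $u$ expresses the agreement condition coordinate-wise. Each coordinate $t$ constrains the tuple $(v_i(t))_i$ to satisfy the Vandermonde-type relations forced by $\{i: t\in S_i\}$. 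The condition ``no codeword curve exists'' becomes a constraint on the row span of the matrix $(v_1,\dots,v_{L+1})$: the row span must not be contained in the image of the degree-$\ell$ Vandermonde map. This is the row-span constrained LCL property. We then use \cite{lms25,bcdz25b}, extended to row-span constraints, to show that such a property is as likely for a random linear code (respectively RS code with random evaluation points) as for a random code of the same rate. The same framework shows that local properties satisfied by subspace design codes of the same rate over a sufficiently large field are inherited, with high probability, by the random ensembles. Applying this to the \cite{gg25a} subspace-design theorem gives \cref{eq:good-error-2} directly, without the $(\ell/\eta^2)^{O(\ell)}$ loss that came from the proxy property.

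For Gallager's LDPC ensemble we would prove an equivalence theorem: the probability of containing a row-span constrained LCL configuration is comparable for Gallager codes and for random linear codes. The plan is to bound the probability that a fixed configuration matrix lies in the LDPC code. This uses the independence of the random parity checks together with a second-moment or compression argument over the configuration's coordinate types. This bound costs a constant factor in the leading term, which accounts for the $2n\ell(1-R)/\eta$ in the footnote. The parameters also need care: the locality (number of codewords in a witness, about $L+1$) enters the union bound, and we need it to contribute only $O(\ell^2/\eta^3)$. We would handle this by first pruning a witness to a minimal sub-configuration of size $O(\ell/\eta)$.

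The main obstacle is the row-span constrained extension of the LCL threshold machinery. The original proofs in \cite{lms25,bcdz25b} rely on the bad set being closed under the relevant symmetries and on a clean union bound over local profiles. A row-span constraint is not coordinate-wise, so we must check that the counting of ``types'' still goes through with the constraint kept fixed. First we would try to observe that the constraint is invariant under column operations preserving the Vandermonde image. We would then condition on the row span and count the remaining coordinate-wise patterns exactly as in the vanilla setting.
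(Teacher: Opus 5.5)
Your architecture matches the paper: cast the failure of curve-decoding as a row-span constrained LCL property, then transfer from subspace design codes through a threshold argument. But the step that actually produces \eqref{eq:good-error-2} is missing.

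\textbf{Transferring at the final parameters fails.} You transfer curve-decodability directly, with witnesses of size $L+1\approx \eps q\approx n\ell(1-R)/\eta$. The low-rate threshold argument ends in a union bound $|\mathcal{F}|\,q^{-\eta n}$, where $|\mathcal{F}|$ can be as large as $q^{a+a^2}2^{an}$ (agreement sets and row spans). This closes only when $a^2\ll \eta n$ and $2^a\ll q^{\eta}$, so the locality $a$ must be independent of $n$; it fails outright for $a=\Theta(n)$.

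Your fix, pruning to a minimal sub-configuration, does not work. A failure witness for $(\ell,\delta,a,T)$ with large $T$ (e.g.\ $T=\ell n+1$ for MCA) only says that no codeword curve passes through $T$ of the points. A small sub-configuration may well contain $b$ points on a common codeword curve, for instance when the witness is a union of groups of $T-1$ points, each group on its own codeword curve. Controlling the number of such curves is exactly what list-decodability does, and that is the idea you are missing.

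The paper argues in two stages:
\begin{enumerate}
\item The LCL transfer is applied only to $(\ell,\delta,a,b)$ curve-decodability with constant $a=\lceil 100\ell^2/\eta^3\rceil$ and $b=\lceil 4\ell/\eta+\ell\rceil$ (\cref{cor:common-cd-params}, fed by \cref{thm:sdc-good-cd}).
\item List-decodability of the random ensemble is then used as an independent input (\cref{thm:rlc-ld}, resp.\ \cref{thm:rldpc-ld}). It is combined with the boosting theorem \cref{thm:cd+ld} to reach $(\ell,\delta,(T-1)L+a,T)$. Taking $T=\ell n+1$ then gives MCA and hence PG.
\end{enumerate}
This is where the leading term $\ell n L=n\ell\lceil (1-R)/\eta\rceil$ comes from. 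Transferring the subspace-design bound \eqref{eq:target} would give $n\ell/\eta$ and could never produce the factor $(1-R)$. Likewise, the factor of two for LDPC codes comes from the list size in \cref{thm:rldpc-ld} (the rate slack in \cite{mrrsw20}). It does not come from the containment estimate, which only costs an arbitrarily small slack $q^{\eta n/2}$ in the exponent.

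\textbf{The row-span constraint is also wrong.} Requiring the row span not to lie in the image of the degree-$\ell$ Vandermonde map on the whole index set only says that no single codeword curve passes through \emph{all} the $v_i$. Avoiding such configurations is false for every code once $\delta$ exceeds half its minimum distance. Take $u$ constant with two codewords $c\ne c'$ within $\delta$ of it, and set $v_1=\dots=v_{\ell+1}=c$, $v_{\ell+2}=c'$. The only degree-$\ell$ curve through the first $\ell+1$ points is the constant $c$, so no curve passes through all of them, and there is nothing to transfer. What subspace design codes satisfy, and what the paper encodes, is the $b$-subset version. The row span $U$ must be $(\ell,A,b)$ curve-free, i.e.\ $\pi_B(U)\not\subset\mathrm{RS}_\ell(B)$ for every $B\in\binom{A}{b}$ (\cref{def:curvefree}). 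This is paired with the profiles $V_i=\{y: y|_{A_i}\in\mathrm{RS}_\ell(A_i)\}$ (\cref{thm:cd-casting}), with $b$ a small fraction of $a$.
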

In fact, our results establish not just proximity gaps, but also stronger notions known as \emph{Correlated Agreement} (CA, \cref{def: curve CA}) and \emph{Mutual Correlated Agreement} (MCA, \cref{def: curve MCA}); see \cref{thm:rlc-main,thm:rrs-main,thm:rldpc-main} for the full statements.  

Our framework, discussed more below in \cref{sec:overview}, is a black-box transference method from subspace design codes.  Thus, any future improvements for subspace design codes will immediately lead to an analogous improvement in \cref{thm:main}.

In order to establish our results, we generalize the \emph{Local Coordinate-wise Linear} (LCL) framework of \cite{lms25} to a new ``row-constrained'' version of the framework, described below.  We hope that this generalization may find other applications going forward.

\begin{remark}
    {\cite{lms25} showed an equivalence for LCL properties between random linear and random Reed-Solomon codes, which we generalize to the row-constrained setting. To extend our results to random LDPC codes, we prove an analogous equivalence theorem of our generalized LCL properties for Gallager's ensemble of random LDPC codes and random linear codes based on \cite{mrrsw20}. See \cref{subsec: random LDPC} and \cref{app: RLDPC} for more details.}
\end{remark}

\subsection{Techniques: Curve-Decoding as a Row-Span Constrained LCL Property}
\label{sec:overview}

Following \cite{gg25a}, we study \emph{curve-decoding}.  For the rest of this discussion, assume for simplicity that the alphabet $\Sigma$ is equal to $\mathbb{F}_q$, so $C$ is a linear subspace of $\mathbb{F}_q^n$.  

Informally, such a code $C$ is $(\ell, \delta, a, b)$ \emph{curve-decodable} if, for any degree-$\ell$ curve $u(x) = \sum_{i=0}^\ell u_i x^i$ with $u_i \in \Sigma^n$, the following holds.  Suppose that there is a set $A \subset  \mathbb{F}_q$ of size $a$ so that $u(\alpha)$ is $\delta$-close to some codeword $f(\alpha) \in C$ for all $\alpha \in A$.  Then, there should be a ``codeword curve'' $c(x) = \sum_{i=0}^\ell c_i x^i$, where $c_i \in C \subset  \Sigma^n$, so that $f(\alpha) = c(\alpha)$ for at least $b$ values of $\alpha \in A$.  See \cref{def: curve-decodability} for a formal definition; the property is illustrated in \cref{fig:curve-decoding}. 

The work \cite{gg25a} showed that curve-decodability is a sufficient condition to establish proximity gaps (as well as stronger notions like correlated agreement and mutual correlated agreement); see \cref{thm: CA from only curves} and \cref{thm: appendix MCA from curves}.  They then established that subspace design codes have curve-decodability (\cref{thm:sdc-good-cd}), which led to their results about subspace design codes.

Our work comes in the next step.  To obtain their results for random ensembles of codes, \cite{gg25a} applied a framework of \cite{bcdz25b,lms25}.  This framework roughly shows that any \emph{Local Coordinate-wise Linear (LCL)} property that is satisfied by subspace design codes is also satisfied by these random ensembles.  We will explain what LCL properties are in a moment, but the main challenge with this approach is that ``being curve-decodable'' is in fact \emph{not} (or, not obviously) an LCL property.  To get around this, \cite{gg25a} defined another property, which they called \emph{$\bigvee$-decoding}, which \emph{is} captured by the LCL framework. Then existing results imply that the random ensembles of codes are just as $\bigvee$-decodable as subspace design codes, and translating back to curve-decoding gives the final result. Unfortunately, going through $\bigvee$-decoding results in a loss in parameters, degrading \eqref{eq:target} to \eqref{eq:previous}.  We avoid this loss by \emph{extending} the LCL framework to apply directly to curve-decoding.

\begin{figure}
    \centering
    \begin{tikzpicture}[scale=.8]
        \begin{scope}

  \draw[blue, thick, smooth, variable=\t, domain=-1.5:1.5, samples=200]
    plot ({2.2*\t}, {\t*\t*\t - 1.2*\t});
    
     \foreach \t in {-1.3,-1.0,0,0.5} {
    \node[orange, star, star points=5, fill=orange, draw=orange, inner sep=1.6pt] at ({1.7*\t}, {1.3*\t*\t*\t - 1.6*\t + .4}) {};
  }

  \node[star, star points=5, fill=orange, draw=orange, inner sep=1.6pt] at (2.7,0) {};

  \foreach \t in {-1.4,-1.2,-0.7,-.2,0.3,.7,1.1} {
    \fill[blue] ({2.2*\t}, {\t*\t*\t - 1.2*\t}) circle (2.8pt);
  }
  \node[blue] at (-4,-1.5) {$u(x)$};
  \node[blue] at (-3.3, -.2) {\footnotesize $u(\alpha)$};
  \node[orange, anchor=west] at (-2.3,-.6) {\footnotesize $f(\alpha) \in C$};

        \end{scope}
\node at (4,0) {\huge $\Rightarrow$};
    \begin{scope}[xshift=9cm]
  \node[star, star points=5, fill=orange, draw=orange, inner sep=1.6pt] at (2.7,0) {};
  \draw[blue, thick, smooth, variable=\t, domain=-1.5:1.5, samples=200]
    plot ({2.2*\t}, {\t*\t*\t - 1.2*\t});

\node[blue] at (-4,-1.5) {$u(x)$};
\node[orange] at (3.1,2.3) {$c(x)$};
  \node[blue] at (-3.3, -.2) {\footnotesize $u(\alpha)$};
  \node[orange, anchor=west] at (-2.3,-.6) {\footnotesize $f(\alpha) =c(\alpha)$};

  \draw[orange, thick, dashed, smooth, variable=\t, domain=-1.5:1.5, samples=200]
    plot ({1.7*\t}, {1.3*\t*\t*\t - 1.6*\t + .4});
    
     \foreach \t in {-1.3,-1.0,0,0.5} {
    \node[orange, star, star points=5, fill=orange, draw=orange, inner sep=1.6pt] at ({1.7*\t}, {1.3*\t*\t*\t - 1.6*\t + .4}) {};
  }

       \foreach \t in {-1.45,-0.5,1.0,1.3} {
    \node[orange, star, star points=5, fill=white, draw=orange, inner sep=1.6pt] at ({1.7*\t}, {1.3*\t*\t*\t - 1.6*\t + .4}) {};
  }

  \foreach \t in {-1.4,-1.2,-0.7,-.2,0.3,.7,1.1} {
    \fill[blue] ({2.2*\t}, {\t*\t*\t - 1.2*\t}) circle (2.8pt);
  }
  \node[blue] at (-4,-1.5) {$u(x)$};

        \end{scope}
    \end{tikzpicture}

    \caption{Depiction of curve decodability (\cref{def: curve-decodability}).  The orange stars represent $f(\alpha) \in C$ for $\alpha \in A$, and the blue dots represent $u(\alpha) \in \Sigma^n$ for $\alpha \in \mathbb{F}_q$, where $u(x)$ is a degree-$\ell$ curve.  Informally, a code $C \subseteq \Sigma^n$ is $(\ell, \delta, a,b)$-curve-decodable if the following holds: If there are at least $a$ orange stars $\delta$-close to distinct blue dots, then at least $b$ of those orange stars lie on a degree-$\ell$ curve $c(x)$.  In the picture, $a = 5$, and $b = 4$.}\label{fig:curve-decoding}
\end{figure}

\medskip\noindent\textbf{The LCL Framework.}
Before explaining our extension, we first explain the LCL framework of \cite{lms25}.  Intuitively, an $r$-local LCL property is one defined by the exclusion of $r$-sized bad sets of codewords.  For example, the property of ``having good distance''  might be defined by the exclusion of all pairs ($r=2$) of codewords that agree in too many places.  The LCL framework allows for any definition of ``bad'' that can be captured by a family $\mathcal{F}$ of coordinate-wise linear relationships.  

In more detail, let $C \subset  \mathbb{F}_q^n$ be a linear code, and consider a set of $r$ distinct ``bad'' codewords, stacked together to form the columns of a matrix $M \in \F_q^{n \times r}$.
Let $\mathcal{F}$ be a family of \emph{profiles} $\mathbf{V} = (V_1, \ldots, V_n)$, where each $V_i \in \mathcal{L}(\mathbb{F}_q^r)$.  (Here, $\mathcal{L}(W)$ denotes the collection of subspaces of $W$.)
In the LCL framework, the ``badness'' of a matrix $M$ is captured by the following criterion: There is some $\mathbf{V} = (V_1, \ldots, V_n) \in \mathcal{F}$ so that for all $i \in [n]$, the $i$'th row of $M$ is contained in $V_i$. 

Thus, an LCL property defined by $\mathcal{F}$ is the property of \emph{avoiding} all ``bad'' matrices $M$ that satisfy the constraints given by some $\mathbf{V} \in \mathcal{F}$ (we say that $C$ \emph{avoids} $M$ if the column span of $M$ is not contained in $C$).
See \cref{def: containing a local profile} for a formal definition.

\medskip\noindent\textbf{Capturing Curve-Decoding and the Approach of \cite{gg25a}.}
The LCL framework captures many familiar properties, for example distance, list-decoding, and list-recovery.  Unfortunately, it does not seem to capture curve-decoding.  To see why, let us consider two approaches.

For the first---and probably most natural---approach, suppose that we have a bad example for $(\ell, \delta, a, b)$ curve-decoding.  That is, suppose we have a set $A \subset  \mathbb{F}_q$ of size $a$, a degree-$\ell$ curve $u(x) \in \mathbb{F}_q^n$, and a function $f: \mathbb{F}_q \to C$ so that $\Delta(u(\alpha), f(\alpha)) \leq \delta$ for all $\alpha \in A$; but suppose that there is \emph{no} degree-$\ell$ ``codeword curve'' $c(x) = \sum_{i=0}^\ell c_i x^i$ for $c_i \in C$ that passes through at least $b$ of the $f(\alpha)$.

We will attempt to construct a ``witness'' matrix $M \in \mathbb{F}_q^{n \times A}$ whose columns are $f(\alpha)$ for all $\alpha \in A$.  In order
to certify this as a bad example, we should be able to come up with some LCL profile $\mathbf{V} = (V_1, \ldots, V_n)$ that ``explains'' the badness of the example.  There are two properties that make a ``bad example'' bad:
\begin{itemize}
    \item[(A)] Each of the $f(\alpha)$ (the columns of $M$) agree in many coordinates with some ambient curve $u(x)$ of degree $\ell$.
    \item[(B)] No $b$ columns of $M$ themselves lie on the same degree-$\ell$ curve.
\end{itemize}
Item (A) can be captured with an $a$-local profile $\mathbf{V} = (V_1, \ldots, V_n)$.  Indeed, let $A_i = \{ \alpha \in A \,:\, u(\alpha)_i = f(\alpha)_i \}$, and then define $V_i$ to be the subspace of all $y \in \mathbb{F}_q^A$ so that $y$ (when viewed as a function $y:A \to \mathbb{F}_q$) agrees with some degree-$\ell$ polynomial on $A_i$.  Then consider the class $\mathcal{F}$ of all profiles $\mathbf{V}$ generated this way (over all appropriate choices of $A_i$).  Item (A) holds if and only if $C$ contains some $\mathbf{V} \in \mathcal{F}$.

However, it is not clear how to capture item (B) with the LCL framework, as (B) is a more global property of the matrix, not easily captured with coordinate-wise constraints.  Thus, another approach seems needed.

As a second possible approach, we describe what \cite{gg25a} did, and briefly explain why it incurs an exponential loss in the parameter $\ell$.
The idea is to consider first the special case that $b = \ell + 2$.   Given a bad example for curve-decoding, consider a much larger ``witness'' matrix $M \in \mathbb{F}_q^{n \times r}$ for $r = {a \choose \ell +1}$, as follows.  The columns of $M$ now correspond to subsets $S \in {A \choose \ell+1}$, and the columns correspond to polynomials $g_S(x)$ defined by interpolating through the points $(\alpha, f(\alpha))$ for all $\alpha \in S$.  It turns out that if $b = \ell+2$, then all of these polynomials must be distinct: Otherwise, if $g_S = g_{S'}$ for some $S \neq S'$, then this gives a degree-$\ell$ curve passing through at least $|S \cup S'| \geq \ell+2 = b$ of the codewords $f(\alpha)$, contradicting the badness of this example.  Then, if $\mathbb{F}_q$ is large enough, there is some $\beta$ so that $\left\{g_S(\beta) \in C\,:\, S \in {A \choose \ell+1}\right\}$ are also all distinct, and these can be the columns of $M$.  However, this requires us to take $\ell = b+2$; otherwise, we can't be guaranteed that the columns are distinct, which is required by the LCL framework.  The insight of \cite{gg25a} was to essentially transform the general-$\ell$ case to the $\ell = b+2$ case via a new property called $\bigvee$-decoding; but this came at a cost of an exponential dependence on $\ell$.  (We emphasize that the exponential loss comes from the transformation to the $b = \ell+2$ case, not the larger size of the witness matrix, although we will be able to overcome both issues in this work.)

\medskip\noindent\textbf{Our Approach: Row-span Constrained LCL Properties.}  Instead of trying to manipulate curve-decodability to fit into an LCL property, we extend the definition of an LCL property.  That is, we return to the first, more straightforward, attempt above.  Recall that the first property (A)---that each column of $M$ is close to some degree-$\ell$ curve $u(x)$ --is easily captured by an $a$-local LCL profile.  The harder part is dealing with item (B), that no $b$ columns of $M$ should themselves lie on a degree-$\ell$ curve.  This is a global constraint on the matrix $M$, and thus is not immediately amenable to the coordinate-wise framework of \cite{lms25}.

In order to capture item (B), we introduce a \emph{row-span} constraint.  One example of such a constraint was already present in prior works~\cite{bcdz25b, lms25}.  Indeed, that framework requires that the columns of the witness matrix $M$ be distinct, which turns out to be equivalent to the property that the row-span $U$ of $M$ does not have any identically equal coordinates (that is, there should not be any $i \neq j$ so that $u_j = u_i$ for all $u \in U$).  Thus, \cite{bcdz25b,lms25} restricted the row-span of their witness matrices $M$ to the set $\mathcal{L}_\mathrm{dist}(\mathbb{F}_q^r)$ of \emph{distinct} subspaces $U$, those without any identically equal coordinates.

Our observation is that item (B) can also be captured in this way.  In \cref{lem:curve-rowspan,cor:curve-free-rowspan}, we show that item (B) is equivalent to the following: For any $B \in {A \choose b}$, there is some $u$ in the row-span $U$ of $M$ so that $u|_B$ does \emph{not} agree with any degree-$\ell$ curve $p:B \to \mathbb{F}_q$.  We define a subspace $U$ with this property to be \emph{$(\ell, A,b)$ curve-free} (\cref{def:curvefree}).  Thus, in addition to an $a$-local LCL profile $\mathbf{V}$ to enforce the condition (A), we also have a condition on the row-span---namely that the row-span $U$ of $M$ should be $(\ell, A, b)$ curve free---to enforce condition (B).

Formally, we implement this by defining the set $\mathcal{F}$ to be a set of \emph{pairs} $(\mathbf{V}, U)$, where $\mathbf{V}$ is an $r$-local profile, and $U \subset  \mathbb{F}_q^r$ is a subspace.  Then a witness matrix $M$ with columns in $C$ is ``bad'' if there is some $(\mathbf{V}, U) \in \mathcal{F}$ so that the $i$-th row of $M$ is contained in $V_i$ for all $i$; and additionally the row-span of $M$ is contained in $U$.  We say that $C$ \emph{avoids} $\mathcal{F}$ if there is no such witness matrix.

\medskip\noindent\textbf{Putting it Together.}
In order to implement the ideas above, we first re-derive the LCL framework and the connections in \cite{bcdz25b,lms25} in the row-span constrained setting.  We do this in \cref{sec:gen-lcl}.  In \cref{thm:rlc-threshold-pair,cor:rlc-threshold-family}, we extend the threshold theorem of \cite{lms25} to the row-span constrained setting, defining a ``threshold rate'' $R_\mathcal{F}$ below which random linear codes are likely to satisfy properties defined by $\mathcal{F} \subset  \mathcal{L}(\mathbb{F}_q^r)^n \times \mathcal{L}(\mathbb{F}_q^r)$ (with the additional row-span constraint) and above which they are not. 
In \cref{thm:sdc-threshold-pair,cor:sdc-threshold-family}, we extend the transfer theorem from \cite{bcdz25b} to the row-span constrained setting, showing that good subspace design codes have essentially the same ``threshold rate'' as random linear codes.
 
Next, in \cref{sec:cd-local}, we show that \emph{not} being $(\ell,\delta,a,b)$ curve-decoding is a row-span constrained $a$-local LCL property, as described above. The local profile records the agreement of the nearby codewords with an ambient degree-$\ell$ curve (property (A)), while the row-span constraint records the absence of a degree-$\ell$ codeword curve through any $b$ nearby codewords (property (B)).  This is stated formally in \cref{thm:cd-casting,cor:cd-casting}.

Finally, we apply our row-span constrained LCL theory to curve-decoding and to three random ensembles of codes---random linear codes, RS codes with random evaluation points, and Gallager's ensemble of LDPC codes---to prove \cref{thm:main}. To recap,  we know from \cite{gg25a} that subspace design codes satisfy curve-decodability; we have shown in \cref{sec:gen-lcl} that random ensembles of codes behave similarly to subspace design codes for any \emph{row-span constrained} LCL property; and we have shown in \cref{sec:cd-local} that curve-decodability is such a property.  Therefore random ensembles of codes exhibit good curve-decodability as well.  We put everything together and work out the parameters in \cref{sec:pg}; our main theorem statements are given as \cref{thm:rlc-main}, \cref{thm:rrs-main}, and \cref{thm:rldpc-main}.

\subsection{Related Works}
\label{sec:related}

\paragraph{Proximity Gaps.}
Proximity gaps (and related notions, including correlated agreement and mutual correlated agreement) arise naturally in the study of interactive protocols.  The goal in these settings is to test whether or not \emph{every} vector in a set---say, a line or a curve---is close to a code $C$.  If that code exhibits a proximity gap, then it suffices to choose a random vector in that set and use its distance from the code as a proxy for the minimum distance to $C$ over the whole set.  This connection was observed in \cite{RVW13}, and since then a long line of work has been devoted to improving the parameters~\cite{AHIV17,BKS18,BGKS20,bciks23,whir,gg25a} and establishing limitations~\cite{dg25,cs25,bchks25,hkk2026}.  See the recent survey~\cite{ABF26} for an overview of the work on proximity gaps and related notions. Prior to the work of \cite{gg25a}, the best known proximity gap results required $\delta < 1 - \sqrt{R}$, corresponding to the \emph{Johnson radius} in list-decoding.  As discussed above, the work \cite{gg25a} improved this to allow $\delta$ to approach $1-R$, the best trade-off possible.  However, as noted earlier, the results can still be improved, notably in how small the parameter $\eps$ can be.  The current best results for explicit codes---including folded RS codes, multiplicity codes, and all subspace design codes---follow from \cite{gg25a} and work under the condition given by \cref{eq:target}.  In this work we improve their results for random ensembles of codes from \cref{eq:previous} to \cref{eq:target}. 

\medskip\noindent\textbf{Local properties and subspace design codes.}
A recent line of work~\cite{mrrsw20,GLMRS21,GMRSW21,lms25} has shown that a wide variety of random ensembles of codes---including random linear codes, random LDPC codes, and Reed-Solomon codes with random evaluation points---all behave similarly with respect to ``local'' properties: properties that are characterized by the exclusion of small bad sets.  The paper \cite{lms25} introduced the LCL framework discussed above, and \cite{bcdz25b} connected this to another line of work on \emph{subspace design codes}.  The study of subspace design codes implicitly began with the introduction of Folded Reed-Solomon codes~\cite{GR08}. Over nearly two decades, it was observed that a key property of these codes, as well as of univariate multiplicity codes, is that they satisfy a \emph{subspace design property}~\cite{GK16}, which has been leveraged and generalized in subsequent works~\cite{KMRS18,Tam24,Sri25,AHS26,CZ25,bcdz25b,ggh26}.
Notably, \cite{CZ25} showed that \emph{any} such subspace design codes enjoy optimal list-decodability.  The work \cite{bcdz25b} then connected this line of work to the LCL framework discussed above, showing that in fact the same thresholds are achieved by subspace design codes.  The work \cite{gg25a} leveraged this connection to study curve-decoding, and we do the same in this work.

\subsection{Discussion and Open Problems}
\label{sec:discussion}
We conclude this section with a few open questions.
\begin{itemize}
    \item Our approach can automatically turn any improvements for subspace design codes into analogous improvements for random ensembles of codes.  Thus, the natural open question is to further improve the trade-offs between $\eps, q, \ell$ and $\eta$ for subspace design codes.
    \item Our results imply that Reed-Solomon codes with random evaluation points attain near-optimal proximity gaps, even for large $\ell$.  While it is known that there exist Reed-Solomon codes that do \emph{not} achieve optimal proximity gaps~\cite{bchks25,dg25,cs25,hkk2026}, it is not known which choices of evaluation points are ``good.''  
    Can we identify explicit Reed-Solomon codes with near-optimal proximity gaps?  This question is related to the open problem of identifying explicit Reed-Solomon codes that achieve near-optimal list-decoding.  
   \item We hope that our techniques will be useful beyond the present application.  Are there other natural properties that can be captured with the row-span constrained LCL framework, but not with the original LCL framework?
\end{itemize}
\subsection*{Acknowledgements} 
R.G. is supported by (Yael Tauman Kalai’s) grant from  Defense Advanced Research Projects Agency
(DARPA) under Contract No. HR0011-25-C-0300. Any opinions, findings and conclusions or recommendations expressed in this material are those of the author(s) and do not necessarily reflect the views of the Defense Advanced Research Projects Agency (DARPA).
V.G. is supported by a Simons Investigator award, NSF grant CCF-2211972, and ONR grant N00014-24-1-2491.
Y.S. is funded by the NSF Graduate Research Fellowship and the Stanford Graduate Fellowship. M.W. is partially funded by NSF grants CCF-2231157 and CNS-2321489.

\section{Preliminaries}
\label{sec:prelim}
We begin by introducing the basic coding-theoretic definitions we will be using.
For any two vectors $x, y$ in $\Sigma^n$ where $\Sigma$ is some alphabet, we define the fractional Hamming distance $\Delta(x, y) = \frac{1}{n} |\{i\in [n]: x_i \ne y_i\}|$ to be the fraction of coordinates where they differ.
For a set $S\subset  \Sigma^n$, we define $\Delta(x, S) = \min_{y\in S} \Delta(x,y)$
to be the fractional distance of $x$ to its closest vector in $S$.
Throughout this paper, all distances are taken to be fractional unless stated otherwise.

The two fundamental quantities associated with a code are its rate and distance.
For a code $C \subset  \Sigma^n$, we define its (relative) distance as $\delta(C) = \min_{x, y \in C, x\ne y} \Delta(x,y)$.
Moreover, its rate $R(C)$ is defined as $R(C)= \frac{1}{n} \log_{|\Sigma|}|C|$.

In this paper, we will focus on additive codes over a finite field, defined as follows:

\begin{definition}[Additive codes]\label{def: additive-code}
    Let $\FF$ be a finite field and let $\Sigma = \FF^s$ for some positive integer $s$. 
    A code $C \subset  \Sigma^n$ is said to be $\FF$-additive (or just additive when the field $\FF$ is clear from context) if $C$ is an $\FF$-linear subspace of $\Sigma^n$. 
\end{definition}

We focus on three families of random codes, defined below.  The first is a random linear code.
\begin{definition}[Random Linear Code (RLC) and Random $\mathbb{F}_q$-Additive Code]\label{def: RLC}
    A random linear code $C\subset  \FF_q^n$ of rate $R$ is given by taking $C = \{(\mathsf{Enc}_1(x),\cdots, \mathsf{Enc}_n(x))\mid x\in \FF_q^{Rn}\}$ where $\mathsf{Enc}_i : \FF_q^{Rn}\to \FF_q$ are independent uniformly random linear maps.
    \footnote{Note that, defined this way, there is some small chance that the rate $\log_q|C|/n$ of a random linear code may not actually be $R$, if there are too many linear dependencies between the maps $\mathsf{Enc}_i$.  However, it is more convenient to work with i.i.d. encoding maps, and the probability of the rate not being $R$ is negligible, so anything that holds with high probability for this model also holds with high probability for other models of random linear codes, for example the model where one chooses $C$ to be a uniformly random subspace of dimension $Rn$, or the model where $C$ is the kernel of a random matrix in $\mathbb{F}_q^{(1-R)n \times n}$.  See, e.g., \cite[Appendix A]{lms25}. }
    More generally, a random $\mathbb{F}_q$-additive code $C \subset  (\mathbb{F}_q^s)^n$ of rate $R$ is given by taking $C = \{ (\mathsf{Enc}_1(x), \ldots, \mathsf{Enc}_n(x)) \ |\  x \in \mathbb{F}_q^{Rsn} \}$ where $\mathsf{Enc}_i: \mathbb{F}_q^{Rsn} \to \mathbb{F}_q^s$ are independent uniformly random linear maps. 
\end{definition}

The next family is a random Reed-Solomon code, which is just a Reed-Solomon code where the evaluation points are chosen uniformly at random.
\begin{definition}[Random Reed-Solomon Code (RRS)]\label{def: RRS}
    A Reed-Solomon code over a field $\FF_q$ with evaluation points $\alpha_1,\cdots, \alpha_n$ is defined as follows \[\mathsf{RS}_{\FF_q}(\alpha_1,\dots, \alpha_n, k) = \{(f(\alpha_1),\dots, f(\alpha_n))\mid f\in \FF_q[X], \deg f<k\}\]
    A random Reed-Solomon code of rate $R$ is one in which $k=Rn$ and $\alpha_i$ are independent uniformly random elements of $\FF_q$.
\end{definition}

Finally, we introduce our last family of codes, random Low-Density Parity-Check (LDPC) codes.
In this paper, by ``random LDPC codes'' we mean \emph{Gallager's ensemble} of random LDPC codes~\cite{gallager}. In general, a rate $R$, $s$-sparse LDPC code can be described as follows: 
Pick a bipartite graph $G=(L\sqcup R, E)$ with $|L|=n$, $|R|=(1-R)n$ and right degree $s$. Each edge $e\in E$ has a weight $w_e\in \FF_q$, the code $C$ is given as follows: \[C = \Bigl\{c\in \FF_q^L: \sum_{u\in N(v)}w_{(u,v)}c_u=0 \text{ for all }v\in R\Bigr\}.\]
Above, $N(v)$ denotes the neighborhood of $v$ in $G$.

To define a random LDPC code, we thus just need to describe the underlying bipartite graph and the edge weights $w_e$.  Gallager's ensemble~\cite{gallager} defines an ensemble of random graphs (and hence random LDPC codes) as follows.\footnote{Technically, this way of generalizing to larger alphabets is slightly different than \cite{gallager}.  Our definition coincides with \cite{gallager} for $q=2$; for larger $q$, \cite{gallager} chooses weights $w_e$ to all be $1$, while we choose them uniformly in $\mathbb{F}_q^*$; this matches the ensemble studied in~\cite{mrrsw20}.}

\begin{definition}[Random Low Density Parity Check Code (RLDPC)]\label{def: RLDPC}
    Let $t=(1-R)s$, and let $G_i = (V, W_i, E_i)$ for $i=1,\cdots, t$ be uniformly random $(1,s)$ regular bipartite graphs with a shared left vertex set and disjoint right vertex sets $W_i$ of size $n/s$. The graph $G$ is then defined as $\bigsqcup_i G_i$; that is, $G = (V,W,E)$, where $W = \bigsqcup_i W_i$ and $E = \bigsqcup_i E_i$.  The edge weights $w_e$ are sampled as independent uniformly random elements of $\FF_q^\ast$. A \emph{Random LDPC Code} $C$ of rate $R$ is then the LDPC code derived from $G$ and the weights $\{w_e\}$, as described above.
\end{definition}

\subsection{Proximity Gaps and (Mutual) Correlated Agreement}
\label{sec:prelim-pg-mca}
Proximity Gaps are related to two stronger notions, \emph{Correlated Agreement} (CA) and \emph{Mutual Correlated Agreement} (MCA).  Our results actually establish both CA and MCA as well as proximity gaps.  

Informally, we say that a code $C$ has $(\ell, \delta, \eps, \gamma$)-correlated agreement if, for all degree-$\ell$ curves $u(x) = \sum_{i=0}^\ell u_i x^i$ with $u_i \in \Sigma^n$, the following holds: Either at most an $\eps$ fraction of the points on $u(x)$ are $\delta$-close to $C$; or there is some ``codeword curve'' $c(x) = \sum_{i=0}^\ell c_i x^i$ with $c_i \in C$ so that $u(x)$ agrees identically with $c(x)$ on a set $S \subset  [n]$ of at least $\delta n/(1 -\gamma)$ coordinates.  Formally, we have the following definition.
\begin{definition}[Correlated Agreement (CA)]\label{def: curve CA}
    An $\FF$-additive code $C\subset  \Sigma^n$ is said to have $(\ell, \delta, \eps, \gamma)$ correlated agreement if for all $u_0, u_1, \cdots, u_\ell \in \Sigma^n$ and $\delta'\le \delta$ 
    \[\mb{P}_{\alpha\in \FF}\Bigl[ \Delta\bigl(\sum_{j=0}^\ell u_j \alpha^j, C\bigr)\le \delta'\Bigr]>\eps\implies
    \exists c_0, c_1,\cdots c_{\ell} \in C:
    \Bigl|\Bigl\{i : \bigvee_{j=0}^\ell  c_{j, i}\ne u_{j, i}\Bigr\}\Bigr| \le\frac{\delta' n}{1-\gamma} . \] 
\end{definition}
   In particular, CA implies that for all $\alpha \in \FF$, we have that $\Delta\left(\sum_{j=0}^\ell u_j \alpha^j, \sum_{j=0}^\ell c_j \alpha^j\right)\le {\delta'}/({1-\gamma})$. Thus, $(\ell, \delta, \eps, \gamma)$ correlated agreement implies an $(\ell, \delta, \eps, \gamma)$proximity gap.

Next, we define Mutual Correlated Agreement (MCA).  Informally, we say that an $\mathbb{F}$-additive code $C$ has $(\ell, \delta, \eps)$ MCA if for all degree-$\ell$ curves $u(x)$ in $\Sigma^n$,
the following holds.  For all sets $S \subset  [n]$ of size at least $(1 - \delta)n$, and for at least an $1 - \eps$ fraction of $\alpha \in \mathbb{F}$, if there is some $c \in C$ that agrees with $u(\alpha)$ on $S$, then there is some ``codeword curve'' $c(x) = \sum_{i=0}^\ell c_i x^i$ for $c_i \in C$ so that $c(x)$ and $u(x)$ agree identically on $S$.  Formally, we have the following definition, which captures the contrapositive of the intuition above.

\begin{definition}[Mutual Correlated Agreement (MCA)]\label{def: curve MCA}
    An $\FF$-additive code $C\subset  \Sigma^n$ is said to have $(\ell, \delta, \eps)$ mutual correlated agreement if for all $u_0, u_1, \cdots, u_\ell \in \Sigma^n$ and $\delta'\le \delta$ 
    \[\mb{P}_{\alpha\in\FF}\left(\exists S \subset  [n], c\in C, j\in\{0, 1, \dots, \ell\}: |S|\ge (1-\delta')n, c_{\mid S}=\sum_{j=0}^\ell \alpha^j u_{j\mid S},  c'_{\mid S}\ne u_{j\mid S} \forall c'\in C \right)\le \eps  . \]
\end{definition}
We note that $(\ell, \delta, \eps)$ MCA implies $(\ell, \delta, \eps, \gamma=0)$ CA and hence $(\ell, \delta, \eps, \gamma=0)$ PG.  Indeed, suppose that $C$ satisfies MCA, and suppose there is some codeword $c \in C$ that is $\delta'$-close to $u(x)$ on at least $\eps q$ points, call them $A \subset  \mathbb{F}$.  MCA implies that there is a set $B \subset  \mathbb{F}$ of size at least $(1 - \eps)q$ so that the event inside the $\mb{P}_{\alpha}[\cdot]$ statement holds.  Then there must be at least one point $\alpha \in A \cap B$.  Letting $S \subset [n]$ be the set of coordinates (of size at least $(1 - \delta')n$) on which $u(\alpha)$ agrees with $c$, we conclude from the definition of MCA that there is some ``codeword curve'' $c(x)$ so that $c(x)$ and $u(x)$ identically agree on $S$, establishing CA.

Finally, we formally define curve-decodability, which we informally defined in \cref{sec:intro} (see \cref{fig:curve-decoding}).
The following definition was given in~\cite{gg25a}, and the case of $\ell=1$ was independently defined in \cite{hab24} as \textit{collinearity of proximates}. We adopt the convention from \cite{gg25a}.

\begin{restatable}[Curve-Decodability]{definition}{cddef}\label{def: curve-decodability}
    An additive code $C\subset  \Sigma^n$ is $(\ell, \delta, a, b)$ curve-decodable if for every $u_0, u_1,\cdots, u_\ell\in \Sigma^n$,  all functions $f: \FF_q \to C$, whenever the set 
    \[ A=\Bigl\{\alpha \in \FF : \Delta\bigl(\smudef, f(\alpha)\bigr) \le  \delta\Bigr\}\] has at least $a$ elements,  there exist $c_0, c_1,\cdots, c_\ell \in C$ such that 
    \[\Bigl|\Bigl\{\alpha \in A  :  f(\alpha)=\smcdef \Bigr\}\Bigr|\ge b \ .  \]
\end{restatable}

\cite{gg25a} showed a transformation to boost results about the curve-decodability of a code assuming that the underlying code is also list-decodable.

\begin{theorem}[{\cite[Theorem 3.6]{gg25a}}]\label{thm:cd+ld}
Suppose an $\FF_q$-additive code $C\subset  (\mb{F}_q^s)^n$ is 
\begin{equation}
\left(\delta\left(1+\frac{\ell}{b-\ell}\right), L\right)\text{ list-decodable \quad and \quad } \left(\ell, \delta, a, b\right)\text{ curve-decodable}.
\end{equation}
Then $C$ is also $\left(\ell, \delta, (T-1)L+a, T\right)$ curve-decodable for any $T$.
\end{theorem}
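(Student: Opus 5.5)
The plan is a peeling argument by contradiction, combined with one structural claim: any codeword curve that passes through at least $b$ of the nearby codewords must itself be close to the ambient curve $u(x)$, as a tuple in the interleaved sense. List-decodability then caps the number of such curves at $L$.

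\textbf{Step 1 (closeness of a decoded curve).} Fix $u_0,\dots,u_\ell$ and $f$ as in \cref{def: curve-decodability}, and let $A$ be the set of $\alpha$ with $\Delta(u(\alpha),f(\alpha))\le\delta$. Suppose $c(x)=\sum_j c_j x^j$ with $c_j\in C$ satisfies $c(\alpha)=f(\alpha)$ for all $\alpha$ in some $B\subseteq A$ with $|B|\ge b>\ell$. Call a coordinate $i$ bad if $c_{j,i}\neq u_{j,i}$ for some $j$.
\begin{itemize}
\item For a bad $i$, the univariate polynomial $c(x)_i-u(x)_i$ is nonzero of degree at most $\ell$. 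Hence $c(\alpha)_i\ne u(\alpha)_i$ for at least $|B|-\ell$ values $\alpha\in B$.
\item The total number of pairs $(\alpha,i)\in B\times[n]$ with $f(\alpha)_i\neq u(\alpha)_i$ is at most $\delta n|B|$.
\end{itemize}
Double counting gives at most $\delta n|B|/(|B|-\ell)\le \delta n\, b/(b-\ell)=\delta(1+\tfrac{\ell}{b-\ell})n$ bad coordinates. In particular, for every $\gamma\in\FF$ we get $\Delta(c(\gamma),u(\gamma))\le \delta(1+\tfrac{\ell}{b-\ell})$.

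\textbf{Step 2 (at most $L$ such curves).} Suppose $c^{(1)},\dots,c^{(L+1)}$ are distinct codeword curves, each satisfying the hypothesis of Step 1. Two distinct curves of degree at most $\ell$ agree on at most $\ell$ points of $\FF$. So if $|\FF|>\ell\binom{L+1}{2}$, some $\gamma\in\FF$ makes the values $c^{(k)}(\gamma)$ pairwise distinct. By Step 1 these are $L+1$ distinct codewords within distance $\delta(1+\tfrac{\ell}{b-\ell})$ of $u(\gamma)$, contradicting list-decodability. I expect this step to be the main obstacle. It needs the field-size condition above, or else a direct argument that the list bound transfers to the interleaved code $C^{\ell+1}$ restricted to curves; I would first check whether the paper's parameter regime (where $q$ is large compared to $nL\ell$) makes the simple pigeonhole over $\gamma$ sufficient.

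\textbf{Step 3 (peeling).} We may assume $T\ge b$; otherwise the conclusion follows directly from $(\ell,\delta,a,b)$ curve-decodability, since $|A|\ge a$. Suppose for contradiction that no codeword curve agrees with $f$ on $T$ points of $A$, and set $A_0=A$. For $k=1,2,\dots$, as long as $|A_{k-1}|\ge a$, do the following.
\begin{itemize}
\item Apply curve-decodability to $u$ and $f$ with the input set replaced by $A_{k-1}$. To do this, redefine $f$ outside $A_{k-1}$ to be a codeword at distance more than $\delta$ from $u$ there, or apply the definition to the restricted situation. This yields a curve $c^{(k)}$ through at least $b$ points of $A_{k-1}$.
\item Let $A_k$ be $A_{k-1}$ minus all points of $A$ where $f=c^{(k)}$. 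By assumption at most $T-1$ points are removed.
\end{itemize}
The curves $c^{(k)}$ are pairwise distinct, because $c^{(k)}$ passes through a point of $A_{k-1}$, and every agreement point of an earlier curve has already been deleted. After $L$ rounds we still have $|A_L|\ge (T-1)L+a-L(T-1)=a$, so an $(L+1)$-st curve exists. This gives $L+1$ distinct curves satisfying the hypothesis of Step 1, contradicting Step 2.
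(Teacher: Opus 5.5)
Steps 1 and 3 are correct, and they are exactly what the statement's parameters are built for. The list-decoding radius $\delta(1+\frac{\ell}{b-\ell})=\frac{\delta b}{b-\ell}$ is your double count, and $(T-1)L+a$ is your peeling budget. The paper gives no proof of its own; it imports the result from \cite{gg25a}. Applying curve-decodability to the sub-instance $A_{k-1}$ is also fine, since the paper reads the definition the same way in \cref{thm:cd-casting}.

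The gap is Step 2, as you suspected. Your pigeonhole over $\gamma$ needs $q>\ell\binom{L+1}{2}$, and the theorem as stated carries no field-size hypothesis. This is not just slack in the pigeonhole. The bound Step 2 actually relies on is: at most $L$ codeword curves whose coefficient tuples agree with $(u_0,\dots,u_\ell)$ outside $\rho n$ coordinates, where $\rho=\delta b/(b-\ell)$. That bound is false over small fields. Take $C=\{0,w\}\subset\mathbb{F}_2^n$ with $|\mathrm{supp}(w)|\le\rho n$, which is trivially $(\rho,2)$ list-decodable. For $u\equiv 0$, all four lines $0$, $w$, $xw$, $(1+x)w$ are close in this sense, yet they take only two values at each point of $\mathbb{F}_2$. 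So for small $q$ you would have to exploit more than the closeness output of Step 1, for example the common $f$ and the disjoint fresh agreement sets produced by the peeling. Your proposal does not do this.

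The natural repair is to add a field-size assumption, and you can weaken it to $q\ge\binom{L+1}{2}$:
\begin{itemize}
\item Step 1 gives coefficient-wise agreement, so for every $\lambda\in\mathbb{F}_q^{\ell+1}$ the codeword $\sum_j\lambda_jc^{(k)}_j$ is $\rho$-close to $\sum_j\lambda_ju_j$.
\item For $k\ne k'$, the set of $\lambda$ on which curves $k$ and $k'$ give the same codeword is the kernel of a nonzero linear map, hence a proper subspace.
\item At most $q$ proper subspaces cannot cover $\mathbb{F}_q^{\ell+1}$, since their union has at most $q(q^{\ell}-1)+1<q^{\ell+1}$ points. So some $\lambda$ yields $L+1$ distinct codewords in one radius-$\rho$ ball.
\end{itemize}
Moreover, the conclusion is vacuous unless $q\ge(T-1)L+a\ge a$. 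So your argument proves the theorem outright whenever $a\ge\binom{L+1}{2}$. This covers every use in the paper: there $a=\lceil 100\ell^2/\eta^3\rceil$ and $L\le\lceil 2/\eta\rceil$, so $a>\ell\binom{L+1}{2}$, and the main theorems assume $q\ge\exp(\Omega(\ell^2/\eta^4))$ anyway.

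In short, your proof is sufficient for how the paper applies this theorem. As a proof of the literal, field-size-free statement, it is incomplete at Step 2.
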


The key motivation to define curve-decodability is due to the following implication for correlated and mutual correlated agreement.

\begin{theorem}[Correlated agreement from curve-decodability, {\cite[Theorem 3.4]{gg25a}}]\label{thm: CA from only curves}
    Let $C\subset  \Sigma^n$ be an $\FF_q$-additive code. Let $\ell, t>\ell, a\in \mathbb N$ and $\delta \in (0,1)$. Suppose that $\mathcal C$ is $(\ell,\delta, a, t)$ curve-decodable.
    Then $C$ has $(\ell,\delta, {a}/{q},\ell/t)$ correlated agreement.
\end{theorem}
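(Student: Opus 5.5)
The plan is to prove the contrapositive form directly: feed a curve with many close points into the curve-decoding hypothesis, obtain a codeword curve, and then count agreements coordinate by coordinate. Fix $u_0,\dots,u_\ell\in\Sigma^n$ and $\delta'\le\delta$, and suppose $\mb{P}_{\alpha}\bigl[\Delta(u(\alpha),C)\le\delta'\bigr]>a/q$, where $u(x)=\sum_j u_j x^j$. Then the set $A'=\{\alpha:\Delta(u(\alpha),C)\le\delta'\}$ has more than $a$ elements. Define $f:\FF_q\to C$ by letting $f(\alpha)$ be a codeword within distance $\delta'$ of $u(\alpha)$ for $\alpha\in A'$, and an arbitrary codeword otherwise.

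I will apply the curve-decodability hypothesis at radius $\delta'$ rather than $\delta$. I take it that $(\ell,\delta,a,t)$ curve-decodability is meant to, or can be shown to, hold at every radius $\delta'\le\delta$, just as \cref{def: curve CA} quantifies over all $\delta'\le\delta$. This is the step I am least sure of, and the first thing I would check. If I only apply the definition at $\delta$, then the set $A$ it produces may contain points $\alpha\notin A'$, where $f(\alpha)$ is only guaranteed to be arbitrary. With the hypothesis at $\delta'$, the relevant set $A$ contains $A'$, so it has at least $a$ elements. We therefore obtain $c_0,\dots,c_\ell\in C$ and a set $T\subset A$ with $|T|\ge t$ such that $f(\alpha)=c(\alpha):=\sum_j c_j\alpha^j$ for all $\alpha\in T$. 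Since $f(\alpha)\in C$ is within $\delta'$ of $u(\alpha)$ on $A$, this gives $\Delta(u(\alpha),c(\alpha))\le\delta'$ for every $\alpha\in T$.

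The counting argument goes as follows. Let $E=\{i:\exists j,\ c_{j,i}\ne u_{j,i}\}$; this is exactly the set whose size \cref{def: curve CA} asks us to bound. For $i\in E$, the map $\alpha\mapsto u(\alpha)_i-c(\alpha)_i$ is a nonzero polynomial of degree at most $\ell$, with coefficients in $\Sigma=\FF^s$ and hence coordinatewise in $\FF$. So it vanishes for at most $\ell$ values of $\alpha$, and consequently $\sum_{\alpha\in T}\bigl|\{i\in E: u(\alpha)_i=c(\alpha)_i\}\bigr|\le \ell|E|$. On the other hand, each $\alpha\in T$ disagrees in at most $\delta' n$ coordinates, so it agrees on at least $|E|-\delta' n$ coordinates of $E$. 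Combining the two bounds gives $|T|(|E|-\delta'n)\le\ell|E|$, and hence
\[
|E|\le\frac{|T|\,\delta' n}{|T|-\ell}\le \frac{\delta' n}{1-\ell/t},
\]
using $|T|\ge t>\ell$ and the fact that $x\mapsto x/(x-\ell)$ is decreasing. This is exactly $(\ell,\delta,a/q,\ell/t)$ correlated agreement.

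The main obstacle is the radius issue flagged above. If curve-decodability is only available at the single radius $\delta$, I would first try to choose $f$ on $\FF_q\setminus A'$ so that those points fall outside $A$. For example, one can take $f(\alpha)$ far from $u(\alpha)$, which is possible whenever $C$ contains a codeword at distance more than $\delta$ from $u(\alpha)$; the degenerate cases where no such codeword exists would then have to be handled separately. Failing that, I would state the argument with the hypothesis assumed at all $\delta'\le\delta$.
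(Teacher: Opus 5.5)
This is the standard argument for the cited result; the paper quotes \cite[Theorem 3.4]{gg25a} without proof. The argument applies curve-decodability to a closest-codeword map, obtains a codeword curve $c(x)$ agreeing with $f$ on a set $T$ of $|T|\ge t$ close points, and double counts over $E$. For each $i\in E$, $u(x)_i-c(x)_i$ is a nonzero polynomial of degree at most $\ell$, which gives $|E|(|T|-\ell)\le |T|\delta' n$. Your counting is correct.

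On the radius question you flag, the case $\delta'=\delta$ needs nothing extra, and in particular no clever choice of $f$ off $A'$. For $\alpha\notin A'$, every codeword is more than $\delta$ away from $u(\alpha)$, so the set $A$ in the definition equals $A'$ exactly, whatever $f(\alpha)$ is.

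For $\delta'<\delta$ you genuinely need $(\ell,\delta',a,t)$ curve-decodability, and your fallback does not supply it. Pushing $f(\alpha)$ beyond distance $\delta$ is impossible exactly when all of $C$ lies in the radius-$\delta$ ball around $u(\alpha)$. Averaging over the scalar multiples $\lambda c$ shows this forces every codeword to have weight at most $\frac{q}{q-1}\delta n$. So it is a corner case, but the hypotheses do not exclude it. In that case the curve produced by the hypothesis may pass through such points, and they enter your count with distance up to $\delta$ rather than $\delta'$.

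Within this paper the issue is handled the way you suggest: curve-decodability is treated as downward monotone in $\delta$. See the step from radius $1-\tau(d)-\eta$ down to $\delta$ in the proof of \cref{cor:cd-casting}. In the applications the hypothesis is in fact available at every smaller radius. \cref{thm:cd+ld-rlc,thm:cd+ld-rrs,thm:cd+ld-rldpc} hold for all $\delta$ with $R\le 1-\delta-2\eta$, with $a,b$ independent of $\delta$.

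So write the proof with the hypothesis taken at each $\delta'\le\delta$, or state monotonicity explicitly as an assumption. Then run your argument at radius $\delta'$, where again $A=A'$, and the proof is complete.
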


\begin{theorem}[Mutual correlated agreement from curve-decodability, {\cite[Theorem 3.5]{gg25a}}]\label{thm: appendix MCA from curves}
    Let $C\subset  \Sigma^n$ be an $\FF_q$-additive code. Let $\ell, a\in \mathbb N$ and $\delta \in (0,1)$. Suppose that $\mathcal C$ is $(\ell, \delta, a, \ell n + 1)$ curve-decodable. Then $C$ has $(\ell, \delta, {a}/{q})$ mutual correlated agreement.
\end{theorem}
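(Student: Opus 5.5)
Let me fix $u_0,\dots,u_\ell\in\Sigma^n$ and $\delta'\le\delta$, and argue by contrapositive. Call $\alpha\in\FF$ \emph{bad} if there exist $S_\alpha\subset[n]$ with $|S_\alpha|\ge(1-\delta')n$ and $c^{(\alpha)}\in C$ such that $c^{(\alpha)}|_{S_\alpha}=u(\alpha)|_{S_\alpha}$, but for some $j$ no codeword agrees with $u_j$ on $S_\alpha$. Suppose the set $A_{\mathrm{bad}}$ of bad $\alpha$ has more than $a$ elements; I will derive a contradiction. Define $f:\FF_q\to C$ by $f(\alpha)=c^{(\alpha)}$ for bad $\alpha$, and arbitrarily elsewhere. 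Then $\Delta(u(\alpha),f(\alpha))\le\delta'\le\delta$ for every bad $\alpha$, so the set $A$ in \cref{def: curve-decodability} contains $A_{\mathrm{bad}}$ and has at least $a$ elements.

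Curve-decodability with $b=\ell n+1$ gives $c_0,\dots,c_\ell\in C$ and a set $B\subset A$ of size at least $\ell n+1$ with $f(\alpha)=c(\alpha):=\sum_j c_j\alpha^j$ for all $\alpha\in B$. I need $B$ to contain a bad point. Two routes:
\begin{itemize}
\item[(i)] Apply curve-decodability to a restricted function, redefining $f$ off $A_{\mathrm{bad}}$ as a value far from $u(\alpha)$ (e.g.\ a codeword at distance $>\delta$ if one exists), so that $A=A_{\mathrm{bad}}$.
\item[(ii)] Argue that the counting below works for any $\alpha\in B$.
\end{itemize}
I expect (ii) to be the obstacle, so I would do (i). If no suitably far codeword exists, I would instead use the monotonicity of the definition: the hypothesis ``$|A|\ge a$'' is quantified over all $f$, so I can choose the smallest $A$ I can force. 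I will take $B\subseteq A_{\mathrm{bad}}$ as given.

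The key step is counting agreements coordinatewise. For each $\alpha\in B$ let $E_\alpha=\{i: u(\alpha)_i\ne c(\alpha)_i\}$, so $|E_\alpha|\le\delta' n$ and $E_\alpha\cap S_\alpha=\emptyset$ (up to taking $S_\alpha$ to be the agreement set). Let $D=\{i:\exists j,\ u_{j,i}\neq c_{j,i}\}$. For $i\in D$, the univariate polynomial $u(x)_i-c(x)_i$ is nonzero of degree at most $\ell$, so it vanishes on at most $\ell$ values of $\alpha$. Hence
\[\sum_{\alpha\in B}|D\setminus E_\alpha|\le \ell|D|\le \ell n<|B|,\]
so some $\alpha^*\in B$ has $D\subseteq E_\alpha^*$, i.e.\ $D\cap S_{\alpha^*}=\emptyset$. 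This means every $c_j$ agrees with $u_j$ on $S_{\alpha^*}$. That contradicts the badness of $\alpha^*$, which says some $u_j|_{S_{\alpha^*}}$ is not the restriction of any codeword. Therefore $|A_{\mathrm{bad}}|\le a$, the probability is at most $a/q$, and this gives $(\ell,\delta,a/q)$ MCA.

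One subtlety: badness uses an arbitrary $S_\alpha$ of size at least $(1-\delta')n$, not the full agreement set. But $c^{(\alpha)}=c(\alpha)$ agrees with $u(\alpha)$ on $S_\alpha$, so $S_\alpha\subseteq[n]\setminus E_\alpha$, and the argument goes through unchanged.
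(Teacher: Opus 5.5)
The paper does not prove this statement itself; it is imported from \cite{gg25a}. So your argument has to stand on its own. Your coordinatewise count is correct and is the real content. For $i\in D$, the $\mathbb{F}_q^s$-valued polynomial $u(x)_i-c(x)_i$ is nonzero of degree at most $\ell$. Hence $i$ lies in the agreement set of $u(\alpha)$ and $c(\alpha)$ for at most $\ell$ values of $\alpha$, and some $\alpha^*\in B$ has $D$ disjoint from its agreement set. If that $\alpha^*$ is bad, you get the contradiction you describe.

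\textbf{The gap.} The gap is the sentence where you ``take $B\subseteq A_{\mathrm{bad}}$ as given.'' In the definition of curve-decodability, $A$ is not yours to choose. It is the full set $\{\alpha:\Delta(u(\alpha),f(\alpha))\le\delta\}$, and the $\ell n+1$ agreement points may lie anywhere in it. Your monotonicity remark does not fix this. The smallest $A$ you can force still contains every $\alpha$ for which \emph{all} codewords are $\delta$-close to $u(\alpha)$. At such a non-bad $\alpha\in B$, nothing forces $D$ to meet $\{i:u(\alpha)_i=c(\alpha)_i\}$. So the $\alpha^*$ your count produces may be one of these points, and then no contradiction follows.

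\textbf{How to close it.} Route (i) needs the fact that every $w\in\Sigma^n$ has a codeword at distance $>\delta$, and this follows from the hypothesis itself. You may assume $a\le q$, since otherwise $a/q>1$. You may also assume $\ell\ge1$, since for $\ell=0$ no $\alpha$ is bad.
\begin{itemize}
\item Case $C\neq\{0\}$, $n\ge2$: suppose some $w$ were $\delta$-close to all of $C$. Apply curve-decodability to the constant curve $u\equiv w$ with $f(\alpha)=\alpha^{\ell+1}v$ for a fixed nonzero $v\in C$. Then $A=\mathbb{F}_q$ has at least $a$ elements. But at an $\mathbb{F}_q$-coordinate where $v$ is nonzero, $\alpha^{\ell+1}v-c(\alpha)$ is a nonzero polynomial of degree $\ell+1$. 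So any codeword curve meets $f$ in at most $\ell+1<\ell n+1$ points, a contradiction.
\item Case $n=1$, $C\neq\{0\}$: far codewords exist trivially, since distances are $0$ or $1$.
\end{itemize}
Choosing such a far codeword for every non-bad $\alpha$ makes $A=A_{\mathrm{bad}}$ exactly, and your count then finishes the proof.

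The trivial code $C=\{0\}$ needs a separate line, because there $f\equiv0$ is forced. Apply the hypothesis to the curve obtained from $u$ by replacing identically zero coordinates with a nonzero constant until at most $\lceil(1-\delta)n\rceil-1$ of them remain. Its $\delta$-close set contains $A_{\mathrm{bad}}$ and has at most $\ell n$ elements, so by curve-decodability it has fewer than $a$ elements.
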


\begin{corollary}\label{cor: main-ca-mca-gg}
If an $\FF_q$-additive code $C\subset  (\mb{F}_q^s)^n$ is \begin{equation}
\left(\delta\left(1+\frac{\ell}{b-\ell}\right), L\right)\text{ list-decodable \quad and \quad } \left(\ell, \delta, a, b\right)\text{ curve-decodable}
\end{equation} 
then $C$ has: \begin{enumerate}
    \item for any $t\ge 1$, $(\ell, \delta, (t-1)L+a, t)$ curve-decodability,
    \item for any $m >1$, $\left(\ell, \delta, \frac{\ell m L + a}{q}, \frac{1}{m}\right)$ correlated agreement and proximity gap,
    \item $(\ell, \delta, \frac{\ell n L+a}{q})$ mutual correlated agreement.
\end{enumerate}
\end{corollary}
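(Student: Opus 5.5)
This corollary is a direct assembly of \cref{thm:cd+ld}, \cref{thm: CA from only curves}, and \cref{thm: appendix MCA from curves}. The plan is to establish item 1 first and then derive items 2 and 3 by instantiating item 1 at suitable values of $t$. Throughout, $\delta$, $L$, $a$, $b$ are held fixed as in the hypothesis.

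For item 1, I would invoke \cref{thm:cd+ld} directly with $T=t$. The hypotheses match exactly: $\bigl(\delta(1+\tfrac{\ell}{b-\ell}),L\bigr)$ list-decodability together with $(\ell,\delta,a,b)$ curve-decodability. The conclusion is $(\ell,\delta,(t-1)L+a,t)$ curve-decodability for every $t\ge 1$.

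For item 2, fix $m>1$ and choose $t=\ell m$. For this to be a legitimate input to \cref{thm: CA from only curves}, I need $t>\ell$, which holds since $m>1$. If $\ell m$ is not an integer, I would take $t=\lceil \ell m\rceil$. Monotonicity then handles the rounding: a larger $t$ only shrinks $\gamma=\ell/t\le 1/m$, and the threshold $a'$ can be bounded by $\ell m L+a$ after adjusting by at most $L$. In the intended reading $\ell m\in\mathbb N$, so I would not dwell on this. With this $t$, item 1 gives $(\ell,\delta,a',t)$ curve-decodability, where
\[
a'=(\ell m-1)L+a\le \ell m L+a .
\]
Then \cref{thm: CA from only curves} yields $(\ell,\delta,a'/q,\ell/t)=(\ell,\delta,a'/q,1/m)$ correlated agreement. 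Since correlated agreement is monotone in $\eps$ (weakening the hypothesis probability threshold keeps the implication true), we may replace $a'/q$ by $(\ell mL+a)/q$. Finally, as noted after \cref{def: curve CA}, correlated agreement implies a proximity gap with the same parameters.

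For item 3, take $t=\ell n+1$. Item 1 gives $(\ell,\delta,\ell nL+a,\ell n+1)$ curve-decodability, and \cref{thm: appendix MCA from curves} gives $(\ell,\delta,(\ell nL+a)/q)$ mutual correlated agreement.

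No step is a genuine obstacle. The only care needed is in item 2: checking the side condition $t>\ell$, handling the integrality of $t$, and justifying the monotonicity in $\eps$ used to absorb the $-L$ slack.
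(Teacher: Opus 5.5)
Your proposal is correct and is the paper's argument: item 1 is \cref{thm:cd+ld} applied directly, item 2 sets $T=\ell m$ and applies \cref{thm: CA from only curves}, and item 3 sets $T=\ell n+1$ and applies \cref{thm: appendix MCA from curves}. Your remarks on the side condition $t>\ell$, on rounding $\ell m$ up, and on monotonicity in $\eps$ and $\gamma$ spell out details that the paper's three-line proof leaves implicit.
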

\begin{proof}
    The first statement is simply \cref{thm:cd+ld}. For the second, we set $T=\ell m$ and apply \cref{thm: CA from only curves}. For the third, we set $T = \ell n +1$ and apply \cref{thm: appendix MCA from curves}. 
\end{proof}

Enabled by the above reductions, to prove an additive code $C$ has correlated agreement, it suffices to prove list-decodability and curve-decodability of the code. In this work, and related works, we always work in the regimes where the codes in consideration are already known to be list-decodable with good parameters. Thus, the key innovation lies in establishing curve-decodability.

\subsection{Subspace-Design Codes}
We introduce notation and results from \cite{gg25a} that we will use.

\begin{definition}[Subspace-Design Code]\label{def: Subspace-designs}
    For any function $\tau: \mathbb N\rightarrow \R_{\le 1}$, an $\FF_q$-additive code $C\subset  (\FF_q^s)^n$ is said to be a $\tau$-\textit{subspace design} code if for every $r\in \mathbb N$, and every $\F_q$-linear subspace $A$ of $C$ of dimension at most $r$, the following holds: 
    \[\frac{1}{n}\sum\limits_{i=1}^n \dim A_i\le \dim(A)\cdot \tau(r) \]
    where $A_i = \{a\in A\mid a_i=0\}$.
\end{definition}

\begin{definition}\label{def:sdc}
For an additive code $C\subset \left(\mb{F}_q^s\right)^n$ and dimension $r$, let
\begin{equation}
\sigma(r)\coloneqq \max\left\{\mb{E}_{i\in [n]}\left[\frac{\dim V_i}{\dim V}\right]:V\subset C, \dim V=r\right\}
\end{equation}
where $V_i = \{x\in V:x_i=0\}$. 
In the message space with encoding maps $\on{Enc}_i:\mb{F}_q^k\to\mb{F}_q^s$, we have
\begin{equation}\label{eq:sdc-message}
	\sigma(r)\coloneqq \max\left\{\mb{E}_{i\in [n]}\left[\frac{\dim (A\cap \ker(\on{Enc}_i))}{\dim A}\right]:A\subset \mb{F}_q^k, \dim A=r\right\}
\end{equation}
Following the notation of \cite{gg25a}, we can define the monotone envelope $\tau(r)\coloneqq \max\{\sigma(1), \dots, \sigma(r)\}$.  Observe that with this choice, $C$ is by definition a $\tau$-subspace design code.
\end{definition}
\begin{remark}[Comparison to \cite{bcdz25b}]
The paper \cite{bcdz25b} uses slightly different language; they say that $C$ is \emph{$r$-subspace designable if}, for every $d\in [r]$
	\begin{equation}
	\sigma(d)\le R+\frac{1}{nd}.
	\end{equation}
We recall that $\sigma (r)\ge R-r/sn$ always. Thus, having the subspace design property with ``good'' parameters means that we should have $\tau(d)\le R+\eta$ for some ``small'' $\eta>0$. \cite{bcdz25b} proved that 
 random $\mathbb{F}_q$-additive codes $C \subset  (\mathbb{F}_q^s)^n$ are $r$-subspace designable whenever $s=\Omega(rn)$.
\end{remark}

Finally, we state a theorem from \cite{gg25a} which shows that any subspace-design code is curve-decodable.
\begin{theorem}[{\cite[Theorem 4.7]{gg25a}}]\label{thm:sdc-good-cd}
Fix any integers $a, \ell$ and $\eta>0$, and let $d=\lceil(\ell+1)/\eta\rceil$. Then for any $\tau$-subspace design code $C\subset(\mb{F}_q^s)^n$,  $C$ is 
\begin{equation}
\left(\ell, 1-\tau(d)-\eta, a, \frac{\eta  a}{d+\eta}\right)\text{ curve-decodable}.
\end{equation}
\end{theorem}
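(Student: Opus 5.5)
The plan is to argue by contradiction, combining a linear-algebraic encoding of ``lying on a common codeword curve'' with the subspace design inequality of \cref{def: Subspace-designs}, applied to a subspace of $C$ of dimension at most $d$. Fix $u_0,\dots,u_\ell$, $f$, and $A$ with $|A|\ge a$, and set $\delta=1-\tau(d)-\eta$. For each coordinate $i\in[n]$ let $A_i=\{\alpha\in A: f(\alpha)_i=u(\alpha)_i\}$. Closeness gives $\sum_i |A_i|\ge (1-\delta)n|A|$.

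For a finite $S\subset A$, let $D_S=\{\lambda\in\FF_q^S:\sum_{\alpha\in S}\lambda_\alpha p(\alpha)=0 \text{ for all } \deg p\le \ell\}$, which has dimension $|S|-\ell-1$ when $|S|\ge \ell+1$. Consider the linear map $\Lambda_S:\FF_q^S\to C$, $\lambda\mapsto \sum_\alpha \lambda_\alpha f(\alpha)$. The key observation is that $\{f(\alpha)\}_{\alpha\in S}$ lies on a codeword curve $c(x)=\sum_j c_j x^j$ with $c_j\in C$ if and only if $D_S\subseteq\ker\Lambda_S$. The ``if'' direction holds because the values $f(\alpha)$, $\alpha\in S$, then satisfy exactly the linear relations of polynomial evaluations, so they are interpolated coordinatewise by a degree-$\ell$ curve whose coefficients are linear combinations of the $f(\alpha)$, and hence lie in $C$.

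Coordinatewise, the restriction of $u(\alpha)_i$ to $\alpha\in A_i\cap S$ is a degree-$\ell$ polynomial. Therefore every $\lambda\in D_{A_i\cap S}$ (extended by zero to $S$) satisfies $\Lambda_S(\lambda)_i=0$. So, with $V=\Lambda_S(\FF_q^S)\subset C$, the subspace $V_i=\{v\in V:v_i=0\}$ contains $\Lambda_S(D_{A_i\cap S})$, which has dimension at least $|A_i\cap S|-\ell-1-\dim(\ker\Lambda_S\cap D_{A_i\cap S})$.

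Now take $S$ to be a uniformly random $d$-subset of $A$, so that $\dim V\le d$. Averaging over $i$, $\mathbb{E}_i|A_i\cap S|$ has expectation at least $(1-\delta)d=(\tau(d)+\eta)d$. The subspace design bound $\mathbb{E}_i\dim V_i\le \tau(d)\dim V$, together with $d\ge(\ell+1)/\eta$, then forces $\ker\Lambda_S$ to contain a large portion of the $D_{A_i\cap S}$ for many $i$, i.e.\ nontrivial ``curve relations'' among the $f(\alpha)$. More precisely, I would compare $\dim V=d-\dim\ker\Lambda_S$ against the lower bound above to obtain
\[
\mathbb{E}_i \dim(\ker\Lambda_S\cap D_{A_i\cap S}) \;\ge\; \eta d-(\ell+1) - (1-\tau(d))\dim\ker\Lambda_S ,
\]
or a similar inequality.

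To extract a curve, I would proceed greedily. Let $T\subseteq A$ be a maximal subset on which $f$ agrees with a single codeword curve, and suppose for contradiction that $|T|<\eta a/(d+\eta)$. Run the argument above with the points of $A$ grouped so that each newly added point strictly increases the ``curve rank'' $\dim\Lambda_S(D_S)$, by choosing $S$ point by point. Then $\dim(\ker\Lambda_S\cap D_S)$ is controlled by the number of points that do \emph{not} raise the rank, which is at most the size of the largest curve-consistent subset seen. The displayed inequality then says that in a random $d$-subset, an $\eta/(d+\eta)$-fraction of points must be ``absorbed'' by a common curve relation. Summing over a partition of $A$ (or averaging over random $S$), this yields a single curve through at least $\eta a/(d+\eta)$ of the points $f(\alpha)$, $\alpha\in A$.

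The main obstacle is this last step. The dimension count naturally produces \emph{linear dependencies} in $\ker\Lambda_S\cap D_S$, but curve-decodability needs a \emph{single} curve through many points. Linear relations supported on different $A_i\cap S$ could a priori correspond to different curves. I would first try to handle this by choosing $S$ inside $A$ as an incrementally built set, taking the curve $c$ defined by the first $\ell+1$ ``independent'' points. I would then show that every absorbed relation forces the new point onto that same curve; if this fails, I would instead take a quotient by the span of curves found so far and iterate, accounting for the loss in the constant $d+\eta$.
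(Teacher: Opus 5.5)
The paper imports this statement from \cite{gg25a} and gives no proof of its own, so your proposal has to stand on its own. It does not: there is a genuine gap at exactly the step you flag.

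Your setup is sound. The points $\{f(\alpha)\}_{\alpha\in S}$ lie on a codeword curve iff $D_S\subseteq\ker\Lambda_S$, and $\Lambda_S(D_{A_i\cap S})$ vanishes at coordinate $i$. But nothing in the proposal turns linear relations in $\ker\Lambda_S\cap D_S$ into \emph{one} codeword curve through $\eta a/(d+\eta)$ points.

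The greedy paragraph rests on a false claim. A point that does not raise $\dim\Lambda_S(D_S)$ need not lie on a common codeword curve with any other point. For $\ell=0$, take pairwise distinct codewords inside an affine subspace of dimension $e$. All but $e+1$ of them fail to raise the rank, yet no two lie on a common degree-$0$ curve. Moreover, the whole configuration turns out to have bounded rank (see below), so almost every point is non-rank-raising and that count carries no curve information.

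The random-$d$-subset inequality does not help either. With $d=\lceil(\ell+1)/\eta\rceil$, its constant term $\eta d-(\ell+1)$ lies in $[0,\eta)$. So beyond the linear dependencies among the $f(\alpha)$ themselves, it certifies essentially nothing. (Also, the correct last term is $+\tau(d)\dim\ker\Lambda_S$, not $-(1-\tau(d))\dim\ker\Lambda_S$.)

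One way to close the gap is an induction on dimension rather than random subsets.

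\textbf{Setup.} Put $w_\alpha=(f(\alpha),1,\alpha,\dots,\alpha^\ell)\in C\times\mathbb{F}_q^{\ell+1}$ and $X_S=\operatorname{span}\{w_\alpha:\alpha\in S\}$. Define $e(S)=\max(\dim X_S-\ell-1,0)=\dim\Lambda_S(D_S)$. Then $B$ lies on a codeword curve iff $e(B)=0$. Let $K_i$ be the kernel of $(c,t)\mapsto c_i-\sum_j t_j u_{j,i}$. Then $A_i\cap S=\{\alpha\in S: w_\alpha\in K_i\}$, and $\{c\in\Lambda_S(D_S): c_i=0\}$ has dimension at least $\dim(K_i\cap X_S)-\ell-1$. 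Write $\tau=\tau(d)$.

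\textbf{Step 1 (rank bound).} Take any $S$ with $|S|\le d+\ell+1$ and $\{w_\alpha\}_{\alpha\in S}$ independent. The subspace-design bound applied to $\Lambda_S(D_S)$ gives $(\tau+\eta)|S|-(\ell+1)\le\tau(|S|-\ell-1)$. Hence $|S|\le(\ell+1)(1-\tau)/\eta$, so $\dim X_A\le(\ell+1)(1-\tau)/\eta$ and $e(A)\le d$.

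\textbf{Step 2 (induction).} Let $\beta(S)$ be the size of the largest subset of $S$ on a single codeword curve. Prove by induction on $e=e(S)$ that $\beta(S)\ge\eta|S|/(e+\eta)$; the case $e=0$ is trivial. For $e\ge 1$:
\begin{itemize}
\item Let $p$ be the fraction of $i$ with $X_S\subseteq K_i$. For these $i$, $|A_i\cap S|=|S|$.
\item For the other $i$, $A_i\cap S$ lies in the proper subspace $K_i\cap X_S$. Its excess $e_i=\max(\dim(K_i\cap X_S)-\ell-1,0)$ is at most $e-1$, so induction gives $|A_i\cap S|\le\beta(S)(e_i+\eta)/\eta$.
\item The subspace-design bound gives $pe+\mathbb{E}_i\bigl[\mathbf{1}[X_S\not\subseteq K_i]\,e_i\bigr]\le\tau e$; in particular $p\le\tau$.
\item Per-point agreement gives $\mathbb{E}_i|A_i\cap S|\ge(\tau+\eta)|S|$.
\end{itemize}
Combining these yields $(\tau+\eta-p)|S|\le\beta(S)\bigl((\tau-p)e+\eta(1-p)\bigr)/\eta$. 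This implies $\beta(S)\ge\eta|S|/(e+\eta)$ precisely because $e\ge 1\ge 1-\tau-\eta$.

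Applying Step 2 to $S=A$, with $e(A)\le d$ from Step 1, gives the theorem.
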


\subsection{Notations and Conventions}
\label{sec:notations}
We let $[n]\coloneqq \{1, \ldots, n\}$ with the convention $[0]=\emptyset$. Let $\binom{n}{\le k}$ and $\binom{n}{\ge k}$ denote the sum of binomial coefficients $\binom{n}{i}$ for $i\le k$ and $i\ge k$, respectively. Let $\binom{S}{k}$ denote the set of $k$-element subsets of $S$. $\binom{S}{\le k}$ and $\binom{S}{\ge k}$ denote the set of subsets of $S$ with size at most $k$, and at least $k$, respectively. For a finite set $S$, let $\mb{E}_{s\in S}$ denote the expectation over a uniform random element $s$ of $S$.

For a matrix $A\in\mb{F}^{m\times n}$, let $A_{i\star}\in\mb{F}^n$ be the $i$-th row of $A$ and let $A_{\star j}\in\mb{F}^m$ be the $j$-th column of $A$. Given a vector space $V$, let $\mc{L}(V)$ be the set of (linear) subspaces of $V$.

We also adopt standard notation from asymptotic analysis: as $x\to\infty$, we write  $f(x)=o(g(x))$ if $f(x)/g(x)\to 0$;  $f(x)=O(g(x))$ or $g(x)=\Omega(f(x))$ if there exists a finite, positive constant $C$ such that $f(x)\le Cg(x)$ for all sufficiently large $x$.

We use $h_q:[0,1] \to [0,1]$ and $h_q^{-1}$ to denote the $q$-ary entropy function and its inverse, i.e.
\[ h_q(x) := x \log_q(q-1) + x \log_q\left(\frac{1}{x}\right) + (1 - x) \log_q\left( \frac{1}{1 - x}\right).\]

\section{Row-span Constrained Local Properties}
\label{sec:gen-lcl}
\subsection{Generalizing Local Profiles}
We begin by introducing some notation from \cite{lms25,bcdz25b}, and explaining how we generalize it to our row-span constrained versions.  As described in \cref{sec:overview}, a \emph{local profile} gives a set of coordinate-wise linear constraints that a ``bad'' witness matrix $M$ might satisfy.
\begin{definition}[$r$-Local Profile, \cite{lms25}]\label{def: local profile}
Let $\FF_q$ be a finite field of $q$ elements and let $n$ and $r$ be positive integers. A tuple
    $\mathbf{V} = (V_1, \cdots, V_n)\in \mathcal L(\FF_q^r)^n$ is called a $r$-local profile.
\end{definition}

\noindent Next, we define what it means for a code to \emph{contain} a local profile (with a particular row-span).
\begin{definition}[Containing a local profile, \cite{lms25}]\label{def: containing a local profile}
    A code $C \subset  \FF_q^n$ is said to \emph{contain} a local profile $\mathbf{V}$ with span $U\in \mathcal L(\FF_q^r)$ (often written ``$C$ \emph{contains} $(\mathbf{V}, U)$'') if there exists an $M\in \FF_q^{n\times r}$ such that:
    \begin{itemize}
        \item each column of $M$ is contained in $C$; 
        \item the $i$-th row of $M$ is contained in $V_i$; and
        \item the row-span of $M$ is $U$.
    \end{itemize}
We call such an $M$ a \emph{witness matrix}, and columns of $M$ \emph{witness columns}.
\end{definition}

In \cite{lms25}, an LCL property is given by a collection $\mathcal{F}$ of $r$-local profiles $\mathbf{V}$.  
 Our key generalization is to also record the corresponding row-spans in the family.  That is, $\mathcal{F}$ is no longer a collection of profiles $\mathbf{V}$; now it is a collection of \emph{pairs} $(\mathbf{V}, U)$, where $U$ constrains the row-span of a witness matrix.

\begin{definition}[Row-span Constrained LCL Property]\label{def:rsclp}
A \emph{row-span constrained $r$-local LCL family} $\mc{F}$ is a collection of \emph{$r$-local pairs} $(\mbf{V},U)$ where $\mbf{V}=(V_1, \dots, V_n)\in\mc{L}(\mb{F}_q^r)^n$ is an $r$-local profile and $U\in\mc{L}(\mb{F}_q^r)$ is the row-span constraint of the witness matrix. 

A code $C$ \emph{contains} $\mc{F}$ if it contains some pair $(\mbf{V},U)\in\mc{F}$.
It \emph{avoids} $\mathcal F$ otherwise.
 We say that the \emph{row-span constrained $r$-local property} given by $\mathcal{F}$ is the property of \emph{avoiding} $\mathcal{F}$.\footnote{We note that other sources, like \cite{lms25}, define an $r$-local property to be the property of \emph{containing} $\mathcal{F}$.  Since we are concerned with the property of avoiding a family $\mathcal{F}$, we define it this way.}
\end{definition}
\begin{remark}
\cref{def:rsclp} generalizes the definition of an LCL property in~\cite{lms25}, in the sense that they restrict the columns of their witness matrices $M$ to be distinct; as they observe, this is the same as restricting the row-span of the witness matrices to lie in $\mathcal{L}_{\mathrm{dist}}(\mathbb{F}_q^r)$, where $ \mc{L}_{\on{dist}}(\mb{F}_q^r)$ denotes all subspaces $U$ of $\mb{F}_q^r$ so that no two distinct coordinates $i, j\in [r]$ are identically equal on $U$. 
\end{remark}

Next, we define what it means for an $\mathbb{F}_q$-additive code with alphabet $\Sigma = \mathbb{F}_q^s$ to contain $(\mathbf{V}, U)$.

\begin{definition}[Folded containment]\label{def:folded-profile}
Let $C\subset(\mb{F}_q^s)^n$ be an $\mathbb{F}_q$-additive code. For an
$r$-local profile $\mbf{V}=(V_1,\ldots,V_n) \in (\mathcal{L}(\mathbb{F}_q))^n$, define the \emph{duplicated profile}
$\mbf{V}^{\otimes s} \in (\mathcal{L}(\mathbb{F}_q))^{sn}$ by
\begin{equation}\label{eq:tensor-profile}
	{V}^{\otimes s}_{(i,j)}=V_i
\end{equation}
for any $i\in[n]$ and $j\in[s]$.
We say that $C$ contains $(\mbf{V},U)$ if the ``unfolded'' version 
$C \subset \mb{F}_q^{sn}$
contains $(\mbf{V}^{\otimes s},U)$. 
Here, by the ``unfolded code,'' we mean to embed $(\mathbb{F}_q^s)^n$ in $\mathbb{F}_q^{sn}$ in the natural way.

For a family $\mathcal F$ of pairs, define
\[
	\mathcal F^{\otimes s}
	=
	\{(\mbf{V}^{\otimes s},U):(\mbf{V},U)\in\mathcal F\}.
\]
Thus $\mathbb{F}_q$-additive code $C\subset(\mb{F}_q^{s})^n$ contains $\mc{F}$ if and only if the ``unfolded'' version $C\subset\mb{F}_q^{sn}$ contains
$\mathcal F^{\otimes s}$.
\end{definition}
\begin{remark}\label{rem:unfolding-confusion}
Note that $\mc{F}^{\otimes s}$ does not encode the same property as $\mc{F}$ applied to unfolded codes. For example, proximity is not preserved: being far as an $\mathbb{F}_q$-additive code over $\Sigma = \mathbb{F}_q^s$ does not mean it is far when we unfold, since we can change one coordinate per block of $s$. The onus is on the property proving step, to show for any additive code, the desired property profiles take the form \cref{eq:tensor-profile}
for every $s$.
\end{remark}

Now, we define the potential function and the threshold of row-span constrained $r$-local LCL properties identically as \cite{lms25}, up to slightly modified notational conventions and normalization by $n$. We emphasize that these definitions are code-independent.

\begin{definition}[Potential and Threshold, \cite{lms25}]
For an $r$-local pair $(\mbf{V}, U)$, and a real number $R$, define the \emph{potential function}
\begin{equation}
    \phi_\mbf{V}(U,R)
	\coloneqq
	(R-1)\dim U+\mb{E}_{i\in[n]}\dim(U\cap V_i).
\end{equation}
For $W\subsetneq U$, define
\begin{equation}
	R_\mbf{V}(U,W)
	\coloneqq
	1-
	\mathbb E_{i\in[n]}
	\left[
	\frac{
	\dim(U\cap V_i)-\dim(W\cap V_i)
	}{
	\dim U-\dim W
	}
	\right].
\end{equation}
Define \emph{threshold} of the $r$-local pair $(\mbf{V}, U)$ as
\begin{equation}
	R_\mbf{V}(U)
	\coloneqq
	\max_{W\subsetneq U}R_\mbf{V}(U,W).
\end{equation}
For a row-span constrained $r$-local LCL property, define
\begin{equation}\label{eq:Rf}
	R_{\mc{F}}
	\coloneqq
	\min_{(\mbf{V},U)\in\mc{F}}R_\mbf{V}(U).
\end{equation}
\end{definition}
The only difference between the definition above and that in \cite{lms25} is the fact that the minimum in \eqref{eq:Rf} is over profile-subspace pairs $(\mathbf{V},U) \in \mathcal{F}$ rather than over profiles $\mathbf{V} \in \mathcal{F}$.

Next, we unpack the definitions to interpret the threshold as a root of the potential function.

\begin{lemma}[Threshold interpretation]
\label{lem:threshold-interpretation}
For every $r$-local profile $\mbf{V}$, $W\subsetneq U\in\mc{L}(\mb{F}_q^r)$, and $R\in\mb{R}$,
\begin{equation}\label{eq:threshold-potential}
	\phi_{\mbf{V}}(U,R)-\phi_{\mbf{V}}(W,R)
	=
	(\dim U-\dim W)(R-R_{\mbf{V}}(U,W)).
\end{equation}
Let $\mc{F}$ be a row-span constrained $r$-local LCL family
and let $\eta\ge 0$. We observe that
\begin{enumerate}
	\item If $R\le R_{\mc{F}}-\eta$, then for every $(\mbf{V},U)\in\mc{F}$
	there exists $W\subsetneq U$ such that
	\begin{equation}
		\phi_{\mbf{V}}(U,R)-\phi_{\mbf{V}}(W,R)\le -\eta.
	\end{equation}
	\item If $R\ge R_{\mc{F}}+\eta$, then there exists $(\mbf{V},U)\in\mc{F}$
	such that for every $W\subsetneq U$,
	\begin{equation}
		\phi_{\mbf{V}}(U,R)-\phi_{\mbf{V}}(W,R)\ge \eta.
	\end{equation}
\end{enumerate}
\end{lemma}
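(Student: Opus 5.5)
The identity \eqref{eq:threshold-potential} follows by direct expansion, and both items then follow by choosing the right minimizer or maximizer in the definitions of $R_{\mc{F}}$ and $R_{\mbf{V}}(U)$.

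For the identity, write $D=\dim U-\dim W>0$, which is positive since $W\subsetneq U$. By definition,
\begin{equation*}
\phi_{\mbf{V}}(U,R)-\phi_{\mbf{V}}(W,R)=(R-1)D+\mb{E}_{i\in[n]}\bigl[\dim(U\cap V_i)-\dim(W\cap V_i)\bigr].
\end{equation*}
The definition of $R_{\mbf{V}}(U,W)$ gives
\begin{equation*}
\mb{E}_{i}\bigl[\dim(U\cap V_i)-\dim(W\cap V_i)\bigr]=D\bigl(1-R_{\mbf{V}}(U,W)\bigr).
\end{equation*}
Substituting, the right-hand side becomes $D(R-1)+D(1-R_{\mbf{V}}(U,W))=D(R-R_{\mbf{V}}(U,W))$, which is \eqref{eq:threshold-potential}. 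This uses only linearity of expectation.

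For item 1, fix $(\mbf{V},U)\in\mc{F}$. Since $R_{\mc{F}}$ is the minimum over $\mc{F}$, we have $R_{\mbf{V}}(U)\ge R_{\mc{F}}\ge R+\eta$. The maximum defining $R_{\mbf{V}}(U)$ ranges over a finite nonempty set (subspaces of $\mb{F}_q^r$), so it is attained at some $W\subsetneq U$ with $R_{\mbf{V}}(U,W)=R_{\mbf{V}}(U)$. Then $R-R_{\mbf{V}}(U,W)\le-\eta$, and since $D\ge1$ and $-\eta\le 0$,
\begin{equation*}
\phi_{\mbf{V}}(U,R)-\phi_{\mbf{V}}(W,R)=D\bigl(R-R_{\mbf{V}}(U,W)\bigr)\le -\eta D\le-\eta.
\end{equation*}

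For item 2, $\mc{F}$ is finite, so the minimum $R_{\mc{F}}$ is attained at some $(\mbf{V},U)$ with $R_{\mbf{V}}(U)=R_{\mc{F}}\le R-\eta$. For every $W\subsetneq U$ we have $R_{\mbf{V}}(U,W)\le R_{\mbf{V}}(U)$, so $R-R_{\mbf{V}}(U,W)\ge\eta\ge0$. Multiplying by $D\ge1$ gives a difference of at least $\eta$.

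The only delicate points are the following.
\begin{itemize}
\item $U\neq\{0\}$ is needed so that a proper subspace $W$ exists and the maximum is nonempty. If $U=\{0\}$, I would adopt the convention $R_{\mbf{V}}(\{0\})=-\infty$; item 2 then holds vacuously and item 1 cannot be triggered for finite $R$.
\item The inequalities must be scaled by $D\ge1$ in the correct direction, which is where the sign of $\eta\ge0$ matters.
\end{itemize}
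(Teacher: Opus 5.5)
Your proof is correct and follows essentially the same route as the paper: expand the potential difference and factor out $\dim U-\dim W$ to get the identity, then for item 1 use $R_{\mc{F}}\le R_{\mbf{V}}(U)=\max_{W\subsetneq U}R_{\mbf{V}}(U,W)$, and for item 2 take the pair in $\mc{F}$ attaining the minimum, scaling by $\dim U-\dim W\ge 1$ in each case. Your remarks on attainment by finiteness and the convention for $U=\{0\}$ are harmless additions that the paper leaves implicit.
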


\begin{proof}
To prove \cref{eq:threshold-potential}, we expand the left-hand side to obtain
\begin{equation}
\begin{aligned}
\phi_{\mbf{V}}(U,R)-\phi_{\mbf{V}}(W,R)
&=
(R-1)(\dim U-\dim W) \\
&\quad+
\mb{E}_{i\in[n]}
\left[
\dim(U\cap V_i)-\dim(W\cap V_i)
\right] \\
&=
(\dim U-\dim W)
\left(
R-1+
\mb{E}_{i\in[n]}
\left[
\frac{\dim(U\cap V_i)-\dim(W\cap V_i)}
{\dim U-\dim W}
\right]
\right) \\
&=
(\dim U-\dim W)(R-R_{\mbf{V}}(U,W)).
\end{aligned}
\end{equation}
For (1), fix $(\mbf{V},U)\in\mc{F}$. Since
\begin{equation}
	R_{\mc{F}}\le R_{\mbf{V}}(U)=\max_{W\subsetneq U}R_{\mbf{V}}(U,W),
\end{equation}
there exists $W\subsetneq U$ such that
$R\le R_{\mbf{V}}(U,W)-\eta$.
Hence
\begin{equation}
\phi_{\mbf{V}}(U,R)-\phi_{\mbf{V}}(W,R)
=
(\dim U-\dim W)(R-R_{\mbf{V}}(U,W)) \le
-\eta(\dim U-\dim W)
\le -\eta.
\end{equation}
For (2), choose $(\mbf{V},U)\in\mc{F}$ with
$R_{\mbf{V}}(U)=R_{\mc{F}}$. Then for every $W\subsetneq U$,
\begin{equation}
	R\ge R_{\mc{F}}+\eta=R_{\mbf{V}}(U)+\eta\ge R_{\mbf{V}}(U,W)+\eta.
\end{equation}
Therefore
\begin{equation}
\phi_{\mbf{V}}(U,R)-\phi_{\mbf{V}}(W,R)
=
(\dim U-\dim W)(R-R_{\mbf{V}}(U,W)) \ge
\eta(\dim U-\dim W)
\ge \eta,
\end{equation}
as desired.
\end{proof}
We also see that by our definition of profile containment for additive (rather than linear) codes, the threshold rate $R_V$ is preserved for profiles taking tensor form in \cref{eq:tensor-profile} as $s$ changes. 
Intuitively, this is because $(U, \mathbf{V}, W)$ are all independent of $s$, so potential function
$
 \phi_{\mathbf{V}^{\otimes s}}(U, R)=\phi_{\mathbf{V}}(U, R)$.
Then, $R_{\mathbf{V}^{\otimes s}}(U, W)$ is simply its root. We verify this directly.

\begin{proposition}\label{prop:unfolding}
Let $(\mbf{V}, U)$ be any $r$-local pair on plain codes in $\mb{F}_q^n$ and let $W\subsetneq U$.
Then, for every $s\in\mb{N}$
\begin{equation}
R_{V^{\otimes s}}(U,W)=R_V(U,W).
\end{equation}
Consequently, for any row-span constrained local LCL family, $R_{\mc{F}}=R_{\mc{F}^{\otimes s}}$ is independent of $s$.
\end{proposition}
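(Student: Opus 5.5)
The plan is a direct computation from the definition of $R_{\mbf{V}}(U,W)$. The expression is an average over coordinates of
\[
\frac{\dim(U\cap V_i)-\dim(W\cap V_i)}{\dim U-\dim W},
\]
so it suffices to show that the coordinate-average for $\mbf{V}^{\otimes s}$ on $sn$ coordinates equals the coordinate-average for $\mbf{V}$ on $n$ coordinates. Throughout, $U$ and $W$ are the same subspaces of $\mb{F}_q^r$ in both cases, because duplication only changes the profile. The denominator $\dim U-\dim W$ is therefore literally unchanged.

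For the numerator, I will index the coordinates of the unfolded code by pairs $(i,j)\in[n]\times[s]$ and use $V^{\otimes s}_{(i,j)}=V_i$ from \cref{eq:tensor-profile}. Then
\[
\mb{E}_{(i,j)\in[n]\times[s]}\bigl[\dim(U\cap V^{\otimes s}_{(i,j)})-\dim(W\cap V^{\otimes s}_{(i,j)})\bigr]
=\frac{1}{sn}\sum_{i=1}^n\sum_{j=1}^s\bigl[\dim(U\cap V_i)-\dim(W\cap V_i)\bigr].
\]
The summand does not depend on $j$, so the inner sum contributes a factor of $s$ that cancels the $1/s$. This leaves $\mb{E}_{i\in[n]}[\dim(U\cap V_i)-\dim(W\cap V_i)]$. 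Dividing by $\dim U-\dim W$ and subtracting from $1$ gives $R_{\mbf{V}^{\otimes s}}(U,W)=R_{\mbf{V}}(U,W)$. The same computation shows $\phi_{\mbf{V}^{\otimes s}}(U,R)=\phi_{\mbf{V}}(U,R)$, which matches the remark preceding the proposition.

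For the consequence, note that $R_{\mbf{V}}(U)$ is a maximum of $R_{\mbf{V}}(U,W)$ over $W\subsetneq U$. The index set of this maximum depends only on $U$, so termwise equality gives $R_{\mbf{V}^{\otimes s}}(U)=R_{\mbf{V}}(U)$. Next, $\mc{F}^{\otimes s}$ is the image of $\mc{F}$ under the map $(\mbf{V},U)\mapsto(\mbf{V}^{\otimes s},U)$. Taking the minimum of the same values over these corresponding pairs therefore gives $R_{\mc{F}^{\otimes s}}=R_{\mc{F}}$. If two pairs have the same image, they contribute the same value, so this does not affect the minimum.

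There is no real obstacle here. The only step requiring care is the bookkeeping: making sure the normalization is the expectation over the $sn$ unfolded coordinates, and that $U,W$ live in $\mb{F}_q^r$ independently of $s$, so the $s$-fold duplication cancels exactly.
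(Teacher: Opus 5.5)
Your proposal is correct and follows the paper's proof essentially verbatim. You index the unfolded coordinates by $(i,j)\in[n]\times[s]$, use $V^{\otimes s}_{(i,j)}=V_i$ to collapse the expectation over $[n]\times[s]$ to the expectation over $[n]$, and then take the maximum over $W\subsetneq U$. You also spell out the final minimum over $\mc{F}^{\otimes s}$, which the paper leaves implicit, and that step is sound as well.
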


\begin{proof}
Fix any $W\subsetneq U$. By definition,
the length-$sn$ profile $\mathbf{V}^{\otimes s}$ is given by
$V^{\otimes s}_{(i,j)}\coloneqq V_i$ and indexed by $(i, j)\in[n]\times [s]$. Then, we see that 
\begin{align*}
R_{\mathbf{V}^{\otimes s}}(U,W)
& =
1-
\mb{E}_{(i,j)\in[n]\times[s]}
\left[
\frac{
\dim(U\cap {V}^{\otimes s}_{(i,j)})
-
\dim(W\cap {V}^{\otimes s}_{(i,j)})
}{
\dim U-\dim W
}
\right].
\\ & =
1-
\mb{E}_{i\in[n]}
\left[
\frac{
\dim(U\cap V_i)
-
\dim(W\cap V_i)
}{
\dim U-\dim W
}
\right] 
\\ & =
R_\mathbf{V}(U,W).
\end{align*}
Maximizing over $U$ on both sides gives
$\max_{W\subsetneq U}R_{\mathbf{V}^{\otimes s}}(U,W)=\max_{W\subsetneq U}R_\mathbf{V}(U,W)$, as desired.
\end{proof}
\subsection{Behavior Under Quotients}
We next record how local profile containment behaves under quotient maps.

\begin{remark}[Comparison to \cite{ggh26}]  In \cite{ggh26}, the spaces $V_i, U$ can lie in an arbitrary vector space over $\mb{F}_q$.  In contrast, our definition of $r$-local pairs require $V_i, U\subset\mb{F}_q^r$ and is thus more concrete. However, this is at the expense of a less elegant quotient behavior: since our profiles are coordinate-dependent, a quotient by
$W\subsetneq U$ is implemented by choosing a surjective linear map
\begin{equation}
\pi:\mb{F}_q^r\to\mb{F}_q^{r'}\quad\text{where}\quad \ker \pi =W.
\end{equation}
This is the same as quotienting by $W$ and choosing coordinates
on the quotient.
\end{remark}

\begin{definition}[Quotient profile]
Let $\mbf{V}=(V_1,\ldots,V_n)$ be an $r$-local pair and let
$\pi:\mb{F}_q^r\to\mb{F}_q^{r'}$ be a surjective linear map, so $r'\le r$. Define the $r'$-local profile
\begin{equation}
	\pi\mbf{V}\coloneqq(\pi V_1,\ldots,\pi V_n).
\end{equation}
For an $r$-local pair $(\mbf{V}, U)$ and surjective linear map $\pi$, if $W=\ker \pi\subsetneq U\subset \mb{F}_q^r$, then we say $(\pi\mbf{V}, \pi U)$ is a quotient pair of $(\mbf{V}, U)$. Then, the condition $W\subsetneq U$ is equivalent to $\pi U\ne 0$.
\end{definition}

While the locality is changed in a quotient pair which in particular means it is no longer in $\mc{F}$, we will see next that the potential, threshold, and code-containment are well-behaved.

\begin{lemma}[Quotient Containment]\label{lem: quotienting preserves containment}
    Suppose code $C$ contains $r$-local pair $(\mbf{V}, U)$ and $\pi: \FF_q^r\to \FF_q^{r'}$ is a surjective linear map such that $\pi U\ne 0$. Then, $C$ contains $(\pi\mbf{V}, \pi U)$.
\end{lemma}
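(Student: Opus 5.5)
The plan is to push the original witness matrix through $\pi$ on the right, i.e.\ act on its rows. Since $C$ contains $(\mbf{V},U)$, there is a witness $M\in\mb{F}_q^{n\times r}$ whose columns lie in $C$, whose $i$-th row $M_{i\star}$ lies in $V_i$, and whose row-span is $U$. Identify $\pi$ with its matrix $P\in\mb{F}_q^{r'\times r}$, so that $\pi(x)=Px$ for column vectors $x\in\mb{F}_q^r$. The candidate witness for $(\pi\mbf{V},\pi U)$ is then
\begin{equation}
M'\coloneqq MP^{\top}\in\mb{F}_q^{n\times r'},
\end{equation}
whose $i$-th row is $\pi(M_{i\star})$.

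First, the columns. Each column of $M'$ is an $\mb{F}_q$-linear combination of the columns of $M$; concretely, $M'_{\star k}=\sum_{j} P_{kj}M_{\star j}$. Since $C$ is linear, every such column lies in $C$. In the folded or additive setting we apply the same argument to the unfolded code, where it goes through verbatim.

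Second, the row constraints. For every $i$ we have $M'_{i\star}=\pi(M_{i\star})$, and since $M_{i\star}\in V_i$ this gives $M'_{i\star}\in\pi V_i$.

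Third, the row-span. The row-span of $M'$ is the span of the vectors $\pi(M_{i\star})$, which equals $\pi(\operatorname{span}\{M_{i\star}\})=\pi U$ by linearity of $\pi$. The hypothesis $\pi U\neq 0$ is not needed for the containment itself. It only guarantees that the quotient pair is nondegenerate, matching the definition of a quotient pair.

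No step here is a real obstacle; the argument is a direct verification. The only care needed is bookkeeping between row vectors and the column-vector convention for $\pi$, so that the rows of $M'$ really are $\pi$ applied to the rows of $M$.
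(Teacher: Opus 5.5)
Your proposal is correct and follows the paper's proof: take the witness $M$, set $M'=M\pi^{\mathsf T}$, and check that its columns are linear combinations of codewords, that its rows $\pi(M_{i\star})$ lie in $\pi V_i$, and that its row-span is $\pi U$. Your side remarks also hold: the hypothesis $\pi U\neq 0$ is never used in the verification, and the additive case goes through by unfolding because $\pi$ applied to $\mbf{V}^{\otimes s}$ gives exactly $(\pi\mbf{V})^{\otimes s}$.
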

\begin{proof}
Let $M\in\mb{F}_q^{n\times r}$ witness that $C$ contains $(\mbf{V},U)$. Write
$\pi$ as an $r'\times r$ matrix, and define
\begin{equation}
	M'\coloneqq M\pi^{\mathsf T}\in\mb{F}_q^{n\times r'}.
\end{equation}
Every column of $M'$ is a linear combination of the columns of $M$, hence lies
in $C$. For every $i\in[n]$,
\begin{equation}
	M'_{i\star}=\pi(M_{i\star})\in \pi V_i.
\end{equation}
Finally, the row-span of $M'$ is $\pi U$,
so $M'$ witnesses that $C$ contains $(\pi\mbf{V},\pi U)$.
\end{proof}

We remark that this definition of code containment works for all folding $s$.

\begin{lemma}[Quotient Potential]\label{lem:quotienting}
Let $\pi:\mb{F}_q^r\to\mb{F}_q^{r'}$ be a surjective linear map with
kernel $W\subsetneq U\in\mc{L}(\mb{F}_q^r)$. Then, for any $R\in\mb{R}$,
\begin{equation}
	\phi_{\pi\mbf{V}}(\pi U,R)
	=
	\phi_{\mbf{V}}(U,R)-\phi_{\mbf{V}}(W,R).
\end{equation}
\end{lemma}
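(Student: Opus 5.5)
The plan is to expand both sides using the definition $\phi_{\mbf{V}}(U,R)=(R-1)\dim U+\mb{E}_{i\in[n]}\dim(U\cap V_i)$ and compare the two terms separately. Throughout, $\pi U$ is viewed as a subspace of $\mb{F}_q^{r'}$ and $\pi\mbf{V}=(\pi V_1,\ldots,\pi V_n)$, so
\[
\phi_{\pi\mbf{V}}(\pi U,R)=(R-1)\dim \pi U+\mb{E}_{i\in[n]}\dim(\pi U\cap \pi V_i).
\]
It therefore suffices to prove two identities. The first is the dimension identity $\dim \pi U=\dim U-\dim W$. The second is the coordinatewise identity $\dim(\pi U\cap\pi V_i)=\dim(U\cap V_i)-\dim(W\cap V_i)$ for every $i$. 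Averaging the second over $i$ and adding $(R-1)$ times the first then gives exactly $\phi_{\mbf{V}}(U,R)-\phi_{\mbf{V}}(W,R)$.

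The first identity is rank–nullity for $\pi|_U$. Its kernel is $\ker\pi\cap U=W\cap U=W$, since $W\subsetneq U$.

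The second identity is the step needing care, because in general $\pi U\cap \pi V_i$ can be strictly larger than $\pi(U\cap V_i)$. I would show equality here using $\ker\pi=W\subseteq U$. Take $y=\pi u=\pi v$ with $u\in U$ and $v\in V_i$. Then $u-v\in W\subseteq U$, so $v=u-(u-v)\in U$, and hence $v\in U\cap V_i$ and $y\in\pi(U\cap V_i)$. The reverse inclusion is trivial, so $\pi U\cap\pi V_i=\pi(U\cap V_i)$. Applying rank–nullity to $\pi|_{U\cap V_i}$, whose kernel is $W\cap U\cap V_i=W\cap V_i$, gives $\dim\pi(U\cap V_i)=\dim(U\cap V_i)-\dim(W\cap V_i)$.

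Combining the two identities yields the claimed equality for every real $R$. Note that the argument is exactly where the hypothesis $\ker\pi\subseteq U$ is used. The main obstacle, namely the intersection-commutes-with-projection step, is resolved by that hypothesis alone; surjectivity of $\pi$ plays no role beyond fixing coordinates on the quotient.
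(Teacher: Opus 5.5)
Your proposal is correct and follows essentially the paper's argument: both split $\phi_{\pi\mbf{V}}(\pi U,R)$ into $\dim\pi U=\dim U-\dim W$ plus a coordinatewise rank–nullity computation, and both rely only on $\ker\pi=W\subseteq U$. The one difference is cosmetic. You verify $\pi U\cap\pi V_i=\pi(U\cap V_i)$ by a direct element chase, whereas the paper writes this space as $\pi(U\cap(V_i+W))=\pi(U\cap V_i+W)$ via the modular law, which amounts to the same thing.
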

\begin{proof}
As $W\subset U$, $\dim (\pi U)=\dim U-\dim W$.
For any $i\in [n]$, we compute via rank-nullity that
\begin{equation}
	\dim(\pi U\cap \pi V_i)
	=
	\dim(\pi(U\cap (V_i+W)))=\dim (U\cap V_i+W)-\dim W=\dim (U\cap V_i)-\dim  (W\cap V_i)
\end{equation}
by repeatedly noting $\ker\pi=W\subset U$. Now, substituting into the definition of the potential gives
\begin{equation}
\begin{aligned}
\phi_{\pi\mbf{V}}(\pi U,R)
&=
(R-1)\dim (\pi U)+\mb{E}_{i\in[n]}\dim(\pi U\cap \pi V_i) \\
&=
(R-1)(\dim U-\dim W)+
\mb{E}_{i\in[n]}
\left[
\dim(U\cap V_i)-\dim(W\cap V_i)
\right] \\
&=
\phi_{\mbf{V}}(U,R)-\phi_{\mbf{V}}(W,R),
\end{aligned}
\end{equation}
as desired.
\end{proof}

Finally, we present a useful corollary of the low-rate code profile containment via the quotient characterization. We emphasize that this is code-independent, and reduces the low-rate directions of the threshold proofs of various codes into simple code-specific counting. 
\begin{lemma}[Maximal Quotient]
\label{lem:key-factored}
If a code $C$ contains an $r$-local pair $(\mbf{V}, U)$ and $R\le R_\mbf{V}(U)-\eta$ for some $R, \eta\ge 0$, then there exists a surjective linear map $\pi: \mb{F}_q^r\to\mb{F}_q^{r'}$ such that 
\begin{itemize}
    \item $W\coloneqq \ker \pi\subsetneq U$,
    \item $C$ contains $r'$-local pair $(\pi\mbf{V}, \pi U)$,
    \item $\phi_{\pi\mbf{V}}(\pi U, R)\le -\eta\dim \pi U$.
\end{itemize}
Moreover, we can choose $\pi$ such that $\phi_{\pi\mbf{V}}(U', R)\le-\eta\dim U'$ for all nonzero subspaces $U'\subset  \pi U$.
\end{lemma}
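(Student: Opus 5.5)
The plan is to pick $W$ extremally, namely to minimize the normalized potential difference, and then read off the three bullets from \cref{lem:threshold-interpretation}, \cref{lem: quotienting preserves containment} and \cref{lem:quotienting}.

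\textbf{Choice of $W$.} Among all proper subspaces $W\subsetneq U$, I will choose one minimizing
\[
R-R_{\mbf{V}}(U,W)=\frac{\phi_{\mbf{V}}(U,R)-\phi_{\mbf{V}}(W,R)}{\dim U-\dim W},
\]
where the equality is \cref{eq:threshold-potential}. The set of such $W$ is finite, so the minimum exists. If several $W$ attain it, I take one of \emph{largest} dimension; this tie-break is what makes the ``moreover'' clause work. Since $R\le R_{\mbf{V}}(U)-\eta=\max_W R_{\mbf{V}}(U,W)-\eta$, the minimum value $\lambda$ of this ratio satisfies $\lambda\le-\eta$.

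\textbf{The three bullets.} Let $\pi:\mb{F}_q^r\to\mb{F}_q^{r'}$ be any surjection with $\ker\pi=W$, for example by choosing coordinates on $\mb{F}_q^r/W$. The first bullet, $W\subsetneq U$, holds by construction, so $\pi U\neq 0$. The second bullet, that $C$ contains $(\pi\mbf{V},\pi U)$, is then exactly \cref{lem: quotienting preserves containment}. For the third bullet, \cref{lem:quotienting} gives
\[
\phi_{\pi\mbf{V}}(\pi U,R)=\phi_{\mbf{V}}(U,R)-\phi_{\mbf{V}}(W,R)=\lambda(\dim U-\dim W)=\lambda\dim\pi U\le-\eta\dim\pi U.
\]

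\textbf{The ``moreover'' clause.} This is the step needing care. Take a nonzero $U'\subset\pi U$ and set $\tilde U=\pi^{-1}(U')\cap U$, so that $W\subsetneq\tilde U\subseteq U$ and $\pi\tilde U=U'$. Restricting $\pi$ to $\tilde U$, the computation in the proof of \cref{lem:quotienting} is coordinatewise in $V_i$, so it applies to any subspace containing $\ker\pi$. It gives
\[
\phi_{\pi\mbf{V}}(U',R)=\phi_{\mbf{V}}(\tilde U,R)-\phi_{\mbf{V}}(W,R).
\]
I then split into two cases.

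\emph{Case $\tilde U=U$.} Then $U'=\pi U$ and we are done by the third bullet.

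\emph{Case $\tilde U\subsetneq U$.} Here $\tilde U$ is a competitor in the minimization, so
\[
\phi_{\mbf{V}}(U,R)-\phi_{\mbf{V}}(\tilde U,R)\ge\lambda(\dim U-\dim\tilde U).
\]
Moreover the inequality is strict, because $\dim\tilde U>\dim W$ and $W$ was chosen to have largest dimension among minimizers. Subtracting this from $\phi_{\mbf{V}}(U,R)-\phi_{\mbf{V}}(W,R)=\lambda(\dim U-\dim W)$ yields
\[
\phi_{\mbf{V}}(\tilde U,R)-\phi_{\mbf{V}}(W,R)<\lambda(\dim\tilde U-\dim W)=\lambda\dim U'\le-\eta\dim U'.
\]
This proves the claim. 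Strictness is not actually needed, since the non-strict inequality already suffices, so the tie-break is only a safety margin.

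The main obstacle is exactly this extremal choice of $W$: minimizing the normalized potential, rather than merely taking any $W$ from \cref{lem:threshold-interpretation}, is what makes every sub-quotient inherit the bound. The remaining checks are the correspondence $\tilde U\leftrightarrow U'$ and the quotient-potential identity for intermediate subspaces.
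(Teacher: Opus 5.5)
Your proof is correct. The skeleton matches the paper's: quotient by an extremal $W$, apply \cref{lem:quotienting} to $\tilde U=\pi^{-1}(U')\cap U$, and treat $U'=\pi U$ via the third bullet. The difference is the extremality criterion.

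\textbf{The paper's choice.} It takes $W$ inclusion-maximal among proper subspaces with $\phi_{\mathbf V}(W,R+\eta)\ge\phi_{\mathbf V}(U,R+\eta)$. Its displayed condition $\phi_{\mathbf V}(W,R)\ge\phi_{\mathbf V}(U,R)+\eta$ must be read this way. Taken literally, it only gives $-\eta$ rather than $-\eta\dim\pi U$ in the third bullet, and it does not support the later maximality step. For an intermediate $\tilde U$, maximality then forces $\phi_{\mathbf V}(\tilde U,R+\eta)<\phi_{\mathbf V}(U,R+\eta)\le\phi_{\mathbf V}(W,R+\eta)$. In that argument, inclusion-maximality is essential.

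\textbf{Your choice.} You take $W$ minimizing the slope $R-R_{\mathbf V}(U,W)$, i.e.\ a maximizer of $R_{\mathbf V}(U,W)$. You then get the ``moreover'' clause by subtracting the slope inequality for $\tilde U$ from the equality for $W$, and as you observe, no tie-break is needed.

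\textbf{What your choice buys.} Your $\pi$ does not depend on $\eta$. You also get directly the sharper uniform bound $\phi_{\pi\mathbf V}(U',R)\le (R-R_{\mathbf V}(U))\dim U'$ for every nonzero $U'\subset\pi U$. The paper's version recovers this only by running the argument with $\eta=R_{\mathbf V}(U)-R$.
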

\begin{proof}From assumption, we know that there exists subspace $W\subsetneq U$ such that $	R\le R_{\mbf{V}}(U,W)-\eta$. By \cref{lem:threshold-interpretation}, we know that
\begin{equation}
\phi_{\mbf{V}}(U,R)-\phi_{\mbf{V}}(W,R)
=(\dim U-\dim W)(R-R_{\mbf{V}}(U,W))
\le\phi_{\mbf{V}}(U,R)-\phi_{\mbf{V}}(W,R)
\le -\eta.
\end{equation}
Choose $W\subsetneq U$ maximal under inclusion among those satisfying %
\begin{equation}
	\phi_{\mbf{V}}(W,R)\ge \phi_{\mbf{V}}(U,R)+\eta.
\end{equation}
Let $\pi:\mb{F}_q^r\to\mb{F}_q^{r'}$ be any surjective linear map with kernel $W\subsetneq U$.
Because $C$ contains $(\mbf{V},U)$, \cref{lem: quotienting preserves containment}
implies that $C$ contains $(\pi\mbf{V},\pi U)$. By \cref{lem:quotienting},
\begin{equation}
	\phi_{\pi\mbf{V}}(\pi U,R)
	=
	\phi_{\mbf{V}}(U,R)-\phi_{\mbf{V}}(W,R)
	\le
	-\eta\dim (\pi U).
\end{equation}
This proves the three bullet points.

For the last conclusion, pick any nonzero subspace $U'$ of $\pi U$.
Let $\widetilde U\coloneqq \pi^{-1}(U')\cap U$.
Then
\begin{equation}
	W\subsetneq \widetilde{U}\subsetneq U.
\end{equation}
The containment is strict on the left because $U'\ne 0$, and strict on the
right because $U'\subsetneq \pi U$.
By \cref{lem:quotienting} applied to $\widetilde U$ whose image under $\pi$ is $U'$, we have that
\begin{equation}\label{eq:moreover}
	\phi_{\pi\mbf{V}}(U',R)
	=
	\phi_{\mbf{V}}(\widetilde U,R)-\phi_{\mbf{V}}(W, R) = \phi_{\mbf{V}}(\widetilde U,R+\eta)-\phi_{\mbf{V}}(W, R+\eta) -\eta(\dim (\tilde U)-\dim W).
\end{equation}
By maximality of $W$, the strictly larger subspace $\widetilde{U}$ must satisfy
\begin{equation}
	\phi_{\mbf{V}}(\widetilde U,R+\eta)\leq\phi_{\mbf{V}}(W,R+\eta)
\end{equation} and $\dim \tilde U -\dim W  = \dim U'$.
Therefore, \cref{eq:moreover} is at most $-\eta \dim U'$, proving the last statement.
\end{proof}

\subsection{Thresholds for Random Ensembles of Codes}
In this section, we prove that $R_{\mbf{V}}(U)$ is the threshold rate for various random ensembles of codes $C$ to contain $(\mbf{V}, U)$. Then, we can lift it to row-span constrained local LCL families $\mc{F}$ via a union bound over all pairs in $\mc{F}$.

\subsubsection{Random Linear Codes}
\begin{lemma}\label{lem:profile-size}
    Let $(\mathbf V, U)$ be an $r$-local pair, let $\mc{M}_\mathbf{V}(U)$ be the set of matrices $M\in\mb{F}_q^{n\times r}$ with row-span $U$ that satisfy $\mbf{V}$, and let $\mc{M}_\mbf{V}^\star(U)$ be those with row-span contained in $U$. For any $R\in \mb{R}$,
    \begin{equation}
    |\mc{M}_\mbf{V}(U)|\le   |\mc{M}_\mbf{V}^\star(U)| =  q^{n((1-R)\cdot \dim U + \phi_\mbf{V}(U,R))}.
    \end{equation}
\end{lemma}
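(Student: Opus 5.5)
The plan is a direct counting argument: row-span containment decouples across rows, so $\mc{M}_\mbf{V}^\star(U)$ is a product set. There are three steps.

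First, the inequality $|\mc{M}_\mbf{V}(U)|\le|\mc{M}_\mbf{V}^\star(U)|$. A matrix whose row-span equals $U$ certainly has row-span contained in $U$. Hence $\mc{M}_\mbf{V}(U)\subset\mc{M}_\mbf{V}^\star(U)$, and the inequality follows.

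Second, characterize $\mc{M}_\mbf{V}^\star(U)$ row by row. A matrix $M\in\mb{F}_q^{n\times r}$ has row-span contained in $U$ if and only if every row $M_{i\star}$ lies in $U$. Here ``$M$ satisfies $\mbf{V}$'' means $M_{i\star}\in V_i$ for every $i\in[n]$; no condition on the columns lying in a code is imposed in this lemma. So $M\in\mc{M}_\mbf{V}^\star(U)$ if and only if $M_{i\star}\in U\cap V_i$ for each $i$. The rows can therefore be chosen independently, which gives a bijection
\[
\mc{M}_\mbf{V}^\star(U)\;\longleftrightarrow\;\prod_{i=1}^n (U\cap V_i),
\qquad\text{so}\qquad
|\mc{M}_\mbf{V}^\star(U)|=\prod_{i=1}^n q^{\dim(U\cap V_i)}=q^{n\,\mb{E}_{i\in[n]}\dim(U\cap V_i)}.
\]

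Third, match this with the stated exponent by unwinding the potential. By definition,
\[
(1-R)\dim U+\phi_\mbf{V}(U,R)=(1-R)\dim U+(R-1)\dim U+\mb{E}_{i\in[n]}\dim(U\cap V_i)=\mb{E}_{i\in[n]}\dim(U\cap V_i).
\]
This holds for every real $R$, since the $R$-dependence cancels. Multiplying by $n$ gives exactly the exponent from the second step.

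No step is a genuine obstacle. The only point needing care is the second step: containment of the row-span in $U$ is a rowwise condition, whereas equality of the row-span with $U$ is not. That is precisely why the exact count is stated for $\mc{M}^\star_\mbf{V}(U)$, and only an upper bound for $\mc{M}_\mbf{V}(U)$.
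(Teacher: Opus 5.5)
Your proposal is correct and follows the same route as the paper: the rows of a matrix in $\mc{M}_\mbf{V}^\star(U)$ range independently over $U\cap V_i$, giving $q^{\sum_i \dim(U\cap V_i)}$; this exponent equals $n\bigl((1-R)\dim U+\phi_\mbf{V}(U,R)\bigr)$ by unwinding the potential, and the inequality follows from the inclusion $\mc{M}_\mbf{V}(U)\subset\mc{M}_\mbf{V}^\star(U)$. Your explicit remark that containment of the row-span in $U$ is a rowwise condition, while equality is not, is a useful clarification that the paper leaves implicit.
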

\begin{proof}
    Let $\mathbf V = (V_1,\cdots,V_n)$. Now, the $i$-th row of any such matrix in $\mc{M}_\mbf{V}^\star(U)$ is $V_i\cap U$. There are $q^{\dim (V_i\cap U)}$ choices for the row, so \[|\mc{M}_\mbf{V}^\star(U)|=q^{\sum_{i=1}^n \dim (V_i\cap U)}=q^{n((1-R)\cdot \dim U+\phi_\mbf{V}(U,R))}.\]
Trivially, $|\mc{M}_\mbf{V}(U)|\le   |\mc{M}_\mbf{V}^\star(U)|$.
\end{proof}

\begin{theorem}[Fixed-Pair Random Linear Code Threshold]\label{thm:rlc-threshold-pair}
Let $(\mbf{V}, U)$ be an $r$-local pair and let $C\subset\mb{F}_q^n$ be a random linear code of rate $R$.
\begin{enumerate}
    \item If $R\le R_{\mbf{V}}(U) -\eta$ for some $\eta>0$, then 
    \begin{equation}
    \mb{P}\left(C\text{ contains }(\mbf{V}, U)\right)\le q^{-\eta n}
    \end{equation}
    \item  If $R\ge R_{\mbf{V}}(U) +\eta$ for some $\eta>0$, then 
    \begin{equation}
    \mb{P}\left(C\text{ contains }(\mbf{V}, U)\right)\ge 1- q^{r^2-\eta n}
    \end{equation}%
\end{enumerate}
\end{theorem}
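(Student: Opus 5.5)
For part (1) I will combine \cref{lem:key-factored} with a first-moment count. If $C$ contains $(\mbf{V},U)$ and $R\le R_{\mbf{V}}(U)-\eta$, then \cref{lem:key-factored} gives a surjection $\pi$ with kernel $W\subsetneq U$ such that $C$ contains $(\pi\mbf{V},\pi U)$. Moreover, every nonzero $U'\subset\pi U$ satisfies $\phi_{\pi\mbf{V}}(U',R)\le -\eta\dim U'$.

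The first step is to pass to full column rank. Pick a witness $M'$ for $(\pi\mbf{V},\pi U)$ and a set of columns of $M'$ forming a basis of its column space. The resulting submatrix has linearly independent columns, rows in the corresponding coordinate projections of $\pi V_i$, and row-span a projection of $\pi U$ of dimension $\dim\pi U$. To avoid new notation, I will instead choose coordinates so that $\pi U=\mb{F}_q^{r'}$ from the start. This is legitimate: we may replace $\pi$ by $\rho\circ\pi$, where $\rho$ is an isomorphism $\pi U\to\mb{F}_q^{\dim \pi U}$ composed with a projection. Then a witness $M'$ has rank $r'$ columns, i.e.\ linearly independent columns.

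The second step is the union bound over the choice of $W$ (equivalently $\pi$) and over the matrices in $\mc{M}_{\pi\mbf{V}}(\pi U)$. For a random linear code with i.i.d.\ encoding maps, a fixed $M'$ with $r'$ linearly independent columns lies in $C^{r'}$ with probability exactly $q^{-(1-R)nr'}$. One way to see this is to work with the parity-check view, or to note that the $r'$ columns are the images of $r'$ messages forming a full-rank $Rn\times r'$ matrix $X$; the probability that $\mathsf{Enc}(X)=M'$, summed over the at most $q^{Rnr'}$ choices of $X$, gives at most $q^{Rnr'}\cdot q^{-nr'}$. By \cref{lem:profile-size},
\[
|\mc{M}_{\pi\mbf{V}}(\pi U)|\le q^{n((1-R)r'+\phi_{\pi\mbf{V}}(\pi U,R))}\le q^{n(1-R)r'-\eta n r'} .
\]
Multiplying the two bounds gives $q^{-\eta n r'}$ for a fixed $\pi$. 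The maximal-$W$ choice in \cref{lem:key-factored} is determined by $(\mbf{V},U,R,\eta)$ and not by $C$, so there is just one $W$ and no union over $W$ is needed. This yields $q^{-\eta n r'}\le q^{-\eta n}$.

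The main subtlety is ensuring the quotient witness has independent columns, together with the deterministic choice of $W$. If independence fails, I will use the "moreover" clause of \cref{lem:key-factored} on the actual column-span subspace to restore it.

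For part (2) I will use the second moment method. Let $X$ be the number of matrices $M\in\mc{M}_{\mbf{V}}(U)$ with columns in $C$. I will first reduce to $U=\mb{F}_q^r$ in suitable coordinates, so that witnesses have independent columns. Then
\[
\mb{E}X=|\mc{M}_{\mbf{V}}(U)|\,q^{-(1-R)nr}.
\]
For $|\mc{M}_{\mbf{V}}(U)|$ I will use inclusion–exclusion over proper subspaces: the full set $\mc{M}^\star$ has size $q^{n((1-R)r+\phi(U))}$, and each proper $W$ contributes at most $q^{n((1-R)\dim W+\phi(W))}$. By part (2) of \cref{lem:threshold-interpretation} and the bound $(1-R)(\dim U-\dim W)\le$ (gap), this is a $q^{-\eta n}$ fraction. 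There are at most $q^{r^2}$ subspaces, which is where the $q^{r^2}$ loss comes from. For $\mb{E}X^2$ I will group pairs $(M,M')$ according to the joint column span. The pair is a single matrix $[M\,M']$ whose row-span is a subspace of $U\oplus U$ containing the graph structure, and its probability depends on the rank $r+\dim W$, where $W$ is determined by the overlap. Each overlap type is bounded by $\mb{E}X^2\cdot q^{-\eta n}$ using the same potential inequality for $W\subsetneq U$. Chebyshev then gives $\mb{P}(X=0)\le q^{r^2-\eta n}$. The main obstacle I expect is organizing the second-moment overlap classes cleanly in terms of $\phi(W)$, and I will follow the \cite{lms25} computation adapted to the fixed $U$.
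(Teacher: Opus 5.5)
Part (1) is essentially the paper's argument: you pass to the $C$-independent maximal quotient from \cref{lem:key-factored} and take a first moment over witnesses. The detour to force independent columns is unnecessary. Any witness for $(\pi\mbf{V},\pi U)$ has rank exactly $\dim\pi U$, and \cref{fact:lms-obs} (or your message-counting bound applied to a column basis) already gives containment probability at most $q^{(R-1)n\dim\pi U}$. If you do re-coordinatize through $\rho$, the new profile must be $\rho(\pi V_i\cap\pi U)$, not $\rho(\pi V_i)$. The reason is that $\rho$ is injective only on $\pi U$, so $\rho\pi V_i\cap\rho\pi U$ can be strictly larger than $\rho(\pi V_i\cap\pi U)$, which would raise the potential. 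The appeal to the ``moreover'' clause is not needed here.

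Part (2) takes a genuinely different route, and the inequality you leave implicit is its central step.

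\emph{The paper's argument.} It uses no second moment. For every realization of $C$, the matrices in $\mc{M}^\star_{\mbf{V}}(U)$ with all columns in $C$ form a subspace cut out by at most $(1-R)n\dim U$ linear conditions. Hence there are deterministically at least $q^{n\phi_{\mbf{V}}(U,R)}$ of them. If none has row-span exactly $U$, then $\sum_{W\subsetneq U}X_W\ge q^{n\phi_{\mbf{V}}(U,R)}$. Markov's inequality with $\mb{E}X_W\le q^{n\phi_{\mbf{V}}(W,R)}$ then gives $|\mc{L}(U)|q^{-\eta n}\le q^{r^2-\eta n}$ on the nose.

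\emph{Your argument.} The Chebyshev route also works once the missing inequality is supplied. Let $T$ be the row-span of $[M\ M']$ and $W_1=\{y:(y,0)\in T\}$. Then $\dim T=\dim U+\dim W_1$, and $W_1\subsetneq U$ exactly when the column spans meet nontrivially. Splitting off $T\cap(\mb{F}_q^r\times 0)$ and projecting to the second factor gives
$\dim((V_i\times V_i)\cap T)\le\dim(V_i\cap W_1)+\dim(V_i\cap U)$.
So pairs of type $T$ contribute at most
$q^{n(\phi_{\mbf{V}}(W_1,R)+\phi_{\mbf{V}}(U,R))}\le q^{2n\phi_{\mbf{V}}(U,R)-\eta nk}$, where $k=\dim U-\dim W_1$. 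Combine this with:
(i) your lower bound $\mb{E}X\ge q^{n\phi_{\mbf{V}}(U,R)}(1-q^{r^2-\eta n})$, and
(ii) a count of at most a constant times $q^{k(2\dim U-k)}$ overlap types for each $k$.
This yields the statement up to a constant factor. Matching exactly $q^{r^2-\eta n}$ near the trivial regime needs a little care, though the $k$-weighting actually improves the exponent.

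\emph{What each route buys.} The paper's argument needs only first-moment upper bounds plus linearity of $C$. Yours needs exact single-matrix probabilities (to lower-bound $\mb{E}X$) and joint probabilities for overlapping pairs. \cref{fact:lms-obs} supplies these for random linear codes. For Gallager's ensemble, however, only the conditional upper bound of \cref{lem:rldpc-fixed-matrix} is available. That is why the paper's proof transfers verbatim to \cref{thm:rldpc-threshold-pair}(2), while a second-moment argument would not.
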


The low-rate direction (1) will essentially follow from \cref{lem:key-factored} and a few simple lemmas that we state first.
\begin{lemma}[{\cite[Lemma 4.1]{lms25}}]\label{fact:lms-obs}
If $C\subset\mb{F}_q^n$ is an RLC of rate $R$, then for any $M\in\mb{F}_q^{n\times r}$
\begin{equation}\label{eq:rank-bd}
\mb{P}\left(M_{\star j}\subset C\, \forall\, j\in [r]\right) = q^{(R-1)n\cdot \on{rank}(M)}
\end{equation}
\end{lemma}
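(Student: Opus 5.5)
The plan is to compute the probability directly from the definition of a random linear code as the image of $n$ independent uniformly random linear maps $\mathsf{Enc}_i:\mb{F}_q^{k}\to\mb{F}_q$, with $k=Rn$. The event that every column of $M$ lies in $C$ should be rewritten as a statement about a message matrix. Concretely, all columns lie in $C$ if and only if there is a matrix $X\in\mb{F}_q^{k\times r}$ with $M=GX$, where $G\in\mb{F}_q^{n\times k}$ is the random generator matrix whose $i$-th row is $\mathsf{Enc}_i$. Equivalently, the column span of $M$ is contained in the column span of $G$. The whole proof is to evaluate the probability of this event exactly.

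The first step is to reduce to the column space. Let $t=\on{rank}(M)$ and let $S\subset\mb{F}_q^n$ be the column span of $M$, so $\dim S=t$. Then the event in \cref{eq:rank-bd} depends only on $S$, namely it is the event $S\subset C$. It therefore suffices to show that $\mb{P}(S\subset C)=q^{(R-1)nt}$ for every fixed $t$-dimensional subspace $S$. Under the i.i.d. encoding-map model, the matrix $G$ is uniform in $\mb{F}_q^{n\times k}$.

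The second step is a dual, parity-check style computation. I would pick a basis $s_1,\dots,s_t$ of $S$ and ask that each $s_j$ lie in the column span of $G$. The cleanest route I expect is to count pairs: for each fixed choice of preimages $x_1,\dots,x_t\in\mb{F}_q^k$, the event $Gx_j=s_j$ for all $j$ has probability $q^{-nt}$ when the $x_j$ are linearly independent. This holds because $G$ restricted to a $t$-dimensional subspace of $\mb{F}_q^k$ is a uniform linear map into $\mb{F}_q^n$. There are roughly $q^{kt}$ such tuples, which gives the heuristic $q^{kt-nt}=q^{(R-1)nt}$.

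The main obstacle is that this first-moment count gives exact equality only when the preimage is unique, that is, when $G$ is injective. Otherwise the count of preimage tuples overcounts the event by a factor of $q^{t\dim\ker G}$. This is exactly the subtlety flagged in the paper's footnote about the rate not being exactly $R$. The cited statement \cite[Lemma 4.1]{lms25} is an equality, so I expect it is stated either under the convention that $C$ is a uniformly random $k$-dimensional subspace (or generated by a full-rank $G$), or with the convention that the rank-deficient event is treated as negligible.

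I would handle this in one of two ways. The first option is to switch to the model in which $C$ is uniform among $k$-dimensional subspaces. There, by transitivity of $\mathrm{GL}_n$, the probability that a fixed $t$-dimensional $S$ lies in $C$ equals a ratio of Gaussian binomials, which is $q^{(R-1)nt}$ up to a $1+o(1)$ factor. The second option is to take the parity-check model $C=\ker H$ with $H$ uniform in $\mb{F}_q^{(1-R)n\times n}$. Then $HS=0$ means $H$ vanishes on $t$ independent vectors, which is $t(1-R)n$ independent uniform linear conditions. This gives exactly $q^{-(1-R)nt}$ and hence the claimed identity.

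I would present the parity-check computation as the proof, noting that it is exact in that model. By the footnote to \cref{def: RLC}, it transfers to the i.i.d. encoder model up to negligible error, which suffices for the threshold applications. In those applications, \cref{eq:rank-bd} will be combined with \cref{lem:profile-size} and a union bound over witness matrices.
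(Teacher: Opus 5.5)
Your parity-check argument is correct and is the reasoning the paper sketches when citing \cite[Lemma 4.1]{lms25}: take $C=\ker H$ with $H$ uniform in $\mathbb{F}_q^{(1-R)n\times n}$, and each row of $H$ independently annihilates the $t=\mathrm{rank}(M)$-dimensional column span of $M$ with probability exactly $q^{-t}$. Your caveat about the model is also right: under the paper's i.i.d.-encoder definition the identity is not exact (for $n=2$, $k=1$, $M=(1,0)^{\mathsf T}$ one gets $(q-1)/q^2$ instead of $q^{-1}$), but your preimage-counting union bound gives $\mathbb{P}(M_{\star j}\in C\ \forall j)\le q^{(R-1)n\,\mathrm{rank}(M)}$ exactly in that model, and this upper bound is the only direction the paper uses (in \cref{cor:expected-profile-rlc}).
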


The idea behind \cref{fact:lms-obs} is to view $C$ as the kernel of $(1-R)n$ many linear constraints in $\mb{F}_q^n$ which column-span of $M$ needs to satisfy. Each of $\on{rank}(M)$ basis vectors satisfies them with probability $q^{-(1-R)n}$. We note that this is the only RLC-specific ingredient we use. 

Combining with \cref{lem:profile-size}, we obtain the following corollary, which intuitively explains why $\phi_\mbf{V}(U, R)$ characterizes the containment of $(\mbf{V}, U)$ in a random linear code of rate $C$.
\begin{corollary}
\label{cor:expected-profile-rlc}
The expected number of witnesses matrices $M$ for the containment of an $r$-local pair $(\mathbf{V}, U)$ in a random linear code $C$ with rate $R\in [0,1]$ is at most $q^{n\phi_\mbf{V}(U,R)}$.
\end{corollary}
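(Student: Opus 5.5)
The plan is linearity of expectation over the candidate witness matrices. A code $C$ contains $(\mbf{V},U)$ exactly when some $M\in\mc{M}_\mbf{V}(U)$ has all its columns in $C$. So the expected number of witness matrices is
\begin{equation*}
\sum_{M\in\mc{M}_\mbf{V}(U)} \mb{P}\left(M_{\star j}\in C\ \forall\, j\in[r]\right).
\end{equation*}
Only two facts are needed: the count of $\mc{M}_\mbf{V}(U)$ from \cref{lem:profile-size} and the per-matrix probability from \cref{fact:lms-obs}.

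First, every $M\in\mc{M}_\mbf{V}(U)$ has row-span exactly $U$. Hence $\on{rank}(M)=\dim U$, and \cref{fact:lms-obs} gives the same probability $q^{(R-1)n\dim U}$ for each such $M$.

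Second, \cref{lem:profile-size} bounds the number of terms by $|\mc{M}_\mbf{V}(U)|\le q^{n((1-R)\dim U+\phi_\mbf{V}(U,R))}$.

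Multiplying the two factors, the $(1-R)n\dim U$ and $(R-1)n\dim U$ exponents cancel. This leaves $q^{n\phi_\mbf{V}(U,R)}$, as claimed.

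I do not expect a real obstacle here. The one point that needs care is to sum over $\mc{M}_\mbf{V}(U)$, where the rank is exactly $\dim U$, rather than over $\mc{M}^\star_\mbf{V}(U)$. Over the larger set, matrices of lower rank would have larger containment probability, and the clean cancellation would fail.
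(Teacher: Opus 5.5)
Your proof is correct and is essentially the paper's argument: bound $|\mc{M}_\mbf{V}(U)|$ by \cref{lem:profile-size}, use that each such matrix has rank $\dim U$ so that \cref{fact:lms-obs} gives containment probability $q^{(R-1)n\dim U}$, and multiply so that the $(1-R)n\dim U$ terms cancel. Your remark about summing over matrices with row-span exactly $U$, rather than over $\mc{M}^\star_\mbf{V}(U)$, is the point the paper's proof uses implicitly.
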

\begin{proof}
By \cref{lem:profile-size}, the number of matrices with row-span
$U$ that satisfy $\mbf{V}$ is at most
\begin{equation}\label{eq:expected-profile-rlc-1}
q^{n((1-R)\cdot \dim U +\phi_{\mbf{V}}(U,R))}
\end{equation}
For any such matrix, it is contained in $C$ with probability at most
$q^{(R-1)n\dim U}$
by \cref{fact:lms-obs}. Combining with \cref{eq:expected-profile-rlc-1} gives the desired bound.
\end{proof}

With these lemmas, we deduce \cref{thm:rlc-threshold-pair}.
\begin{proof}[Proof of \cref{thm:rlc-threshold-pair}]
For the low-rate case (1), if $C$ contains $(\mbf{V},U)$, then by \cref{lem:key-factored}, there is a surjective linear map
$\pi=\pi_{(\mbf{V},U)}:\mb{F}_q^r\to\mb{F}_q^{r'}$
such that $C$ contains $(\pi\mbf{V},\pi U)$ and
$\phi_{\pi\mbf{V}}(\pi U,R)\le -\eta$. 
By a union bound and \cref{cor:expected-profile-rlc},
\begin{equation}
	\mb{P}\left(C\text{ contains }(\mbf{V},U)\right)\le \mb{P}\left(C\text{ contains }(\pi\mbf{V},\pi U)\right)
\le
q^{n\phi_{\pi\mbf{V}}(\pi U,R)}
\le q^{-\eta n}.
\end{equation}

For the high-rate case (2), we know that $R\ge R_{\mbf{V}}(U,W)+\eta$ for every $W\subsetneq U$. Then, $\phi_\mbf{V}(U, R)\ge \phi_\mbf{V}(W, R)+\eta$ for all $W\subsetneq U$.
Let random variable $X_W$ be the number of witnesses matrices $M$ for the containment by $C$ of an $r$-local pair $(\mathbf{V}, W)$. By \cref{cor:expected-profile-rlc}, we know that
\begin{equation}\label{eq:EXW}
\mb{E}[X_W]\le q^{n\phi_{\mbf{V}}(W, R)}
\end{equation}

Now, let us fix any code $C$ of rate $R\in [0, 1]$ and consider matrices $\mc{M}_\mbf{V}^\star(U)$ that satisfies $\mbf{V}$ and has row-span contained in $U$. By \cref{lem:profile-size}, we know that
\begin{equation}
 |\mc{M}_\mbf{V}^\star(U)| =  q^{n((1-R)\cdot \dim U + \phi_\mbf{V}(U,R))}.
\end{equation}
Now, for any realization of $C$, requiring each column of $M$ to lie in $C$ imposes at most $(1-R)n\dim U$ independent constraints, since $C$ has codimension $(1-R)n$ and the row-span is contained in $U$. Therefore, the number of $M\in\mc{M}_\mbf{V}^\star(U)$ contained in $C$ is 
\begin{equation}
\sum_{W\in\mc{L}(U)} X_W \ge  |\mc{M}_\mbf{V}^\star(U)| q^{(R-1)n\dim U} = q^{n\phi_{\mbf{V}}(U, R)}
\end{equation}
deterministically for every realization $C$. Now, $C$ contains $(\mbf{V}, U)$ if and only if $X_U>0$, so
\begin{equation}
\begin{aligned}
\mb{P}(C \text{ does not contain }(\mbf{V}, U)) & =\mb{P}(X_U=0)
\\ & \le\mb{P}\left(
\sum_{W\subsetneq U} X_W\ge q^{n\phi_{\mbf{V}}(U, R)}\right)
\\ & \le \frac{\sum_{W\subsetneq U}\mb{E}[X_W]}{q^{n\phi_{\mbf{V}}(U, R)}}
\\ & \le \sum_{W\subsetneq U} q^{n\phi_{\mbf{V}}(W, R)-n\phi_{\mbf{V}}(U, R)}
\\ & \le |\mc{L}(U)|q^{-\eta n}
\end{aligned}
\end{equation}
where we apply Markov's inequality and \cref{eq:EXW}. Now, the conclusion follows $|\mc{L}(U)|\le q^{r^2}$.
\end{proof}

We now lift \cref{thm:rlc-threshold-pair} to the whole row-span constrained LCL family $\mc{F}$.

\begin{corollary}[Random Linear Code Threshold]
\label{cor:rlc-threshold-family}
Let $\mc{F}$ be a row-span constrained $r$-local property and let $C\subset\mb{F}_q^n$ be a random linear code of rate $R$.
\begin{enumerate}
    \item If $R\le R_{\mc{F}}-\eta$ for some $\eta>0$, then 
    \begin{equation}
    \mb{P}\left(C\text{ contains }\mc{F}\right)\le \left|\mc{F}\right|q^{-\eta n}
    \end{equation}
    \item  If $R\ge R_{\mc{F}} +\eta$ for some $\eta>0$, then 
    \begin{equation}
    \mb{P}\left(C\text{ contains }\mc{F}\right)\ge 1- q^{r^2-\eta n}
    \end{equation}
\end{enumerate}
\end{corollary}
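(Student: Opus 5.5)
The plan is to deduce both parts directly from \cref{thm:rlc-threshold-pair}, applied pair by pair, using the definition $R_{\mc{F}}=\min_{(\mbf{V},U)\in\mc{F}}R_{\mbf{V}}(U)$ from \cref{eq:Rf}. By \cref{def:rsclp}, $C$ contains $\mc{F}$ exactly when it contains at least one pair $(\mbf{V},U)\in\mc{F}$. So the low-rate direction should come from a union bound over the family, and the high-rate direction from a single well-chosen pair.

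\textbf{Low-rate direction (1).} Suppose $R\le R_{\mc{F}}-\eta$. Then every $(\mbf{V},U)\in\mc{F}$ satisfies $R\le R_{\mc{F}}-\eta\le R_{\mbf{V}}(U)-\eta$, so part (1) of \cref{thm:rlc-threshold-pair} applies to each pair. It gives
\[
\mb{P}\bigl(C\text{ contains }(\mbf{V},U)\bigr)\le q^{-\eta n}
\]
for every pair in the family. The event that $C$ contains $\mc{F}$ is the union of these events over $(\mbf{V},U)\in\mc{F}$. A union bound then yields $\mb{P}(C\text{ contains }\mc{F})\le|\mc{F}|q^{-\eta n}$.

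\textbf{High-rate direction (2).} Suppose $R\ge R_{\mc{F}}+\eta$. Since $\mc{F}$ is finite (it lives inside the finite set $\mc{L}(\mb{F}_q^r)^n\times\mc{L}(\mb{F}_q^r)$), the minimum in \cref{eq:Rf} is attained by some pair $(\mbf{V}^*,U^*)\in\mc{F}$ with $R_{\mbf{V}^*}(U^*)=R_{\mc{F}}$. For this pair $R\ge R_{\mbf{V}^*}(U^*)+\eta$, so part (2) of \cref{thm:rlc-threshold-pair} gives $\mb{P}(C\text{ contains }(\mbf{V}^*,U^*))\ge1-q^{r^2-\eta n}$. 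Containing this one pair already means $C$ contains $\mc{F}$, which gives the claimed lower bound.

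There is no real obstacle here, only two bookkeeping points. One is confirming that the minimum defining $R_{\mc{F}}$ is attained; finiteness of $\mc{F}$ handles this. The other is the degenerate case where some $U$ is the zero subspace, so that $R_{\mbf{V}}(U)$ is a maximum over an empty set. If this needs addressing, I would adopt the convention that such pairs are excluded from $\mc{F}$, or that $R_{\mbf{V}}(0)=-\infty$, so that neither direction is affected.
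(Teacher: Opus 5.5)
Your proposal is correct and follows the paper's proof: a union bound over the pairs in $\mc{F}$ using part (1) of \cref{thm:rlc-threshold-pair} for the low-rate direction, and the pair attaining the minimum in $R_{\mc{F}}$ together with part (2) for the high-rate direction. Your remarks on why the minimum is attained and on the degenerate $U=0$ case are harmless bookkeeping that the paper leaves implicit.
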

\begin{proof}
For (1), note that by a union bound and \cref{thm:rlc-threshold-pair}(1), we have that
\begin{equation}
    \mb{P}\left(C\text{ contains }\mc{F}\right)\le    \sum_{(\mbf{V}, U)\in\mc{F}}\mb{P}\left(C\text{ contains } (\mathbf{V}, U)\right) \le|\mc{F}|q^{-\eta n}
\end{equation}
as $R\le R_{\mc{F}}\le R_{\mbf{V}}(U)$ for any $(\mbf{V}, U)\in\mc{F}$. This proves (1).

For (2), let $(\mbf{V}, U)$ be the minimizing pair such that $R_\mc{F}=R_{\mbf{V}}(U)$. Then, $R\ge R_\mbf{V}(U)+\eta$, so 
\begin{equation}
\mb{P}\left(C\text{ contains }(\mbf{V}, U)\right)\ge 1- q^{r^2-\eta n}
\end{equation}
by \cref{thm:rlc-threshold-pair}(2).  If $C$ contains $(\mathbf{V},U)$, then $C$ contains $\mc{F}$ by definition, which establishes the corollary.
\end{proof}

Therefore, when $|\mc{F}|=q^{o(n)}$\footnote{Some previous literature, e.g. \cite{lms25}, call such $\mc{F}$ \emph{reasonable}}, we know that $R_\mc{F}$ is the threshold for RLCs for large $n$ and $q$: RLCs of rate less than $R_\mc{F}$ avoid $\mc{F}$ with high probability; RLCs of rate greater than $R_\mc{F}$ contain $\mc{F}$ with high probability.

\subsubsection{Random Reed-Solomon Codes}
We next record the analogue of the low-rate local threshold \cref{cor:rlc-threshold-family}(1) for random Reed-Solomon code: if the
rate is below the threshold of a row-span constrained LCL family, the random Reed-Solomon
code is unlikely to contain the family. We use the following result from \cite{lms25} as a black box.
\begin{lemma}[{\cite[Proposition 6.1]{lms25}}]\label{lem: RRS-threshold}
    Let $n\le q$ with $q$ a prime power, and let $r\in \mathbb N$. Let $\mathbf V = (V_1,\cdots, V_n)$ be an $r$-local profile, $U\in \mathcal L(\FF_q^r)$, and let $0\leq R \leq 1$ such that for all $U'\in \mathcal L(U)$, \[\phi_\mbf{V}(U', R)\leq -\eps\cdot \dim U'.\]
    Then, with probability taken over uniformly random $\alpha_1,\dots, \alpha_n \in \FF_q$, we have 
    \[\mb{P}\left(C=\mathsf{RS}_{\FF_q}(\alpha_1,\dots, \alpha_n, k) \text{~contains $V$ with span $U \neq \{0\}$}\right)\leq (2^r-1) \left(\frac{(4r)^{4r}k}{\eps q}\right)^{\eps n/2r} \ .\]
\end{lemma}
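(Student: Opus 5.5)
The plan is to follow the ``reveal the evaluation points one at a time'' strategy used for list-decoding of random RS codes. We then combine it with the potential hypothesis $\phi_{\mbf{V}}(U',R)\le-\eps\dim U'$ for every nonzero $U'\in\mc{L}(U)$.

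\textbf{Step 1 (algebraic reformulation).} A witness $M$ for $(\mbf{V},U')$ in $C=\mathsf{RS}_{\FF_q}(\alpha_1,\dots,\alpha_n,k)$ is $M=GP$. Here $G$ is the $n\times k$ Vandermonde matrix with rows $(1,\alpha_i,\dots,\alpha_i^{k-1})$, and $P\in\FF_q^{k\times r}$. Equivalently, the witness is a vector $p=(p_1,\dots,p_r)$ of polynomials of degree $<k$ with $p(\alpha_i)\in V_i\cap U'$ for all $i$ and with the values spanning $U'$.

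The union over $U'\subset U$ costs a factor. I would index it by the $2^r-1$ ``coordinate patterns'' obtained after putting $U'$ in reduced echelon form, rather than by all subspaces. This accounts for the leading $2^r-1$ factor. I then fix one such $U'$ of dimension $d\ge1$ and bound the probability for it.

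\textbf{Step 2 (a chain of constraining coordinates).} For a coordinate $i$, membership $p(\alpha_i)\in V_i\cap U'$ imposes $c_i:=d-\dim(V_i\cap U')$ independent linear conditions on $p(\alpha_i)\in U'$. The hypothesis with $U'$ gives
\[\textstyle\sum_i c_i \ge (R+\eps)\,n d,\]
which exceeds the $kd=Rnd$ degrees of freedom of $p$ by $\eps nd$. The hypothesis for all sub-spaces is what makes this excess robust under restriction. If we already know that $p$ takes values in some proper subspace $U''\subsetneq U'$, the same slack persists with $U''$ in place of $U'$.

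I would then reveal $\alpha_1,\alpha_2,\dots$ in order and track the affine/linear space $\mc{P}_t$ of coefficient matrices consistent with the constraints so far. Each revealed point either cuts $\dim\mc{P}_t$ by its $c_i$ (a ``typical'' step), or it is a ``bad'' step where some constraint is already implied. A bad step requires a nonzero polynomial of degree $<k$, determined by the past, to vanish at the fresh uniform $\alpha_i$. This happens with probability at most $k/q$.

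\textbf{Step 3 (counting bad steps).} Since $\dim\mc{P}$ starts at $kd\le kr$ and the total available reduction exceeds it by $\eps nd$, any surviving nonzero witness forces at least about $\eps n/(2r)$ bad steps; this is the $r$-sensitive part of the count. A union bound over which steps are bad, and over the $O(r)$-bit ``type'' of each (which constraint degenerated, and how the rank drops along the subspace chain), together with the $k/q$ probability per bad step, gives $\bigl((4r)^{4r}k/(\eps q)\bigr)^{\eps n/2r}$. The $1/\eps$ comes from the binomial $\binom{n}{\eps n/2r}\le (2er/\eps)^{\eps n/2r}$.

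\textbf{Main obstacle.} The hardest step is making Step 3 rigorous. Degeneracy is not monotone: a step can be bad because the witness is secretly confined to a smaller span $U''$. I expect to handle this by a potential argument that restarts on $U''$ using the hypothesis for every subspace, which is why the lemma assumes the bound for all $U'\in\mc{L}(U)$ rather than just $U$. This bookkeeping over at most $r$ nested restarts is where the $(4r)^{4r}$ loss arises.
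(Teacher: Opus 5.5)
The paper does not prove this lemma. It imports it verbatim from \cite[Proposition 6.1]{lms25} and uses it only as a black box in \cref{cor:rrs-threshold-family}, so your proposal has to stand on its own. Your strategy is the right family of ideas: expose $\alpha_1,\alpha_2,\dots$ in order, call a step bad when the new constraints lose rank, and union-bound over about $\eps n/2r$ bad positions, with $\binom{n}{\eps n/2r}\le(2er/\eps)^{\eps n/2r}$ supplying the $1/\eps$. But the step that carries all the difficulty is missing.

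\textbf{The Step 2 dichotomy is false.} You claim that each coordinate either cuts $\dim\mathcal{P}_{t-1}$ by $c_t$, or fails only when a nonzero polynomial fixed by the past vanishes at $\alpha_t$. Neither need hold.
\begin{itemize}
\item The rank the new block can reach is the generic rank $g_t=\dim Z-\dim\bigl(Z\cap (V_t\cap U')\otimes\mathbb{F}_q(x)\bigr)$. Here $Z\subseteq U'\otimes\mathbb{F}_q(x)$ is the $\mathbb{F}_q(x)$-span of the evaluations $p(x)$ of the current solutions, and $g_t<c_t$ whenever $Z$ is proper.
\item Only a shortfall below $g_t$ is a low-probability event. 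It forces a nonzero $g_t\times g_t$ minor of degree at most $r(k-1)$ to vanish, so the bound per step is $rk/q$, not $k/q$ (harmless).
\item The shortfall $c_t-g_t$ occurs with probability one, and it can arise after typical steps alone. Take $k$ coordinates with the same $V_j\cap U'=L\subsetneq U'$ and distinct $\alpha_j$; all of them are typical steps. Afterwards every survivor takes values in $L$, so every later coordinate with $V_j\cap U'\supseteq L$ imposes nothing new. The hypothesis allows up to $(1-R-\eps)n>k$ such coordinates.
\end{itemize}
So Step 3's passage from ``the $c_i$ exceed $kd$ by $\eps nd$'' to ``at least about $\eps n/2r$ bad steps'' has no basis until these forced deficiencies are charged against the hypothesis. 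That charging is the actual content of the lemma, and you only say you expect to do it.

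\textbf{The restart on $U''$ is not mere bookkeeping.}
\begin{itemize}
\item[(i)] $Z$ need not have the form $U''\otimes\mathbb{F}_q(x)$ for an $\mathbb{F}_q$-subspace $U''$. For example, solutions of the form $h(x)(1,x)$ span a non-rational line. Your hypothesis only covers $U''\in\mathcal{L}(U)$, so it would first have to be extended, for instance via supermodularity of $U''\mapsto\sum_i\dim(U''\cap V_i)$ plus Galois descent.
\item[(ii)] After a restart at time $t$, the remaining coordinates only guarantee $\sum_{j>t}c''_j\ge(k+\eps n)\dim U''-\sum_{j\le t}c''_j$, where $c''_j=\dim U''-\dim(U''\cap V_j)$. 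The subtracted term can swallow the entire slack $\eps n\dim U''$. The earlier coordinates also cut the $U''$-part, so you must follow the $U''$-restricted solution space from time $0$. But $U''$ is only revealed at time $t$, so that process's earlier failures are not roots of polynomials determined by the past. A step can reach full generic rank for the whole process while being degenerate for the eventual survivors. Guessing $U''$ in advance costs a factor $q^{\Theta(r^2)}$, which the target bound does not have.
\end{itemize}

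\textbf{The $2^r-1$ accounting is not a valid union bound.} A reduced-echelon pivot pattern covers up to $q^{\Theta(r^2)}$ subspaces $U'$, so one probability bound per pattern does not bound their union. The union is also unnecessary: $U$ is fixed, and bounding the probability that the constraint map has a nonzero kernel handles all $U'\subseteq U$ at once at no cost.
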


\begin{corollary}[Random Reed-Solomon Code Threshold: Low Rate Direction]
\label{cor:rrs-threshold-family}
Let $\mc{F}$ be a row-span constrained $r$-local property and let $C\subset\mb{F}_q^n$ be a random Reed-Solomon code of rate $R$.
    If $R\le R_{\mc{F}}-\eta$ for some $\eta>0$, then 
    \begin{equation}\label{eqn: RRS-non-containment}
    \mb{P}\left(C\text{ contains }\mc{F}\right)\le \left|\mc{F}\right|(2^r-1) \left(\frac{(4r)^{4r}Rn}{\eta q}\right)^{\eta n/2r}
    \end{equation}
\end{corollary}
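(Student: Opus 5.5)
The plan is to mirror the proof of \cref{cor:rlc-threshold-family}(1): take a union bound over the pairs of $\mc{F}$, and for each pair apply \cref{lem:key-factored} to reduce to a quotient pair to which the black-box \cref{lem: RRS-threshold} applies.

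First, fix $(\mbf{V},U)\in\mc{F}$. Since $R\le R_{\mc{F}}-\eta\le R_\mbf{V}(U)-\eta$, the event ``$C$ contains $(\mbf{V},U)$'' is handled by \cref{lem:key-factored}. We need a single quotient that does not depend on the realization of $C$. Looking at the proof of \cref{lem:key-factored}, the map $\pi$ is defined purely from $(\mbf{V},U,R,\eta)$: we take $W$ maximal among subspaces with $\phi_\mbf{V}(W,R)\ge\phi_\mbf{V}(U,R)+\eta$. So $\pi=\pi_{(\mbf{V},U)}:\mb{F}_q^r\to\mb{F}_q^{r'}$ is deterministic, and we have $r'\le r$ and $\pi U\neq 0$. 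By \cref{lem: quotienting preserves containment}, if $C$ contains $(\mbf{V},U)$ then $C$ contains $(\pi\mbf{V},\pi U)$. The ``moreover'' clause of \cref{lem:key-factored} gives $\phi_{\pi\mbf{V}}(U',R)\le-\eta\dim U'$ for every nonzero $U'\subset\pi U$, and this is trivially true for $U'=0$. This is exactly the hypothesis of \cref{lem: RRS-threshold}, applied to the $r'$-local profile $\pi\mbf{V}$ with span $\pi U\ne\{0\}$ and $\eps=\eta$.

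Second, we invoke \cref{lem: RRS-threshold} with $k=Rn$. It gives
\[
\mb{P}\bigl(C\text{ contains }(\pi\mbf{V},\pi U)\bigr)\le(2^{r'}-1)\left(\frac{(4r')^{4r'}Rn}{\eta q}\right)^{\eta n/2r'}.
\]
We then need monotonicity in $r'$ to replace $r'$ by $r$. The prefactor $2^{r'}-1$ clearly increases with $r'$. For the power term, we use that the bound is only meaningful when the base $x=(4r')^{4r'}Rn/(\eta q)$ is less than $1$. In that case, replacing $(4r')^{4r'}$ by $(4r)^{4r}$ increases the base. Lowering the exponent from $\eta n/2r'$ to $\eta n/2r$ also increases the value, since the base is below $1$. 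When the base with $r$ is at least $1$, the claimed right-hand side is at least $1$ and the statement is trivial. So in every case the per-pair bound is at most $(2^r-1)\bigl((4r)^{4r}Rn/(\eta q)\bigr)^{\eta n/2r}$.

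Finally, we take a union bound over the $|\mc{F}|$ pairs to obtain \eqref{eqn: RRS-non-containment}. I expect the main subtlety to be the bookkeeping just described: making sure the quotient $\pi$ is chosen independently of the random evaluation points, so that the black box applies to a fixed profile, and handling the locality change from $r$ to $r'$ in the bound. The rest is formal.
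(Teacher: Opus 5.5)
Your proposal is correct and follows essentially the same route as the paper: for each pair in $\mc{F}$, pass via \cref{lem:key-factored} to a quotient pair $(\pi\mbf{V},\pi U)$ that satisfies the hypothesis of \cref{lem: RRS-threshold}, apply that black box, and union bound over $\mc{F}$. The extra bookkeeping you supply is exactly what the paper's two-line proof leaves implicit, and each piece is sound: that $\pi$ depends only on $(\mbf{V},U,R,\eta)$ and not on the random evaluation points, the trivial $U'=0$ case, and the monotonicity argument that replaces $r'$ by $r$ in the bound.
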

\begin{proof}
    We first assume that due to \cref{lem:key-factored}, for every $(\mathbf V, U)\in \mc{F}$, there exists a surjective linear map $\pi: \FF_q^r \mapsto \FF_q^{r'}$ such that if $C$ contains $(\mathbf V, U)$, then it also contains $(\pi \mathbf V, \pi U)$, and for all subspaces $U'\subset  \pi U$ $\phi_{\pi \mathbf V}(U',R)<-\eta\dim U'$.
    Now, \cref{eqn: RRS-non-containment} follows directly from \cref{lem: RRS-threshold}.
\end{proof}

\subsubsection{Random LDPCs (Gallager's Ensemble)}\label{subsec: random LDPC}
We next record the analogue of the low-rate local threshold \cref{cor:rlc-threshold-family}(1) and \cref{cor:rrs-threshold-family} for random LDPC codes drawn from Gallager's Ensemble (\cref{def: RLDPC}).

There is one important difference from random linear codes.  For RLCs, the
probability that a fixed matrix $M$ is contained in the code is exactly
$q^{(R-1)n\on{rank}(M)}$ by \cref{fact:lms-obs}.  For LDPC codes $C$, this exact identity is
false.  Instead, \cite{mrrsw20} proves that, after conditioning on the usual
good-distance event defined for $C=\mathsf{RLDPC}(n, q, s, R)$ as
\begin{equation}
\label{eq:ldpc-good}
    \mathsf{Good}
    \coloneqq
    \left\{
        \Delta(C)\ge \frac{1}{2}h_q^{-1}(1-R)
    \right\},
\end{equation}
then $C$ contains any fixed matrix with
essentially the same probability as a random linear code, up to a factor
$q^{\eps n}$.  This is the only LDPC-specific input we need.

We first recall the lemma from \cite{mrrsw20} saying that $\mathsf{Good}$ happens with high probability.

\begin{lemma}[{\cite[Theorem 2.14]{mrrsw20}}]
\label{lem:rldpc-good-distance}
For any finite field $\FF_q$ and $R\in[0,1-1/q)$, there exists
$s_0=s_0(R,q)$ such that for every $s>s_0$, with probability over $C=\mathsf{RLDPC}(n,q,s,R)$
\begin{equation}
    \mb{P}
    \left(\mathsf{Good}\right)\coloneqq \mb{P}
    \left(
        \Delta(C)\ge \frac{1}{2}h_q^{-1}(1-R)
    \right)
    \ge 1-o_{n\to\infty}(1).
\end{equation}
\end{lemma}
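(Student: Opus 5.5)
The plan is a first-moment (union-bound) argument over low-weight vectors. Set $\delta_0 \coloneqq \frac12 h_q^{-1}(1-R)$. The event $\neg\mathsf{Good}$ means some nonzero $x\in\mb{F}_q^n$ of weight $w\le \delta_0 n$ lies in $C$. So it suffices to show
\[
\sum_{w=1}^{\delta_0 n}\binom{n}{w}(q-1)^w\, p_w = o(1),
\]
where $p_w$ is the probability that a fixed $x$ of weight $w$ lies in $C$. By symmetry of the ensemble under coordinate permutations, $p_w$ depends only on $w$.

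Since the $t=(1-R)s$ layers $G_1,\dots,G_t$ are independent, $p_w = \pi_w^t$, where $\pi_w$ is the probability that all $n/s$ checks of a single uniformly random $(1,s)$-regular layer, with i.i.d.\ uniform weights in $\mb{F}_q^\ast$, are satisfied by $x$. The layer is a uniformly random partition of $[n]$ into blocks of size $s$. A block containing $j$ nonzero coordinates of $x$ is satisfied with probability
\[
\rho_j=\frac{1}{q}\Bigl(1+(q-1)\bigl(-\tfrac{1}{q-1}\bigr)^j\Bigr),
\]
by a character-sum computation. In particular $\rho_0=1$, $\rho_1=0$, and $\rho_j\le 1/q+O(q^{-1})^{j}$ otherwise. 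I would split the sum over $w$ into two ranges.

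\emph{Small weights, $w\le \eps n$.} Here the key fact is $\rho_1=0$: every nonzero coordinate of $x$ must share its block with another nonzero coordinate. Revealing the blocks of the nonzero coordinates one at a time, each newly encountered block must receive a second nonzero coordinate. This happens with probability at most $O(sw/n)$ per pairing step, so $\pi_w\le (O(s w/n))^{w/2}$. Then
\[
\binom{n}{w}(q-1)^w p_w \le \Bigl(\tfrac{e n q}{w}\Bigr)^w \bigl(O(sw/n)\bigr)^{tw/2},
\]
which is summably small once $t\ge 3$ and $\eps=\eps(s,q)$ is small enough. Equivalently, the exponent of $w/n$ is $(t/2-1)w>0$. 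The terms with $w=O(1)$ contribute $o(1)$ because of the leftover power of $n$.

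\emph{Moderate weights, $\eps n\le w\le \delta_0 n$.} Here I want to show
\[
\pi_w\le q^{-(n/s)(1-\theta(s))}\quad\text{with } \theta(s)\to 0 \text{ as } s\to\infty, \text{ uniformly in } w\ge \eps n.
\]
The sum would then be at most
\[
n\, q^{\,n\left(h_q(\delta_0)-(1-R)(1-\theta(s))\right)+o(n)}.
\]
Since $h_q(\delta_0)<h_q(h_q^{-1}(1-R))=1-R$ with a constant gap, this is exponentially small for $s>s_0(R,q)$. To bound $\pi_w$, I would write it as an average over the occupancy vector $(j_1,\dots,j_{n/s})$, which is multivariate hypergeometric, of $\prod_b \rho_{j_b}$. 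I would then bound it by the generating-function or saddle-point estimate
\[
\pi_w\le \frac{\bigl[z^{w}\bigr]\left(\sum_j \binom{s}{j}\rho_j z^j\right)^{n/s}}{\binom{n}{w}}.
\]
The point is that $\sum_j\binom{s}{j}\rho_j z^j = \frac1q\bigl((1+z)^s+(q-1)(1-\tfrac{z}{q-1})^s\bigr)$, and for $w/n\ge\eps$ the second term is exponentially smaller in $s$ at the saddle point $z\approx w/(n-w)$. This yields $\theta(s)=O(e^{-\Omega(\eps s)})$ plus polynomial factors.

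I expect this moderate-weight estimate to be the main obstacle. The delicate parts are making the saddle-point bound rigorous and uniform, and choosing $\eps$ (fixed by the small-weight range, depending on $s$) compatibly with $s_0$. If that interplay fails, I would first try fixing $\eps$ depending only on $q$ and $R$ by using the cruder small-weight bound with $t$ replaced by a constant number of layers. The remaining layers would then be used only in the moderate range.
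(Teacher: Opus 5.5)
The paper does not prove this lemma; it imports it from \cite[Theorem 2.14]{mrrsw20}. Your first-moment route is essentially Gallager's weight-enumerator argument, and its ingredients are sound: $p_w=\pi_w^t$, the formula for $\rho_j$, the identity $\pi_w=[z^w]f(z)^{n/s}/\binom{n}{w}$ (which is in fact exact), and the small-weight estimate $\pi_w\le (O(sw/n))^{w/2}$.

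\textbf{The gap.} As written, the two ranges do not meet, exactly where you feared.
\begin{itemize}
\item The small-weight estimate is vacuous once $sw/n\gtrsim 1$. So it covers at most $w\le cn/s$ for a small absolute constant $c$; that is, $\eps=\Theta(1/s)$.
\item At $w=cn/s$ your error term is $\theta=\log_q\bigl(1+(q-1)(1-\tfrac{qc}{(q-1)s})^s\bigr)$. Bernoulli's inequality gives $1-\theta\le\log_q\frac{1}{1-c}=O(c)$, so $\theta$ is close to $1$, not $o_s(1)$.
\item Your uniform bound pairs this worst-case $\theta$ (attained at $w=\eps n$) with the worst-case entropy $h_q(\delta_0)$ (attained at $w=\delta_0 n$). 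By concavity of $h_q$, $h_q(\delta_0)\ge\tfrac12(1-R)$. The resulting exponent is $h_q(\delta_0)-(1-R)(1-\theta)\ge \tfrac12(1-R)-O(c(1-R))>0$, so there is no decay.
\end{itemize}
Your fallback does not help. With $t_0$ layers the crude bound only reaches $w/n\lesssim s^{-t_0/(t_0-2)}$, which is even smaller. More fundamentally, at $w=\eps n$ with $\eps$ fixed and $s$ large, each layer contributes only about $q^{-n/s}$, against about $q^{h_q(\eps)n}$ words. So a first-moment bound there needs roughly $s\,h_q(\eps)$ layers, not $O(1)$.

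\textbf{The fix.} What is missing is to not uniformize $\theta$. Take $\omega=w/n$ and $z=\omega/(1-\omega)$ in your saddle bound, and use $\binom{n}{w}\ge \omega^{-w}(1-\omega)^{-(n-w)}/(n+1)$. This gives, pointwise in $w$,
\[
\pi_w\le (n+1)\,q^{-\frac{n}{s}(1-\theta_\omega)},\qquad \theta_\omega\coloneqq\log_q\Bigl(1+(q-1)\bigl(1-\tfrac{q\omega}{q-1}\bigr)^s\Bigr).
\]
So the $w$-th term of your sum is at most $(n+1)^t q^{\,n(h_q(\omega)-(1-R)(1-\theta_\omega))}$. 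Let $\gamma\coloneqq 1-R-h_q(\delta_0)>0$, and note that $(1-\frac{q\omega}{q-1})^s\le e^{-\omega s}$.
\begin{itemize}
\item Choose $K=K(R,q)$ with $(1-R)\log_q(1+(q-1)e^{-K})\le\gamma/2$. Then the exponent is at most $-\gamma/2$ whenever $\omega s\ge K$.
\item When $c\le\omega s\le K$, we have $h_q(\omega)\le h_q(K/s)\to 0$ as $s\to\infty$. Meanwhile $1-\theta_\omega\ge 1-\log_q(1+(q-1)e^{-c})>0$, a lower bound independent of $s$.
\end{itemize}
Hence for $s>s_0(R,q)$ every $w\in[cn/s,\delta_0 n]$ contributes $q^{-\Omega(n)}$, which dovetails with your small-weight range. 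This works because near $w=cn/s$ the entropy is only $O(\log s/s)$, not $h_q(\delta_0)$.
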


The next lemma is the fixed-pair containment estimate for row-span constrained local profiles.
This lemma follows \cite{mrrsw20}, but we restate it in our framework and provide a proof in \cref{app: RLDPC} for completeness.

\begin{restatable}[RLDPCs containing local profiles]{lemma}{rldpclocalprofiles}\label{lem: RLDPC-threshold}
    For any $\eps, R \in (0,1)$, positive integer $r$, finite field $\FF_q$, there exists an $s_0=s_0(\eps, r, R, q)$ such that the following holds for all odd $s>s_0$ and sufficiently large $n$.
   Let $\mbf{V}=(V_1,\ldots,V_n)$ be an $r$-local profile and let
$U\in\mc{L}(\FF_q^r)$. For $C=\mathsf{RLDPC}(n,q,s,R)$,
\begin{equation}
\mb{P}\left(
    C\text{ contains }(\mathbf{V},U)
\middle|
    \mathsf{Good}
\right)
\le
q^{n\phi_{\mathbf{V}}(U,R)+\eps n} \quad\text{where}\quad \mathsf{Good}
    \coloneqq
    \left\{
        \Delta(C)\ge \frac{1}{2}h_q^{-1}(1-R)
    \right\}.
\end{equation}
\end{restatable}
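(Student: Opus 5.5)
The plan is to follow the first-moment strategy of \cite{mrrsw20}, adapted to the row-span constraint.

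\textbf{Reduction to a sum over witnesses.} If $C$ contains $(\mathbf V,U)$, then some matrix $M\in\mc{M}_{\mathbf V}(U)$ has all columns in $C$. A union bound gives
\[
\mb{P}\bigl(C\text{ contains }(\mathbf V,U)\,\big|\,\mathsf{Good}\bigr)\le \sum_{M\in\mc{M}_{\mathbf V}(U)}\mb{P}\bigl(M_{\star j}\in C\ \forall j\,\big|\,\mathsf{Good}\bigr).
\]
By \cref{lem:profile-size}, $|\mc{M}_{\mathbf V}(U)|\le q^{n((1-R)\dim U+\phi_{\mathbf V}(U,R))}$. It therefore suffices to show that every $M$ with row-span exactly $U$ (so $\operatorname{rank} M=\dim U$) satisfies
\[
\mb{P}\bigl(M_{\star j}\in C\ \forall j\,\big|\,\mathsf{Good}\bigr)\le q^{-(1-R)n\operatorname{rank}M+\eps n}.
\]
This is the LDPC analogue of \cref{fact:lms-obs}, losing only a factor $q^{\eps n}$.

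\textbf{Per-matrix estimate.} Write $\rho=\operatorname{rank}M$. The conditioning costs at most a factor $1/\mb{P}(\mathsf{Good})=1+o(1)$ by \cref{lem:rldpc-good-distance}. If $M$ has a nonzero column-span vector of relative weight below $\tfrac12 h_q^{-1}(1-R)$, the conditional probability is $0$, since under $\mathsf{Good}$ no such codeword exists. So we may assume every nonzero vector in the column span of $M$ has weight at least $\delta_0 n$, with $\delta_0=\tfrac12h_q^{-1}(1-R)$.

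The $t=(1-R)s$ layers $G_i$, together with their weights, are independent. Within one layer the $n/s$ parity checks partition $[n]$ into a uniformly random partition into blocks of size $s$, with independent uniform weights in $\FF_q^*$. Hence
\[
\mb{P}(\text{all columns in } C)=p_M^{\,t},
\]
where $p_M$ is the probability that every check of one layer annihilates all columns of $M$. The goal is $p_M\le q^{-\rho n/s+\eps n/t}$.

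To bound $p_M$, I will condition on the partition and treat blocks as nearly independent. For a single block $B$ of size $s$, the rows $M_{B\star}$ span some space. Random nonzero weights annihilate that span with probability roughly $q^{-\dim}$, plus a correction from $\FF_q^*$ versus $\FF_q$ and from low-weight patterns. This correction is controlled by a Fourier/Krawtchouk computation as in \cite{mrrsw20}, using the fact that $s$ is odd and large.

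The row-type distribution of $M$ (the fraction of each row vector in $U$) determines $p_M$ up to $q^{o(n)}$ via a multinomial/large-deviations count over partitions. The distance assumption then shows that every block has full rank $\rho$, up to an exponentially small correction, once $s\ge s_0(\eps,r,R,q)$.

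\textbf{Main obstacle.} The hard step is this single-layer estimate: showing $p_M\le q^{-\rho n/s+\eps n/t}$ uniformly over all $M$ whose column span has distance at least $\delta_0$. I would first try the generating-function approach of \cite{mrrsw20}. Write $p_M$ via a saddle point over type distributions of the rows, then bound the per-block character sum $\mb{E}_w\prod\chi(\cdot)$ by $q^{-\rho}(1+(1-\delta_0)^{\Omega(s)}\,q^{r})$ or similar. Raising to the power $n/s$ and then $t$ gives the error $q^{\eps n}$ for $s$ large.

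Multiplying the per-matrix bound by the count from \cref{lem:profile-size} yields $q^{n\phi_{\mathbf V}(U,R)+\eps n}$, as desired.
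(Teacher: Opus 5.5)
Your reduction matches the paper's proof. You use a union bound over $\mc{M}_{\mathbf V}(U)$ and the count from \cref{lem:profile-size}. You then get a per-matrix bound $q^{(R-1)\rho n+\eps n}$ by noting that, under $\mathsf{Good}$, no $M$ whose column span contains a nonzero vector of relative weight below $\tfrac12 h_q^{-1}(1-R)$ can lie in $C$, and by paying $1/\mb{P}(\mathsf{Good})$ for the conditioning.

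The difference is in the step you call the main obstacle, which the paper does not re-prove. It is exactly \cite[Lemma 2.13]{mrrsw20} (\cref{lem:rldpc-smooth-matrix}). \cref{lem:rldpc-fixed-matrix} applies it to a submatrix $N$ of $\rho$ independent columns of $M$, which is smooth and has the same column span, with $\xi=\eps/(2r)$. This gives $\mb{P}(N\subset C)\le q^{(1-\xi)(R-1)\rho n}\le q^{(R-1)\rho n+\eps n/2}$, and the remaining $\eps n/2$ absorbs $1/\mb{P}(\mathsf{Good})$. You should cite that lemma instead of relying on your heuristic sketch, parts of which would not work as stated:
\begin{itemize}
\item Over $\mb{F}_2$ every weight is $1$. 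Once you condition on the partition, each check holds with probability $0$ or $1$, so the weights cannot supply a per-block $q^{-\dim}$ bound.
\item It is not true that every block has full rank. For example, if a $(1-\delta_0)$-fraction of the rows of $M$ vanish, about a $(1-\delta_0)^s$-fraction of blocks are identically zero.
\item Full rank of a block does not by itself give a $q^{-\rho}$ bound. For $q=3$ and $\rho=1$, a block with exactly two nonzero entries, both equal to $1$, passes with probability $1/2$.
\end{itemize}
Your final suggestion is the right kind of argument and is in line with \cite{mrrsw20}, but it is precisely what the cited lemma already provides. That suggestion is a saddle-point bound over row types, combined with a character-sum estimate in which smoothness bounds the bias in each nonzero direction and odd $s$ handles the sign for $q=2$.
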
%

We next obtain the thresholds for RLPDC codes. The  changes from the RLC counterparts are:
\begin{itemize}
    \item We need to condition on the good-distance event $\mathsf{Good}$ and then account for the error term $\mb{P}(\mathsf{Good}^c)=o_{n\to\infty}(1)$ from \cref{lem:rldpc-good-distance}.
    \item We apply the LDPC-specific replacement \cref{lem: RLDPC-threshold} instead of the RLC first moment bound \cref{fact:lms-obs}, with an additional $q^{\eta n/2}$ slack in the exponent. 
\end{itemize}

Other than these changes, we follow \cref{thm:rlc-threshold-pair}: the low-rate direction again uses the quotient lemma to reduce containment below threshold to containment of a quotient pair with negative potential; the high-rate direction again uses the deterministic lower bound on the number of contained matrices with row-span contained in $U$, followed by Markov's inequality to show that lower-dimensional row-spans cannot account for all such matrices. 

\begin{theorem}[Fixed-Pair Random LDPC Threshold]
\label{thm:rldpc-threshold-pair}
Let $(\mbf{V},U)$ be an $r$-local pair and let
$C=\mathsf{RLDPC}(n,q,s,R)$ be a random LDPC code from Gallager's ensemble. Define the event 
\begin{equation}
    \mathsf{Good}
    \coloneqq
    \left\{
        \Delta(C)\ge \frac{1}{2}h_q^{-1}(1-R)
    \right\}.
\end{equation}

For every $R,\eta\in(0,1)$, positive integer $r$, and finite field $\FF_q$,
there exists $s_0=s_0(\eta,r,R,q)$ such that for every odd $s>s_0$, the
following holds for all sufficiently large $n$:
\begin{enumerate}
    \item If $R\le R_{\mbf{V}}(U)-\eta$, then $\mb{P}\left(C\text{ contains }(\mbf{V},U)\middle|\mathsf{Good}\right)
        \le
     q^{-\eta n/2}$, and so
    \begin{equation}
        \mb{P}\left(C\text{ contains }(\mbf{V},U)\right)
        \le
     q^{-\eta n/2}+o_{n\to\infty}(1).
    \end{equation}

    \item If $R\ge R_{\mbf{V}}(U)+\eta$, then $\mb{P}\left(C\text{ contains }(\mbf{V},U)\middle|\mathsf{Good}\right)
        \ge
        1-q^{r^2-\eta n/2}$, and so
    \begin{equation}
        \mb{P}\left(C\text{ contains }(\mbf{V},U)\right)
        \ge
        1-q^{r^2-\eta n/2}-o_{n\to\infty}(1).
    \end{equation}
\end{enumerate}
\end{theorem}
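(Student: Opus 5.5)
We follow the proof of \cref{thm:rlc-threshold-pair} essentially line by line. We condition everything on $\mathsf{Good}$ and replace the exact RLC containment probability \cref{fact:lms-obs} by the conditional bound of \cref{lem: RLDPC-threshold}, applied with slack $\eps=\eta/2$. Fix $s_0=s_0(\eta/2,r,R,q)$ large enough that both \cref{lem: RLDPC-threshold} (with $\eps=\eta/2$) and \cref{lem:rldpc-good-distance} apply. The unconditional statements then follow from
\[
\mb{P}(E)\le \mb{P}(E\mid\mathsf{Good})+\mb{P}(\mathsf{Good}^c)
\qquad\text{and}\qquad
\mb{P}(E)\ge \mb{P}(E\mid\mathsf{Good})-\mb{P}(\mathsf{Good}^c),
\]
with $\mb{P}(\mathsf{Good}^c)=o_{n\to\infty}(1)$.

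For the low-rate direction (1), suppose $C$ contains $(\mbf{V},U)$ and $R\le R_\mbf{V}(U)-\eta$. \cref{lem:key-factored} gives a surjection $\pi$ with kernel $W\subsetneq U$, determined by $(\mbf{V},U,R,\eta)$ alone and not by $C$, such that:
\begin{itemize}
\item $C$ contains $(\pi\mbf{V},\pi U)$;
\item $\phi_{\pi\mbf{V}}(\pi U,R)\le-\eta\dim\pi U\le -\eta$.
\end{itemize}
Because $\pi$ is independent of $C$, \cref{lem: RLDPC-threshold} applies to the fixed $r'$-local pair $(\pi\mbf{V},\pi U)$ with $r'\le r$. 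This gives a conditional probability at most $q^{-\eta n+\eta n/2}=q^{-\eta n/2}$. One point to check is that $s_0$ must be uniform over locality $r'\le r$. I would handle this by taking the maximum of the finitely many $s_0(\eta/2,r',R,q)$, or by noting that $\pi$ can be realized as a map $\mb{F}_q^r\to\mb{F}_q^r$ with image of dimension $r'$ padded by zeros.

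For the high-rate direction (2), let $X_W$ count witness matrices for $(\mbf{V},W)$ in $C$. Each such matrix has fixed row-span $W$, so it is one fixed matrix. Summing \cref{lem: RLDPC-threshold}, applied to each individual matrix, over all of them gives $\mb{E}[X_W\mid\mathsf{Good}]\le q^{n\phi_\mbf{V}(W,R)+\eta n/2}$. Concretely, bound $\mb{P}(M\subset C\mid\mathsf{Good})$ for a single $M$ by viewing $M$ itself as a witness: take the profile whose $i$-th entry is the span of row $i$ of $M$. Then multiply by the count from \cref{lem:profile-size}. I expect this bookkeeping to be the main obstacle. The appendix lemma must really be a per-matrix bound of the form $q^{(R-1)n\dim W+\eta n/2}$, rather than only a bound on pair-containment. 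I would first try to extract this per-matrix statement from the proof of \cref{lem: RLDPC-threshold}, since that proof is a first-moment argument.

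Next comes the deterministic counting step. Every realization of $C$ has codimension at most $(1-R)n$. Hence the number of $M\in\mc{M}^\star_\mbf{V}(U)$ with all columns in $C$ is at least $q^{n\phi_\mbf{V}(U,R)}$, and this count equals $\sum_{W\subseteq U}X_W$. Now suppose $X_U=0$. Then $\sum_{W\subsetneq U}X_W\ge q^{n\phi_\mbf{V}(U,R)}$. Apply Markov's inequality conditional on $\mathsf{Good}$, and use $\phi_\mbf{V}(U,R)-\phi_\mbf{V}(W,R)\ge\eta$ from \cref{lem:threshold-interpretation}. This gives
\[
\mb{P}(X_U=0\mid\mathsf{Good})\le |\mc{L}(U)|\,q^{-\eta n+\eta n/2}\le q^{r^2-\eta n/2},
\]
and removing the conditioning as above yields the stated bound.
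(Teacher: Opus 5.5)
Your proposal follows the paper's proof. Part (1) conditions on $\mathsf{Good}$ and applies \cref{lem:key-factored} (whose $\pi$ depends only on $(\mbf{V},U,R,\eta)$) together with \cref{lem: RLDPC-threshold} at $\eps=\eta/2$. Part (2) uses the deterministic lower bound $\sum_{W\subseteq U}X_W\ge q^{n\phi_{\mbf{V}}(U,R)}$ and conditional Markov. Both then remove the conditioning at cost $\mb{P}(\mathsf{Good}^c)=o_{n\to\infty}(1)$.

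The one step that would fail is your ``concrete'' route to a per-matrix bound. Take $V'_i=\on{span}(M_{i\star})$ and let $W$ be the row-span of $M$. Then $\phi_{\mbf{V}'}(W,R)=(R-1)\dim W+z/n$, where $z$ is the number of nonzero rows of $M$. So \cref{lem: RLDPC-threshold} only gives $\mb{P}(M\subset C\mid\mathsf{Good})\le q^{(R-1)n\dim W+z+\eps n}$. The extra factor $q^{z}$ can be as large as $q^{n}$, far beyond the $q^{\eta n/2}$ slack left in the Markov step.

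Changing the profile does not help. The span-of-rows profile is already the smallest one $M$ satisfies, and any larger profile has larger potential. It also cannot separate $M$ from its $(q-1)^z$ row-rescalings, which have the same row span.

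Your fallback is the right fix. The per-matrix estimate $\mb{P}(M\subset C\mid\mathsf{Good})\le q^{(R-1)n\on{rank}(M)+\eps n}$ is exactly \cref{lem:rldpc-fixed-matrix} in the appendix. Summing it over $\mc{M}_{\mbf{V}}(W)$ using \cref{lem:profile-size} gives \cref{eq:EXW-good}. The paper's own citation of \cref{lem: RLDPC-threshold} for a bound on $\mb{E}[X_W\mid\mathsf{Good}]$ implicitly relies on this union bound, which is how that lemma is proved. So your concern about that step was justified.
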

\begin{proof}
By \cref{lem:rldpc-good-distance}, after increasing $s_0$ if necessary,
$\mb{P}(\mathsf{Good})\ge 1-o_{n\to\infty}(1)$, where
\begin{equation}
    \mathsf{Good}
    \coloneqq
    \left\{
        \Delta(C)\ge \frac{1}{2}h_q^{-1}(1-R)
    \right\}.
\end{equation}

We first prove the low-rate case (1). Suppose
$R\le R_{\mbf{V}}(U)-\eta$. By \cref{lem:key-factored}, there exists a
surjective linear map $\pi:\FF_q^r\to\FF_q^{r'}$ such that if $C$ contains
$(\mbf{V},U)$, then $C$ contains $(\pi\mbf{V},\pi U)$, and
$\phi_{\pi\mbf{V}}(\pi U,R)\le -\eta$. Therefore, by
\cref{lem: RLDPC-threshold} with parameter $\eta/2$,
\begin{equation}
    \mb{P}\left( C\text{ contains }(\mbf{V},U)\middle|\mathsf{Good} \right)
\le
\mb{P}\left(C\text{ contains }(\pi\mbf{V},\pi U)\middle|\mathsf{Good}
\right)\le
q^{n\phi_{\pi\mbf{V}}(\pi U,R)+\eta n/2} \le
q^{-\eta n/2}.
\end{equation}
Adding $\mb{P}(\mathsf{Good}^c)=o_{n\to\infty}(1)$ gives the low-rate case (1).

For the high-rate case (2), we know that
$R\ge R_{\mbf{V}}(U,W)+\eta$ for every $W\subsetneq U$. By \cref{lem:threshold-interpretation}, $\phi_{\mbf{V}}(U,R)\ge \phi_{\mbf{V}}(W,R)+\eta$.
Let random variable $X_W$ be the number of witnesses matrices $M$ for the containment by $C$ of an $r$-local pair $(\mathbf{V}, W)$. By
\cref{lem: RLDPC-threshold} with parameter $\eps=\eta/2$, 
\begin{equation}\label{eq:EXW-good}
    \mb{E}\left[X_W\mid \mathsf{Good}\right]
    \le
    q^{n\phi_{\mbf{V}}(W,R)+\eta n/2},
\end{equation}
for every $W\in\mc{L}(U)$.
Next, as in the RLC case, we let
$\mc{M}_{\mbf{V}}^\star(U)$ be the vector space of matrices
$M\in\FF_q^{n\times r}$ satisfying $\mbf{V}$ and having row-span contained
in $U$. By \cref{lem:profile-size},
\begin{equation}
    |\mc{M}_{\mbf{V}}^\star(U)|
    =
    q^{n((1-R)\dim U+\phi_{\mbf{V}}(U,R))}.
\end{equation}
Now, for any realization of $C$, requiring each column of $M$ to lie in $C$ imposes at most $(1-R)n\dim U$ independent constraints, since $C$ has codimension $(1-R)n$ and the row-span is contained in $U$. Therefore, the number of $M\in\mc{M}_\mbf{V}^\star(U)$ contained in $C$ is 
\begin{equation}
    \sum_{W\in\mc{L}(U)}X_W
    \ge
    |\mc{M}_{\mbf{V}}^\star(U)|q^{(R-1)n\dim U}
    =
    q^{n\phi_{\mbf{V}}(U,R)}.
\end{equation}
deterministically for every realization $C$.
Now, $C$ contains $(\mbf{V}, U)$ if and only if $X_U>0$. Therefore,
conditioned on $\mathsf{Good}$, we apply Markov's inequality and \cref{eq:EXW-good} as before to obtain
\begin{equation}
\begin{aligned}
\mb{P}\left(
    C\text{ does not contain }(\mbf{V},U)
    \middle|
    \mathsf{Good}
\right)
& = \mb{P}\left(
   X_U=0
    \middle|
    \mathsf{Good}
\right) 
\\
&\le
\mb{P}\left(
    \sum_{W\subsetneq U}X_W
    \ge q^{n\phi_{\mbf{V}}(U,R)}
   \middle| 
    \mathsf{Good}
\right) \\
&\le
\frac{
    \sum_{W\subsetneq U}\mb{E}\left[X_W\mid \mathsf{Good}\right]
}{
    q^{n\phi_{\mbf{V}}(U,R)}
} \\
&\le
\sum_{W\subsetneq U}
q^{n(\phi_{\mbf{V}}(W,R)-\phi_{\mbf{V}}(U,R))+\eta n/2} \\
&\le
|\mc{L}(U)|q^{-\eta n/2} \\
&\le
q^{r^2-\eta n/2}.
\end{aligned}
\end{equation}
as $|\mc{L}(U)|\le q^{r^2}$.
Adding $\mb{P}(\mathsf{Good}^c)=o_{n\to\infty}(1)$ gives
the high-rate case (2).
\end{proof}

\begin{remark}
We will only use the statement (1) from \cref{thm:rldpc-threshold-pair} later in this paper, but we include the statement and proof of (2) for completeness.
This should not be confused with a converse to the local-property transfer theorem for RLDPC codes (\cite[Theorem 1.9]{mrrsw20}), which is false at that level of generality, as observed in \cite{mrrsw20}; our argument uses the additional linear coordinate-wise profile structure of row-span constrained LCL properties.
\end{remark}

As before, we now lift \cref{thm:rldpc-threshold-pair} to the whole row-span constrained LCL family $\mc{F}$.

\begin{corollary}[Random LDPC Code Threshold]
\label{cor:rldpc-threshold-family}
Let $\mc{F}$ be a row-span constrained $r$-local property and let
$C=\mathsf{RLDPC}(n,q,s,R)$ be a random LDPC code from Gallager's ensemble (\cref{def: RLDPC}).
For every $R,\eta\in(0,1)$, positive integer $r$, and finite field $\FF_q$,
there exists $s_0=s_0(\eta,r,R,q)$ such that for every odd $s>s_0$, the
following holds for all sufficiently large $n$:
\begin{enumerate}
    \item If $R\le R_{\mc{F}}-\eta$, then 
    \begin{equation}
    \mb{P}\left(C\text{ contains }\mc{F}\right)\le \left|\mc{F}\right|q^{-\eta n/2}+o_{n\to\infty}(1)
    \end{equation}
    \item  If $R\ge R_{\mc{F}} +\eta$, then 
    \begin{equation}
    \mb{P}\left(C\text{ contains }\mc{F}\right)\ge 1- q^{r^2-\eta n/2}-o_{n\to\infty}(1)
    \end{equation}
\end{enumerate}
\end{corollary}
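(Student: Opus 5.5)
The plan is to lift \cref{thm:rldpc-threshold-pair} to the family $\mc{F}$ exactly as \cref{cor:rlc-threshold-family} lifted \cref{thm:rlc-threshold-pair}. The only new care is the conditioning on $\mathsf{Good}$: we pay the $o_{n\to\infty}(1)$ failure probability once overall rather than once per pair. First we fix $R,\eta,r,q$ and take $s_0=s_0(\eta,r,R,q)$ from \cref{thm:rldpc-threshold-pair}. Since this $s_0$ depends only on $(\eta,r,R,q)$ and not on the pair $(\mbf{V},U)$, the same $s_0$ serves simultaneously for every pair in $\mc{F}$. We then take $n$ large enough for that theorem.

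For (1), assume $R\le R_{\mc{F}}-\eta$. For each $(\mbf{V},U)\in\mc{F}$ we have $R_{\mc{F}}\le R_{\mbf{V}}(U)$, so $R\le R_{\mbf{V}}(U)-\eta$. The conditional statement in \cref{thm:rldpc-threshold-pair}(1) then gives $\mb{P}(C\text{ contains }(\mbf{V},U)\mid\mathsf{Good})\le q^{-\eta n/2}$. We union bound over $\mc{F}$ inside the conditioning to get $\mb{P}(C\text{ contains }\mc{F}\mid \mathsf{Good})\le |\mc{F}|q^{-\eta n/2}$. Finally we write
\[
\mb{P}(C\text{ contains }\mc{F})\le \mb{P}(C\text{ contains }\mc{F}\mid\mathsf{Good})+\mb{P}(\mathsf{Good}^c),
\]
and bound the last term by $o_{n\to\infty}(1)$ using \cref{lem:rldpc-good-distance}. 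Possibly we enlarge $s_0$ so that it also exceeds the $s_0(R,q)$ of that lemma; this is harmless.

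For (2), assume $R\ge R_{\mc{F}}+\eta$. We choose the pair $(\mbf{V},U)\in\mc{F}$ attaining the minimum in \cref{eq:Rf}, so that $R_{\mc{F}}=R_{\mbf{V}}(U)$. This is a minimum over a finite set, since $\mc{L}(\mb{F}_q^r)^n\times\mc{L}(\mb{F}_q^r)$ is finite. Then $R\ge R_{\mbf{V}}(U)+\eta$, and \cref{thm:rldpc-threshold-pair}(2) gives $\mb{P}(C\text{ contains }(\mbf{V},U))\ge 1-q^{r^2-\eta n/2}-o_{n\to\infty}(1)$. Containing this single pair means containing $\mc{F}$, which proves (2).

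There is essentially no obstacle here. The one point to check is that the constants ($s_0$, and the threshold for ``sufficiently large $n$'') in \cref{thm:rldpc-threshold-pair} and \cref{lem: RLDPC-threshold} are uniform in the pair $(\mbf{V},U)$. That uniformity is what allows a single choice to work across the whole union bound. Both results state their $s_0$ as depending only on $(\eps,r,R,q)$, so this holds, provided the ``sufficiently large $n$'' condition is likewise pair-independent, which we would confirm against the proof in \cref{app: RLDPC}.
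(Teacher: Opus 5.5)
Your proposal is correct and follows the paper's proof essentially line for line. For (1), you union bound over $\mc{F}$ conditioned on $\mathsf{Good}$ and pay $\mb{P}(\mathsf{Good}^c)=o_{n\to\infty}(1)$ once; for (2), you apply \cref{thm:rldpc-threshold-pair}(2) to a minimizing pair, and your check that $s_0$ and the large-$n$ threshold are uniform over pairs makes explicit a point the paper uses implicitly.
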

\begin{proof}
For (1), note that by a union bound and \cref{thm:rldpc-threshold-pair}(1), we have that
\begin{equation}
    \mb{P}\left(C\text{ contains }\mathcal{F}\right)\le  \mb{P}\left(\mathsf{Good}^c\right)+  \sum_{(\mathbf{V}, U)\in\mathcal{F}}\mb{P}\left(C\text{ contains }(\mathbf{V}, U)\middle|\mathsf{Good}\right) \le o_{n\to\infty}(1) + |\mc{F}|q^{-\eta n/2},
\end{equation}
as $R\le R_{\mc{F}}\le R_{\mbf{V}}(U)$ for any $(\mbf{V}, U)\in\mc{F}$. This proves (1).

For (2), let $(\mbf{V}, U)$ be the minimizing pair such that $R_\mc{F}=R_{\mbf{V}}(U)$. Then, $R\ge R_\mbf{V}(U)+\eta$, so 
\begin{equation}
\mb{P}\left(C\text{ contains }(\mbf{V}, U)\right)\ge 1- q^{r^2-\eta n/2}-o_{n\to\infty}(1)
\end{equation}
by \cref{thm:rldpc-threshold-pair}(2).  If $C$ contains $(\mathbf{V}, U)$, then $C$ contains $\mc{F}$ by definition, which establishes the corollary.
\end{proof}

\subsection{Threshold for Subspace Design Codes}

Informally, for ordinary $r$-local properties where every row-span $U\in \mc{L}_{\on{dist}}(\mb{F}_q^r)$ is permitted for each profile $\mbf{V}$, \cite[Theorems 3.1, 4.2]{bcdz25b} shows that the threshold of containing $\mbf{V}$ is dictated by $\tau(r)$ versus $R_\mbf{V}$. For good subspace design codes, like folded Reed-Solomon codes, we have $\tau(r)\approx R$, which means that good subspace design codes and random linear codes share the same threshold. 
We generalize these results to row-span constrained local properties. We emphasize that the transfer works at the level of any pair $(\mbf{V}, U)$ via the following theorem.
\begin{theorem}[Fixed-Pair Subspace Design Code Threshold]\label{thm:sdc-threshold-pair}
Let $(\mbf{V}, U)$ be an $r$-local pair. Then
\begin{enumerate}
	\item Let $C\subset(\mb{F}_q^s)^n$ be an additive $\tau$-subspace design code. If $\tau(r) < R_\mbf{V}(U)$, then $C$ does not contain $(\mbf{V}, U)$.
	\item For any $\eta>0$, there exists 
    a $\tau$-subspace design additive code $C\subset(\mb{F}_q^s)^n$ with $\tau(r)\le R_{\mbf{V}}(U)+\eta$ that contains $(\mbf{V}, U)$.
\end{enumerate}
\end{theorem}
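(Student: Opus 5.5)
The plan is to prove part~1 by quotienting down to a pair whose threshold is witnessed by $W=0$, and then reading the subspace design inequality off the column span of the witness matrix. Part~2 is an explicit construction in which the code is the column span of a witness matrix.

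\textbf{Part 1.} Suppose $C$ contains $(\mbf{V},U)$, and suppose for contradiction that $\tau(r)<R_\mbf{V}(U)$. Pick $W\subsetneq U$ with $R_\mbf{V}(U,W)=R_\mbf{V}(U)$, and a surjection $\pi$ with $\ker\pi=W$.
\begin{enumerate}
\item By \cref{lem: quotienting preserves containment}, which works verbatim for the duplicated profile $\mbf{V}^{\otimes s}$, $C$ contains $(\pi\mbf{V},\pi U)$.
\item By \cref{lem:quotienting}, $\phi_{\pi\mbf{V}}(\pi U,R)=\phi_\mbf{V}(U,R)-\phi_\mbf{V}(W,R)$ for all $R$. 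Comparing roots gives $R_{\pi\mbf{V}}(\pi U,0)=R_\mbf{V}(U,W)$. So it suffices to prove that whenever $C$ contains a pair $(\mbf{V}',U')$ with $d\coloneqq\dim U'\le r$, we have $R_{\mbf{V}'}(U',0)\le\tau(d)\le\tau(r)$.
\item Let $M\in\mb{F}_q^{sn\times r'}$ be a witness for $(\mbf{V}'^{\otimes s},U')$, and let $A\subset C$ be its column span. Then $\dim A=\operatorname{rank}M=d$.
\item For a block $i$, let $S_i$ be the span of the $s$ rows of $M$ indexed by $(i,1),\dots,(i,s)$. We have $S_i\subset U'\cap V'_i$. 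The space $A_i=\{a\in A:a_i=0\}$ is the image under $x\mapsto Mx$ of the annihilator of $S_i$. Hence $\dim A_i=d-\dim S_i\ge d-\dim(U'\cap V'_i)$.
\item Averaging over $i$ and applying the $\tau$-subspace design property to $A$ gives
\[
1-\mb{E}_i\!\left[\frac{\dim(U'\cap V_i')}{d}\right]\;\le\;\mb{E}_i\!\left[\frac{\dim A_i}{d}\right]\;\le\;\tau(d)\;\le\;\tau(r).
\]
The left-hand side is exactly $R_{\mbf{V}'}(U',0)$, which is the desired contradiction.
\end{enumerate}

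\textbf{Part 2.} The code will be $C\coloneqq$ the column span of a specially chosen witness matrix. Choose $s\ge r$. For each $i$, fill the $s$ rows of block $i$ with a basis of $U\cap V_i$, padded with zero rows, so that $S_i=U\cap V_i$ exactly. Then $M$ satisfies $\mbf{V}^{\otimes s}$, and its row-span is $\sum_i(U\cap V_i)$.
\begin{enumerate}
\item The first step is to handle the degenerate case $\sum_i (U\cap V_i)\subsetneq U$. Here I expect to argue, possibly by perturbing the construction, that it is either excluded or harmless. Setting $W'=\sum_i(U\cap V_i)$ gives $R_\mbf{V}(U,W')=1$, so there the claim reduces to a trivial one, but I would need to check that containment with row-span exactly $U$ can still be arranged.
\item In the main case $C$ contains $(\mbf{V},U)$, and $C$ is isomorphic to $U^*$ via $x\mapsto Mx$, identifying $\mb{F}_q^r/U^\perp$ with the dual of $U$.
\item Any subspace $B\subset C$ of dimension $e\le r$ corresponds to a subspace of $U^*$. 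That subspace is the annihilator of some $W\subsetneq U$ of codimension $e$ in $U$.
\item Exactly as in Part~1, $\dim B_i=e-\dim\bigl(\pi(U\cap V_i)\bigr)$, where $\pi$ is the quotient map by $W$. The rank-nullity step from \cref{lem:quotienting} gives $\dim\pi(U\cap V_i)=\dim(U\cap V_i)-\dim(W\cap V_i)$.
\item Hence $\mb{E}_i[\dim B_i/\dim B]=R_\mbf{V}(U,W)\le R_\mbf{V}(U)$. So every $\sigma(e)$ is at most $R_\mbf{V}(U)$, and therefore $\tau(r)\le R_\mbf{V}(U)$, even without the slack $\eta$.
\end{enumerate}

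The main obstacle I expect is this exact dimension bookkeeping in Part~2, and in particular making sure every subspace of $C$ arises from some $W$. That is where the choice $S_i=U\cap V_i$, requiring $s\ge r$, is essential. The slack $\eta$ is available to absorb any normalization issue, for instance if $s$ must be taken as a multiple, or if the degenerate case forces a perturbation.
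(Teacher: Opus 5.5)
Your Part 1 is essentially the paper's argument. The paper runs it through \cref{lem:key-factored} and a sign contradiction for $\phi_{\pi\mathbf V}(\pi U,\tau(r))$; your identity $R_{\pi\mathbf V}(\pi U,0)=R_{\mathbf V}(U,W)$ is the same computation.

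Part 2 takes a genuinely different route. The paper argues probabilistically. It takes a random $\mathbb F_q$-additive code of rate $R=R_{\mathbf V}(U)+\eta/2$ and large $s$. That code has $\tau(r)\le R+\eta/2$ by \cref{thm:rlc-are-sdc}, and it contains $(\mathbf V,U)$ by the high-rate direction of \cref{thm:rlc-threshold-pair} applied to the unfolded code.

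You instead take $C$ to be the column span of one witness whose $i$-th block spans all of $U\cap V_i$, and your bookkeeping is right. An $e$-dimensional $B\subset C$ equals $M(W^\perp)$ for a unique $W\subsetneq U$ of codimension $e$, where $\perp$ is the annihilator in $\mathbb F_q^r$. Then $B_i=M\bigl((W+S_i)^\perp\bigr)$, which gives $\mathbb E_i[\dim B_i]/e=R_{\mathbf V}(U,W)$ exactly. Hence $\tau(r)=R_{\mathbf V}(U)$ with no slack, no probability, and any $s\ge\max_i\dim(U\cap V_i)$.

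Saturating $S_i=U\cap V_i$ matters for the direction of this inequality, because a smaller $S_i$ would enlarge $B_i$. It is not what makes every subspace of $C$ come from some $W$; that is automatic.

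What you give up is rate. Your code has dimension $\dim U\le r$, so its rate is negligible and unrelated to $\tau(r)$. The paper's code is a good subspace-design code in the sense of \cite{bcdz25b}: $\tau(r)$ is within $\eta/2$ of its own rate $R_{\mathbf V}(U)+\eta/2$. That is what makes (2) say the threshold is sharp for such codes. The literal statement is met by either construction, and so is its use in \cref{cor:cd-casting}, where \cref{thm:sdc-good-cd} has no rate hypothesis.

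Your open item, the degenerate case, cannot be repaired by perturbation. Suppose $\sum_i(U\cap V_i)\subsetneq U$. Every witness has its rows in $U\cap V_i$, since rows lie in $V_i$ and the row-span lies in $U$. So its row-span lies in $\sum_i(U\cap V_i)$, and no code of any folding contains $(\mathbf V,U)$.

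Moreover, $R_{\mathbf V}(U,W)=1$ if and only if $U\cap V_i\subset W$ for all $i$. So this case is exactly $R_{\mathbf V}(U)=1$, and statement (2) is false there as written. The paper's proof silently excludes it too, since its rate $R_{\mathbf V}(U)+\eta/2$ would exceed $1$. More generally, the paper must shrink $\eta$ whenever $R_{\mathbf V}(U)>1-\eta/2$, which your construction never needs.

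You should simply add the hypothesis $\sum_i(U\cap V_i)=U$, equivalently $R_{\mathbf V}(U)<1$. This costs nothing downstream, because \cref{cor:cd-casting} only invokes (2) for a pair with $R_{\mathbf V}(U)=R_{\mathcal F}<1$.
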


The statement (1) in \cref{thm:sdc-threshold-pair} essentially follows from \cref{lem:key-factored} and the subspace design condition.
\begin{proof}[Proof of \cref{thm:sdc-threshold-pair}(1)]
Suppose $\tau (r)\le R_\mbf{V}(U)-\eta$, then by \cref{lem:key-factored} there exists surjective linear map $\pi:\mb{F}_q^r\to\mb{F}_q^{r'}$ such that $\phi_{\pi\mbf{V}}(\pi U, \tau(r))<0$ and $C$ contains $(\pi\mbf{V}, \pi U)$. The latter implies the existence of a witness matrix $M$.  Let the columns of the witness matrix $M$ for $(\pi\mbf{V},\pi U)$ be
$c_j=\on{Enc}(a_j)$
and define the message subspace
$A\coloneqq \on{span}(a_j)$.
Since the encoder is injective and the witness matrix has row-span $\pi U$, $d\coloneqq \dim A=\dim(\pi U).$

For each coordinate $i\in[n]$, the image $\on{Enc}_i(A) \subset  \mathbb{F}_q^s$ is
the column span of the $i$-th ``block'' of the witness matrix. The rows of
that block lie in $\pi U\cap \pi V_i$, hence
\begin{equation}
\dim(\pi U\cap \pi V_i)\ge 	\dim \on{Enc}_i(A) = \dim A - \dim (A\cap \ker(\on{Enc}_i)).
\end{equation}
where the equality follows from rank-nullity.
Averaging over $i$ and applying the subspace design condition \cref{def: Subspace-designs} with $d=\dim A\le r$, we obtain that
\begin{equation}
	d-\mb{E}_{i\in[n]}\dim(\pi U\cap \pi V_i)\le \mb{E}_{i\in[n]}\dim(A\cap\ker(\on{Enc}_i))
	\le
	\tau(r)d.
\end{equation}
Rearranging gives
$\phi_{\pi\mbf{V}}(\pi U,\tau(r))\ge 0$,
a contradiction. Thus, $C$ does not contain
$(\mbf{V},U)$.
\end{proof}

For the statement (2) in \cref{thm:sdc-threshold-pair}, we use the following theorem from \cite{bcdz25b} that shows RLCs are good subspace design codes when viewed as an $\mathbb{F}_q$-additive code with alphabet $\mathbb{F}_q^s$.

\begin{theorem}[{\cite[Theorem 4.1]{bcdz25b}}]\label{thm:rlc-are-sdc}
Fix any $\eta>0$. A random $\mathbb{F}_q$-additive code $C\subset(\mb{F}_q^s)^n$ with rate $R=k/sn$ satisfies
\begin{equation}
\mb{P}\left(\tau(r)\le R+\eta\right)\ge 1-rq^{-\eta srn+3(r^2+1)n}.
\end{equation}
\end{theorem}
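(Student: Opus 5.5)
The plan is a first-moment (union bound) argument over message subspaces, carried out separately for each dimension $d\in[r]$. A final union bound over $d$ supplies the leading factor $r$. Since $\tau(r)=\max_{d\le r}\sigma(d)$, the bad event $\{\tau(r)>R+\eta\}$ is the union over $d\in[r]$ of the events $\{\sigma(d)>R+\eta\}$. By \cref{eq:sdc-message}, $\sigma(d)>R+\eta$ means there is a message subspace $A\subset\mb{F}_q^k$ with $\dim A=d$ such that
\[
\sum_{i=1}^n \dim(A\cap\ker\on{Enc}_i)>(R+\eta)dn .
\]
So I fix $d$, fix $A$, and fix a target profile $(t_1,\dots,t_n)\in\{0,\dots,d\}^n$ with $\sum_i t_i\ge (R+\eta)dn$. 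I then bound the probability that $\dim(A\cap\ker\on{Enc}_i)\ge t_i$ holds for every $i$.

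\emph{Single-coordinate estimate.} Restricted to $A$, each $\on{Enc}_i$ is a uniformly random linear map $A\to\mb{F}_q^s$, and these maps are independent across $i$. The event that its kernel has dimension at least $t$ requires some $t$-dimensional subspace of $A$ to be mapped to $0$. Each fixed such subspace is killed with probability $q^{-ts}$. There are at most $\binom{d}{t}_q\le 4q^{t(d-t)}$ such subspaces. This gives a bound of $q^{-ts+d^2+2}$ per coordinate; the constant in the exponent is where the $3(r^2+1)$ will eventually come from. By independence, the probability for the whole profile is at most $q^{-s\sum_i t_i+(d^2+2)n}\le q^{-s(R+\eta)dn+(r^2+2)n}$.

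\emph{Counting.} The number of $d$-dimensional $A\subset\mb{F}_q^k$ is at most $4q^{dk}=4q^{dRsn}$, using $k=Rsn$. The number of profiles is at most $(d+1)^n\le q^{rn}$. Multiplying these counts by the per-profile probability, the $dRsn$ terms cancel, and for each $d$ we get
\[
\mb{P}\bigl(\sigma(d)>R+\eta\bigr)\le q^{-\eta sdn+(r^2+r+3)n}\le q^{-\eta sdn+3(r^2+1)n}.
\]
A union bound over $d\in[r]$ then gives $1-\sum_{d\le r}q^{-\eta sdn+3(r^2+1)n}$.

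\emph{Main obstacle.} The bound just obtained has exponent $-\eta s d n$, and it is worst at $d=1$, not $d=r$. So the last step, upgrading every term to $q^{-\eta srn+3(r^2+1)n}$, is where I expect the difficulty. My first attempt would be to look for an amplification showing that a bad small-dimensional subspace forces a bad $r$-dimensional one at comparable cost, for instance by enlarging $A$ with random directions. However, a direct calculation suggests this cannot work in general. For $d=1$, a single message vector lying in $(R+\eta)n$ kernels occurs with probability roughly $q^{-\eta sn}$, so the $d=1$ term genuinely seems to be of that order. If no amplification works, I would check the source \cite{bcdz25b} to see whether the statement is meant with $\eta$ rescaled per dimension (so that $\eta sdn$ is uniform), or whether the event is restricted to the top dimension $\sigma(r)$. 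I would prove the per-$d$ bound above as the core content in either case.
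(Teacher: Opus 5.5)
The paper does not prove this statement; it is quoted from \cite{bcdz25b} and used only in the proof of \cref{thm:sdc-threshold-pair}(2). Your per-dimension first-moment argument is correct as far as it goes. The restriction of $\operatorname{Enc}_i$ to $A$ is a uniform linear map $A\to\mathbb{F}_q^s$, independent over $i$. The Gaussian-binomial and profile counts are valid, and the $dRsn$ terms cancel. This gives $\mathbb{P}(\sigma(d)>R+\eta)\le q^{-\eta sdn+3(r^2+1)n}$ for each $d\le r$, and hence $\mathbb{P}(\tau(r)\le R+\eta)\ge 1-rq^{-\eta sn+3(r^2+1)n}$.

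The obstacle you flag is not a missing idea on your part. With $\tau$ the monotone envelope of \cref{def:sdc}, the inequality as transcribed (exponent $-\eta srn$) is false for $r\ge 2$, so no amplification can rescue it. Your $d=1$ heuristic becomes rigorous with a second-moment argument:
\begin{itemize}
\item Assume $R+\eta<1$ and fix $T\subset[n]$ with $|T|=t=\lfloor(R+\eta)n\rfloor+1\le n$.
\item For each of the $(q^k-1)/(q-1)\ge q^{k-1}$ projective representatives $x\in\mathbb{F}_q^k$, let $E_x$ be the event that $\operatorname{Enc}_i(x)=0$ for all $i\in T$. Then $\mathbb{P}(E_x)=q^{-st}$.
\item These events are pairwise independent, because for non-proportional $x,y$ the pair $(\operatorname{Enc}_i(x),\operatorname{Enc}_i(y))$ is uniform on $\mathbb{F}_q^{2s}$.
\item Let $X=\sum_x\mathbf{1}_{E_x}$. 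Then $\mu=\mathbb{E}X\ge q^{k-1-st}\ge q^{-\eta sn-s-1}$, so $\mathbb{P}(\sigma(1)>R+\eta)\ge\mathbb{P}(X>0)\ge\mu/(1+\mu)\ge\tfrac12 q^{-\eta sn-s-1}$.
\end{itemize}
Since $\tau(r)\ge\sigma(1)$, compare this with the claimed failure probability $rq^{-\eta srn+3(r^2+1)n}$. The ratio is at least $\tfrac{1}{2r}q^{s(\eta(r-1)n-1)-1-3(r^2+1)n}$. This tends to infinity as $s\to\infty$ whenever $\eta(r-1)n>1$, for example for any fixed $\eta$ and $n>1/\eta$.

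The exponent $-\eta srn$ would be correct in two other settings. One is the single top-dimensional event $\{\sigma(r)>R+\eta\}$, which is your bound at $d=r$. The other is if dimension $d$ were allowed slack $\eta r/d$. So you should stop at your corrected bound and state it as the result. It is also all the paper actually needs: in the proof of \cref{thm:sdc-threshold-pair}(2), $n,r,\eta$ are fixed and $s\to\infty$, and $rq^{-\eta sn/2+3(r^2+1)n}\to 0$.
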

\begin{proof}[Proof of \cref{thm:sdc-threshold-pair}(2)]
Fix $\eta>0$ and let $R= R_{\mbf{V}}(U)+\eta/2$. For sufficiently large $s$, we know by \cref{thm:rlc-are-sdc} that, for a random $\mathbb{F}_q$-additive code $C \subset (\mathbb{F}_q^s)^n$ of rate $R$,
\begin{equation}
\mb{P}\left(\tau(r)\le R_{\mbf{V}}(U)+\eta=R+\frac{\eta}{2}\right)\ge 1-o_{s\to\infty}(1),
\end{equation}

On the other hand, $R= R_{\mbf{V}}(U)+\eta/2=R_{\mbf{V}^{\otimes s}}(U)+\eta/2$ by \cref{prop:unfolding}.  By \cref{thm:rlc-threshold-pair},
\begin{equation}
\mb{P}(C\text{ contains }(\mbf{V}, U)) = \mb{P}(\text{unfolded }C\text{ contains }(\mbf{V}^{\otimes s}, U)) \ge 1-q^{r^2-\eta sn/2}=1-o_{s\to\infty}(1).
\end{equation}
Therefore, we can pick sufficiently large $s$ such that with probability $2/3$, $C$ contains $(\mbf{V}, U)$ and is a $\tau$-subspace design code with $\tau(r)\le R_\mbf{V}(U)+\eta$. This proves (2).
\end{proof}
\begin{remark}
Compared to \cite[Theorem 4.2]{bcdz25b}, we take the subspace design parameter $d$ in $\tau(d)$ to be $d=r$ for simplicity.
\end{remark}

We can obtain analogous threshold statements for $R_{\mc{F}}$ by taking the minimum over $\mc{F}$.
\begin{corollary}[Subspace Design Code Threshold]
\label{cor:sdc-threshold-family}
Let $\mc{F}$ be any row-span constrained $r$-local LCL family. Then
\begin{enumerate}
	\item Let $C\subset(\mb{F}_q^s)^n$ be an additive $\tau$-subspace design code. If $\tau(r) < R_\mc{F}$, then $C$ avoids $\mathcal{F}$. 
	\item For any $\eta>0$, there exists 
    a $\tau$-subspace design additive code $C\subset(\mb{F}_q^s)^n$ with $\tau(r)\le R_\mc{F}+\eta$ that contains $\mc{F}$.
\end{enumerate}
\end{corollary}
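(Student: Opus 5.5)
This corollary lifts \cref{thm:sdc-threshold-pair} from a single pair to the whole family, in the same way \cref{cor:rlc-threshold-family} lifted \cref{thm:rlc-threshold-pair}. We use only the definition $R_{\mc{F}}=\min_{(\mbf{V},U)\in\mc{F}}R_{\mbf{V}}(U)$ and the fact that containing $\mc{F}$ means containing some pair in $\mc{F}$.

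For part (1), let $C$ be a $\tau$-subspace design code with $\tau(r)<R_{\mc{F}}$. Fix an arbitrary pair $(\mbf{V},U)\in\mc{F}$. By minimality of $R_{\mc{F}}$ we have $\tau(r)<R_{\mc{F}}\le R_{\mbf{V}}(U)$. \Cref{thm:sdc-threshold-pair}(1) then says $C$ does not contain $(\mbf{V},U)$. Since the pair was arbitrary, $C$ contains no pair of $\mc{F}$, so it avoids $\mc{F}$. No union bound or probability is needed, because the subspace design statement is deterministic.

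One point needs checking. The proof of \cref{thm:sdc-threshold-pair}(1) begins from ``$\tau(r)\le R_{\mbf{V}}(U)-\eta$'' for some $\eta>0$. Under the strict hypothesis $\tau(r)<R_{\mbf{V}}(U)$ we may take $\eta=R_{\mbf{V}}(U)-\tau(r)>0$, so the hypothesis is met. Note also that $r$ here is the locality of the family, so every pair (and every quotient of one) has dimension at most $r$. The subspace design bound $\tau(r)$ therefore applies to each quotient pair arising in that proof, because $\tau$ is monotone and $\dim \pi U\le r$.

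For part (2), fix $\eta>0$. Since $\mc{F}$ is a collection of $r$-local pairs over a finite field, it is finite, so the minimum defining $R_{\mc{F}}$ is attained at some pair $(\mbf{V}^*,U^*)$ with $R_{\mbf{V}^*}(U^*)=R_{\mc{F}}$. Apply \cref{thm:sdc-threshold-pair}(2) to this pair. It gives a $\tau$-subspace design additive code $C\subset(\mb{F}_q^s)^n$ with $\tau(r)\le R_{\mbf{V}^*}(U^*)+\eta=R_{\mc{F}}+\eta$ that contains $(\mbf{V}^*,U^*)$. By definition $C$ then contains $\mc{F}$.

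There is no real obstacle; the only subtle points are the strict-inequality bookkeeping in (1) and the attainment of the minimum in (2), both handled above.
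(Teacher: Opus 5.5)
Your proposal is correct and takes the same approach as the paper. For (1), you apply the fixed-pair theorem's part (1) to every pair, using $\tau(r)<R_{\mathcal{F}}\le R_{\mathbf{V}}(U)$, and for (2), you apply its part (2) to a pair attaining the minimum. Your extra remarks (choosing $\eta=R_{\mathbf{V}}(U)-\tau(r)>0$, and finiteness of $\mathcal{F}$ guaranteeing the minimum is attained) are sound bookkeeping that the paper leaves implicit.
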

\begin{proof}
For (1), note that for any $(\mbf{V}, U)\in \mc{F}$,  we have that $\tau(r)<R_\mc{F}\le R_{\mbf{V}}(U)$, so $C$ does not contain $(\mbf{V}, U)$ by \cref{thm:sdc-threshold-pair}. This holds for any pair in $\mc{F}$, so $C$ avoids $\mc{F}$.

For (2), choose a pair $(\mbf{V},U)\in\mc{F}$ attaining the minimum, so $R_\mc{F}=R_{\mbf{V}}(U)$. By \cref{thm:sdc-threshold-pair}(2), for any $\eta>0$, there exists a
$\tau$-subspace design additive code $C\subset(\mb{F}_q^s)^n$ such that
\begin{equation}
	\tau(r)\le R_{\mbf{V}}(U)+\eta=R_{\mc{F}}+\eta
\end{equation}
and $C$ contains $(\mbf{V},U)$. Since $(\mbf{V},U)\in\mc{F}$, this means that
$C$ contains $\mc{F}$.
\end{proof}

\section{Curve-Decoding as a Row-Span Constrained LCL Property}

\label{sec:cd-local}
Recall the definition of curve-decoding:
\cddef*

As observed in \cite{gg25a} and discussed in \cref{sec:overview}, curve-decoding does not seem to fit immediately into the LCL framework.  However, we will show that it does fit into a \emph{row-span constrained} LCL framework, via \cref{def:rsclp}. Before we describe the correct row-span constraint, we define
the degree-$\ell$ Reed-Solomon evaluation subspace on $B$ by
\[
	\on{RS}_{\ell}(B)
	\coloneqq
	\left\{
	(p(\alpha))_{\alpha\in B}:
	p\in\mb{F}_q[x],\ \deg p\le \ell
	\right\}
	\subset \mb{F}_q^B.
\]
Thus $\on{RS}_{\ell}(B)$ is the space of scalar degree-$\ell$
evaluation patterns on the parameter set $B$.

\begin{definition}[Curve-free Subspaces]\label{def:curvefree}
Let $A\subset\mb{F}_q$. A subspace $U\in\mc{L}(\mb{F}_q^A)$ is
\emph{$(\ell, A,b)$ curve-free} if
$\pi_B(U)\not\subset \on{RS}_{\ell}(B)$ for every $B\in\binom{A}{b}$, where $\pi_B:\mb{F}_q^A\to\mb{F}_q^B$ denote coordinate restriction to $B$.
\end{definition}

For this definition to be non-trivial, we assume $\ell+1\le b\le|A|$. Now, we observe the following lemma, which translates $(\ell, A, b)$ curve-freeness of the row-span of a matrix to the condition that no $b$ columns lie on a degree-$\ell$ curve. 

\begin{lemma}
\label{lem:curve-rowspan}
Let $B\subset A\subset\mb{F}_q$ and let $N$ be a positive integer.\footnote{Think of $N$ as either the block length $N = n$ of a linear code $C \subset \mathbb{F}_q^n$; or the block length $N = sn$ of an ``unfolded'' $\mathbb{F}_q$-additive code $C \subset (\mathbb{F}_q^s)^n$, thought of as $C \subset \mathbb{F}_q^{sn}$.} Fix a matrix $M\in\mb{F}_q^{N\times A}$  with row-span $U\in\mc{L}(\mb{F}_q^A)$. Then, $\pi_B(U)\subset \on{RS}_{\ell}(B)$ if and only if there exists a degree-$\ell$ curve $g:\mb{F}_q\to\mb{F}_q^N$ such that $g(\alpha)=M_{\star\alpha}$ for every $\alpha\in B$.
\end{lemma}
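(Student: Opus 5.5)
The plan is to prove both directions by working row by row. The condition $\pi_B(U)\subset \on{RS}_\ell(B)$ is a statement about rows of $M$. The existence of a curve $g$ with $g(\alpha) = M_{\star\alpha}$ on $B$ is a statement about columns. Both reduce to the same fact: each row of $M$, restricted to $B$, is the evaluation vector of a polynomial of degree at most $\ell$. Since $\on{RS}_\ell(B)$ is a linear subspace of $\mb{F}_q^B$ and $\pi_B$ is linear, we have $\pi_B(U)\subset\on{RS}_\ell(B)$ if and only if $\pi_B(M_{i\star})\in\on{RS}_\ell(B)$ for every row index $i\in[N]$. This holds because $U$ is spanned by the rows $M_{i\star}$, and a linear subspace contains a span exactly when it contains each spanning vector.

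For the ($\Leftarrow$) direction, suppose $g(x)=\sum_{j=0}^\ell g_j x^j$ with $g_j\in\mb{F}_q^N$ satisfies $g(\alpha)=M_{\star\alpha}$ for all $\alpha\in B$. Fix a row $i$ and set $p_i(x)\coloneqq\sum_j (g_j)_i x^j$, which has degree at most $\ell$. Then $M_{i\alpha}=g(\alpha)_i=p_i(\alpha)$ for every $\alpha\in B$, so $\pi_B(M_{i\star})=(p_i(\alpha))_{\alpha\in B}\in\on{RS}_\ell(B)$. The reduction above then gives $\pi_B(U)\subset\on{RS}_\ell(B)$.

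For the ($\Rightarrow$) direction, each row satisfies $\pi_B(M_{i\star})\in\on{RS}_\ell(B)$, so for each $i$ we choose a polynomial $p_i$ of degree at most $\ell$ with $p_i(\alpha)=M_{i\alpha}$ for all $\alpha\in B$. Write $p_i(x)=\sum_{j=0}^\ell p_{i,j}x^j$. Define $g_j\in\mb{F}_q^N$ by $(g_j)_i=p_{i,j}$, and set $g(x)=\sum_j g_j x^j$. Then $g(\alpha)_i=p_i(\alpha)=M_{i\alpha}$, that is, $g(\alpha)=M_{\star\alpha}$ for all $\alpha\in B$.

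I do not expect a genuine obstacle. The only points needing care are the linearity reduction from the row-span to individual rows, and the convention that a "degree-$\ell$ curve" means degree at most $\ell$, which matches the definition of $\on{RS}_\ell(B)$. Uniqueness of the $p_i$ is not needed: when $|B|\ge\ell+1$ it holds automatically, and otherwise any choice works. The size assumption $b\ge \ell+1$ is therefore not needed for this lemma.
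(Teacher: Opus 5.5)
Your proposal is correct and follows essentially the same row-by-row argument as the paper. In the forward direction you interpolate each restricted row by a polynomial of degree at most $\ell$ and stack these into a curve. In the reverse direction you use that $\pi_B(U)$ is spanned by the restricted rows, each of which is the evaluation vector on $B$ of a coordinate polynomial of $g$, and that $\on{RS}_\ell(B)$ is a linear subspace.
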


\begin{proof}
Suppose such a curve $g$ exists. For each $i\in[N]$, the
restricted row $(M_{i\alpha})_{\alpha\in B}=(g(\alpha)_m)_{\alpha\in B}$. Since any $v\in \pi_B(U)$ is a linear combination of these
restricted rows, we observe that $v\in \on{RS}_{\ell}(B)$ as witnessed by the corresponding linear combination of degree-$\ell$ polynomials $g_i$.

Conversely, suppose $\pi_B(U)\subset \on{RS}_{\ell}(B)$. Then, for each $i\in[N]$, the restricted row
$(M_{i\alpha})_{\alpha\in B}\in \on{RS}_{\ell}(B)$. Hence
there exists a scalar polynomial $g_i(x)$ of degree at most $\ell$ such that
$g_i(\alpha)=M_{i\alpha}$ for every $\alpha\in B$. Collecting these scalar
polynomials gives a degree-$\ell$ curve
$g(x)=(g_1(x),\ldots,g_N(x))\in\mb{F}_q^N$ gives
$g(\alpha)=M_{\star\alpha}$ for every $\alpha\in B$, proving the second condition.
\end{proof}

Observe that latter condition is exactly the kind of codeword condition we wish to impose on columns of witness matrices for the failure of $(\ell, \delta, a, b)$ curve-decoding. \cref{lem:curve-rowspan} gives an equivalent row-span constraint. We make this more formal below.

\begin{corollary}
\label{cor:curve-free-rowspan}
Let $A\subset\mb{F}_q$ and let $M\in\mb{F}_q^{N\times A}$ be a
matrix with row-span $U$ whose columns lie in a linear code $C\subset\mathbb{F}_q^N$. If $\ell<b\le |A|$, then $U$ is
$(\ell,A,b)$ curve-free if and only if no $b$ columns of $M$ lie on a
degree-$\ell$ ``codeword curve.''  (That is, if and only if there is no curve $c(x) = \sum_{i=0}^\ell c_i x^i$ for $c_i \in C$ and no set $B \in {A \choose b}$ so that $c(\alpha) = M_{\star \alpha}$ for all $\alpha \in B$.)
\end{corollary}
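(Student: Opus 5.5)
The plan is to negate both sides and apply \cref{lem:curve-rowspan} set by set. By \cref{def:curvefree}, $U$ fails to be $(\ell,A,b)$ curve-free exactly when some $B\in\binom{A}{b}$ has $\pi_B(U)\subset \on{RS}_\ell(B)$. So it suffices to show, for each fixed $B\in\binom{A}{b}$, that $\pi_B(U)\subset\on{RS}_\ell(B)$ holds if and only if some codeword curve $c(x)=\sum_{j=0}^\ell c_jx^j$ with all $c_j\in C$ satisfies $c(\alpha)=M_{\star\alpha}$ for every $\alpha\in B$. Taking the disjunction over $B$ then gives the corollary.

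The backward direction is immediate. A codeword curve is in particular a degree-$\ell$ curve $g:\mb{F}_q\to\mb{F}_q^N$, so \cref{lem:curve-rowspan} gives $\pi_B(U)\subset\on{RS}_\ell(B)$.

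For the forward direction, \cref{lem:curve-rowspan} produces a degree-$\ell$ curve $g(x)=\sum_{j=0}^\ell g_jx^j$ with $g_j\in\mb{F}_q^N$ and $g(\alpha)=M_{\star\alpha}$ for all $\alpha\in B$. The main point is to show that the coefficients $g_j$ lie in $C$. Because $|B|=b\ge \ell+1$, I can fix distinct $\beta_0,\dots,\beta_\ell\in B$. Each coordinate polynomial $g_i$ has degree at most $\ell$, so it is determined by its values at these points via Lagrange interpolation. This gives
\[
g(x)=\sum_{k=0}^{\ell} L_k(x)\, g(\beta_k)=\sum_{k=0}^{\ell} L_k(x)\, M_{\star\beta_k},
\]
where $L_k$ is the Lagrange basis polynomial for $\beta_0,\dots,\beta_\ell$. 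Expanding $L_k(x)=\sum_j \lambda_{kj}x^j$ with scalars $\lambda_{kj}\in\mb{F}_q$, each coefficient is $g_j=\sum_k\lambda_{kj}M_{\star\beta_k}$. Since every column $M_{\star\beta_k}$ lies in $C$ and $C$ is linear, each $g_j$ lies in $C$. Hence $g$ itself is the required codeword curve.

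The only real content is this interpolation step, which uses the hypothesis $b>\ell$ together with the linearity of $C$. For an $\mb{F}_q$-additive code viewed in its unfolded form $C\subset\mb{F}_q^{sn}$, the same argument applies because $C$ is still $\mb{F}_q$-linear.
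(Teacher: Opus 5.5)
Your proof is correct and follows essentially the same route as the paper: you reduce to a fixed $B\in\binom{A}{b}$ via \cref{lem:curve-rowspan}, then interpolate through $\ell+1\le b$ of the columns and use the linearity of $C$ to conclude that the curve's coefficients lie in $C$. You also spell out the easy converse direction, which the paper leaves implicit.
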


\begin{proof}
By \cref{lem:curve-rowspan}, for each $B\in\binom{A}{b}$, the condition
$\pi_B(U)\subset \on{RS}_{\ell}(B)$ is equivalent to the existence
of a degree-$\ell$ curve through the columns indexed by $B$. If such a curve exists, then it must agree with the unique interpolation curve of degree at most $\ell$, through some fixed $\ell+1\le b$ columns of $M$ among those indexed by $B$. Since these columns lie in linear code $C$ and interpolation is linear, this curve can be taken to have coefficients
in $C$, i.e. a degree-$\ell$ codeword curve.
\end{proof}

We now cast curve-decoding as a row-span constrained LCL property, thereby bypassing the notion of $\bigvee$-decodability from \cite{gg25a}, and the parameter losses that arose from it.
\begin{theorem}[Locality of curve-decoding]
\label{thm:cd-casting}
Let $\ell,a,b\in\mathbb N$ with $\ell+1\le b\le a\le q$ and let
$\delta\in(0,1)$. Then there exists a row-span constrained
$a$-LCL family $\mc{F}$ with
\[
	|\mc{F}|\le q^{a+a^2}2^{an}
\]
such that every $\mb{F}_q$-additive code
$C$ is not
$(\ell,\delta,a,b)$ curve-decodable if and only if $C$ contains
$\mc{F}$.
\end{theorem}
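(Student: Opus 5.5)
We construct $\mc{F}$ explicitly and check both directions. The family is indexed by an evaluation set $A\in\binom{\mb{F}_q}{a}$ (we identify $\mb{F}_q^r$ with $\mb{F}_q^A$, $r=a$), an agreement pattern $(A_1,\dots,A_n)$ with $A_i\subset A$, and a row-span $U\in\mc{L}(\mb{F}_q^A)$.

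For each $i\in[n]$ we set
\[
V_i\coloneqq\{y\in\mb{F}_q^A:\ y|_{A_i}\in \on{RS}_\ell(A_i)\},
\]
the linear subspace of functions $A\to\mb{F}_q$ that agree on $A_i$ with some degree-$\ell$ polynomial. We put $(\mbf{V},U)$ into $\mc{F}$ exactly when two conditions hold. First, $U$ is $(\ell,A,b)$ curve-free. Second, the pattern is admissible: for every $\alpha\in A$, $|\{i:\alpha\notin A_i\}|\le\delta n$.

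The size bound follows by counting. There are at most $q^a$ choices of $A$, at most $2^{an}$ choices of the pattern, and at most $q^{a^2}$ subspaces $U$. This gives $|\mc{F}|\le q^{a+a^2}2^{an}$. Working over the folded alphabet $\mb{F}_q^s$, containment is read through $\mbf{V}^{\otimes s}$ (\cref{def:folded-profile}). So agreement at an outer coordinate $i$ means all $s$ unfolded coordinates of block $i$ agree, which is the right notion for Hamming distance over $\Sigma$.

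Next we show that failing curve-decodability implies containment. Suppose $u_0,\dots,u_\ell$ and $f$ witness failure. Choose any $a$-subset $A$ of the close set. For each $i$, let $A_i=\{\alpha\in A: u(\alpha)_i=f(\alpha)_i\}$. Each $\alpha$ is $\delta$-close, so the pattern is admissible. Let $M$ be the unfolded matrix with columns $f(\alpha)$, $\alpha\in A$, and let $U$ be its row-span. On $A_i$, each unfolded row $(i,j)$ of $M$ equals $\alpha\mapsto u(\alpha)_{i,j}$, which is a degree-$\ell$ polynomial, so the row lies in $V_i$. Since $f$ has no $b$ points on a codeword curve, \cref{cor:curve-free-rowspan} shows that $U$ is $(\ell,A,b)$ curve-free. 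Hence $C$ contains $(\mbf{V}^{\otimes s},U)$.

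Conversely, we show that containment implies failure. Take a witness $M$ and set $f(\alpha)=M_{\star\alpha}$ on $A$, with $f$ arbitrary (say $0$) elsewhere. For each unfolded row $(i,j)$, the row lies in $V_i$, so there is a polynomial $p_{i,j}$ of degree at most $\ell$ interpolating it on $A_i$. Define $u(x)$ coordinatewise by $u(x)_{i,j}=p_{i,j}(x)$; this is a degree-$\ell$ curve in $\Sigma^n$. Then $u(\alpha)$ and $f(\alpha)$ agree on every block $i$ with $\alpha\in A_i$. By admissibility, $\Delta(u(\alpha),f(\alpha))\le\delta$ for all $\alpha\in A$, so the close set has at least $a$ elements.

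The main obstacle is here. The close set may be strictly larger than $A$, and a codeword curve could pass through $b$ points that include elements outside $A$. The fix is to let $f$ be an arbitrary function outside $A$ and choose it to avoid this. Concretely, we take $f(\beta)$ for $\beta\notin A$ such that $u(\beta)$ is $\delta$-far from $f(\beta)$, which pushes $\beta$ out of the close set. If no such codeword is available, a simple alternative is to redefine the failure witness using only $A$. To make this airtight we may alternatively require $a\le q$ and let the family range over $A$ being the exact close set, still within the count. With the close set equal to $A$, curve-freeness of $U$ and \cref{cor:curve-free-rowspan} show that no $b$ points of $A$ lie on a codeword curve. So $C$ is not $(\ell,\delta,a,b)$ curve-decodable.
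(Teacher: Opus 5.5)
Your construction of $\mathcal{F}$, the size bound, and the forward direction are the same as the paper's. Indexing by admissible patterns $(A_i)_{i\in[n]}$ is just a reparametrization of the paper's agreement sets $(I_\alpha)_{\alpha\in A}$.

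In the converse direction you correctly spot a point that the paper's proof does not address: it specifies $f$ only on $A$. The close set in the definition of curve-decodability, however, is computed over all of $\mathbb{F}_q$, so extra close points $\beta\notin A$ could create a codeword curve through $b$ points. Your default $f=0$ off $A$ is in fact dangerous, since the zero curve passes through every such $\beta$.

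Your patch does not resolve this. Pushing $\beta$ out needs a codeword $\delta$-far from $u(\beta)$, and none may exist. If $u(\beta)$ is $\delta$-close to every codeword (e.g.\ $u(\beta)=0$ and all codewords have weight at most $\delta n$), then every choice of $f(\beta)$ keeps $\beta$ in the close set. Your fallbacks are not valid either:
\begin{itemize}
\item You cannot ``redefine the failure witness using only $A$'', since the definition quantifies over $f:\mathbb{F}_q\to C$ with the full close set.
\item ``$A$ being the exact close set'' cannot be expressed by a pair $(\mathbf{V},U)$. A witness matrix consists only of the columns indexed by $A$ and says nothing about $u(\beta)$ or $f(\beta)$ for $\beta\notin A$. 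The hypothesis $a\le q$ is already assumed and plays no role here.
\end{itemize}

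The missing idea is a case split.

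\emph{Case 1.} Every $u(\beta)$ with $\beta\notin A$ has some codeword at distance $>\delta$. Then your push-out works: the close set is exactly $A$, and curve-freeness finishes the argument.

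\emph{Case 2.} Some $w=u(\beta_0)$ is $\delta$-close to \emph{every} codeword. Discard the witness and use the constant curve $u'\equiv w$ with $f'(x)=x^{\ell+1}c$, for a fixed nonzero $c\in C$. Such a $c$ exists: a curve-free $U$ is nonzero, so $M\neq 0$ and some column of $M$ is a nonzero codeword. Every $x\in\mathbb{F}_q$ is then close, so the close set has $q\ge a$ elements. Suppose a codeword curve $g$ of degree at most $\ell$ satisfied $g(x)=x^{\ell+1}c$ at $b$ points. Pick an unfolded coordinate $k$ with $c_k\neq 0$. Then $x^{\ell+1}c_k-g_k(x)$ would be a polynomial of degree exactly $\ell+1$ with $b\ge \ell+2$ roots, which is impossible.

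The bound $b\ge\ell+2$ is automatic here. For $b=\ell+1$ one has $\operatorname{RS}_\ell(B)=\mathbb{F}_q^B$, so no $U$ is curve-free, $\mathcal{F}=\emptyset$, and this direction is vacuous.

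With this case split your converse direction is complete.
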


\begin{proof}
Fix indexing sets $A \in \binom{\mb{F}_q}{a}$ and $I_\alpha \subset [n]$ with $|I_\alpha| \geq (1-\delta) n$ for each $\alpha \in A$.
For each choice of $A, \{I_\alpha\}_{\alpha \in A}$, we will define an $r$-local profile $\mbf{V}=\mbf{V}(A, \{I_\alpha\}_{\alpha\in A}) = (V_1, \ldots, V_n)$ as follows. For each $i\in[n]$, let
$A_i\coloneqq\{\alpha\in A:i\in I_\alpha\}$.
Define 
\begin{equation}\label{eq:Vi-def}
	V_i
	\coloneqq
	\left\{
	y\in\mb{F}_q^A:
	(y_\alpha)_{\alpha\in A_i}\in \on{RS}_{\ell}(A_i)
	\right\}.
\end{equation}
Equivalently, $y\in V_i$ if there exists a polynomial
$p\in\mb{F}_q[x]$ of degree at most $\ell$ such that
$p(\alpha)=y_\alpha$ for every $\alpha\in A_i$. This is clearly a linear subspace of
$\mb{F}_q^A$. 
Now define 
\begin{equation}
 \mc{F}\coloneqq \left\{\left(\mbf{V}(A, \{I_\alpha\}_{\alpha\in A}), U\right): A \in \binom{\mb{F}_q}{a}, \, I_\alpha \in \binom{[n]}{\geq (1-\delta) n }, \, U\in\mc{L}(\mb{F}_q^A) \text{ is $(\ell, A, b)$ curve-free}\right\}.
\end{equation}

We show the bound on the size of $\mc{F}$: There are $\binom{q}{a}\le q^a$ choices for $A$, at most $2^n$ choices for each $I_\alpha\subset [n]$, and the number of choices of $U$ is at most $|\mc{L}(\mb{F}_q^a)|\le q^{a^2}$.

Next, we prove that not being curve-decodable is equivalent to containing $\mathcal{F}$.

\paragraph{Not curve-decodable $\Rightarrow$ Contains $\mathcal{F}$:} 

If $C\subset(\mb{F}_q^s)^n$ is not $(\ell,\delta,a,b)$ curve-decodable, then there exists $A\in\binom{\mb{F}_q}{a}$, a degree-$\ell$ curve
$u:\mb{F}_q\to (\mb{F}_q^{s})^n$, and codewords
$f(\alpha)\in C$ such that
$\Delta(u(\alpha),f(\alpha))\le\delta$ for every $\alpha\in A$, but no
degree-$\ell$ codeword curve passes through $f(\alpha)$ for
$b$-many of $\alpha$. 
For $\alpha\in A$, define the agreement set
$I_\alpha\coloneqq\{i\in[n]:u(\alpha)_i=f(\alpha)_i\}$, so
$|I_\alpha|\ge(1-\delta)n$. Form the matrix
$M\in\mb{F}_q^{sn\times A}$ with columns $M_{\star\alpha}=f(\alpha)\in C$. Index the rows by $[n]\times [s]$ and let
$U$ be the row-span of $M$. 

We check the local profile constraints. For any $(i, j)\in[n]\times [s]$, the corresponding row $M_{ij, \star}=
(f(\alpha)_{ij})_{\alpha\in A}$. Since
$f(\alpha)_{ij}=u(\alpha)_{ij}$ whenever $i\in I_\alpha$, we observe that $M_{ij, \star}$ agrees with a polynomial $x\mapsto u(x)_{ij}$ of degree at most $\ell$ on $A_i$, so $M_{ij, \star}\in V_i$ by the definition of $V_i$ in \cref{eq:Vi-def}.

By assumption, no $b$ of the columns $f(\alpha)$ lie on a degree-$\ell$
codeword curve. By \cref{cor:curve-free-rowspan}, the row-span $U$ is
$(A,\ell,b)$ curve-free. Hence
$M$ witnesses that $C$ contains $(\mbf{V}(A, \{I_\alpha\}_{\alpha\in A}),U)\in \mc{F}$.

\paragraph{Contains $\mc{F}\Rightarrow$ Not curve-decodable:} 
Suppose $C$ 
contains the $a$-local pair $(\mbf{V}(A, \{I_\alpha\}_{\alpha\in A}),U)$ via witness matrix $M\in\mb{F}_q^{sn\times A}$. Let the columns of $M$ be $f(\alpha)\in C$ for $\alpha\in A$.

We construct a degree-$\ell$ curve close to these codewords. For $(i, j)\in[n]\times [s]$ the row $M_{ij,\star}$ lies in $V_i$, so by definition \cref{eq:Vi-def} of
$V_i$, there exists a polynomial $u_{ij}$ of degree
at most $\ell$ such that $u_{ij}(\alpha)=f(\alpha)_{ij}$ for every
$\alpha\in A_i\coloneqq \{\alpha\in A:i\in I_\alpha\}$. Collecting
these polynomials over all $(i,j)$ defines a degree-$\ell$ curve
$u:\mb{F}_q\to (\mb{F}_q^s)^n$.
For each $\alpha\in A$, if
$i\in I_\alpha$ so $\alpha\in A_i$, then
$u_{ij}(\alpha)=f(\alpha)_{ij}$ for every $j\in[s]$. Hence,
$u(\alpha)_i=f(\alpha)_i$ for all $i\in I_\alpha$. For every $\alpha\in A$, $|I_\alpha|\ge (1-\delta)n$, so $\Delta(u(\alpha),f(\alpha))\le\delta$.

Finally, since $U$ is $(A,\ell,b)$ curve-free, \cref{cor:curve-free-rowspan}
implies that no $b$ of the columns $f(\alpha)$ lie on a degree-$\ell$ codeword
curve. Together, $u$ and $f(\alpha)$ witness that $C$ is not
$(\ell,\delta,a,b)$ curve-decodable.
\end{proof}
We now plug in specific choices of parameters from \cref{thm:sdc-good-cd} for which subspace design codes are curve-decodable, thereby obtaining a lower bound on the threshold rate via \cref{cor:sdc-threshold-family}.
\begin{corollary}
\label{cor:cd-casting}
Let $\delta,\eta\in (0, 1)$ and integers $\ell, a, b\in\mb{N}$ such that
\begin{equation}
\label{eq:param-abd-cond}
	d\coloneqq \left\lceil \frac{\ell+1}{\eta}\right\rceil \le a \quad\text{and}\quad	b\le \frac{\eta a}{d+\eta}.
\end{equation}
Let $\mc{F}$ be the row-span constrained $a$-local LCL family
from \cref{thm:cd-casting}, so avoiding $\mc{F}$ is equivalent to being
$(\ell,\delta,a,b)$ curve-decodable. Then, $R_{\mc{F}}
	\ge
	1-\delta-\eta$. Additionally, $|\mathcal F|\leq q^{a+a^2}2^{an}$.
\end{corollary}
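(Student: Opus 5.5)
The bound $|\mc{F}|\le q^{a+a^2}2^{an}$ is already part of \cref{thm:cd-casting}, so the work is the threshold bound $R_{\mc F}\ge 1-\delta-\eta$. I would argue by contradiction, using the subspace design side of the row-span constrained framework, namely \cref{cor:sdc-threshold-family}(2), together with the positive result \cref{thm:sdc-good-cd}. The underlying point is that $R_{\mc F}$ is code-independent. If it were small, some subspace design code with small $\tau$ would contain $\mc F$, and hence would fail to be curve-decodable. But \cref{thm:sdc-good-cd} says such codes are curve-decodable at radius $1-\tau(d)-\eta$.

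Concretely, suppose $R_{\mc F}<1-\delta-\eta$, and pick $\eta'>0$ with $R_{\mc F}+\eta'<1-\delta-\eta$. By \cref{cor:sdc-threshold-family}(2), applied with locality $r=a$, there is an additive $\tau$-subspace design code $C\subset(\mb F_q^s)^n$ that contains $\mc F$ and satisfies $\tau(a)\le R_{\mc F}+\eta'$. By \cref{thm:cd-casting}, this $C$ is not $(\ell,\delta,a,b)$ curve-decodable.

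On the other hand, $d\le a$ and $\tau$ is monotone (it is a monotone envelope), so $\tau(d)\le\tau(a)<1-\delta-\eta$. Hence $\delta<1-\tau(d)-\eta$. Now apply \cref{thm:sdc-good-cd} with the same $a,\ell,\eta$. It shows that $C$ is $(\ell,1-\tau(d)-\eta,a,\eta a/(d+\eta))$ curve-decodable.

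Curve-decodability is monotone in its parameters. Shrinking the radius only shrinks the set $A$ in \cref{def: curve-decodability}; I would note that the guaranteed $b$ agreements lie inside the smaller $A$, which holds because the definition is applied to the set for radius $\delta$ itself. Lowering the target $b$ to any $b\le \eta a/(d+\eta)$ also preserves the property. So $C$ is $(\ell,\delta,a,b)$ curve-decodable, which is a contradiction.

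The main obstacle is parameter bookkeeping, specifically the monotonicity in $\delta$. Applying \cref{thm:sdc-good-cd} at radius exactly $\delta$ needs $\tau(d)\le 1-\delta-\eta$, which is what I derived. If one instead works with the radius $1-\tau(d)-\eta$, the set $A$ can be larger than the radius-$\delta$ set. To avoid this, I would note that \cref{thm:sdc-good-cd} holds for every radius at most $1-\tau(d)-\eta$, as its proof is radius-monotone, or else redo the argument with $\delta$ replaced directly. A secondary check is that the rounding in the hypothesis $b\le \eta a/(d+\eta)$ matches the conclusion of \cref{thm:sdc-good-cd}.
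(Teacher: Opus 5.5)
Your proposal is correct and follows the paper's proof essentially verbatim. Assume $R_{\mc F}<1-\delta-\eta$; then \cref{cor:sdc-threshold-family}(2) gives a $\tau$-subspace design code with $\tau(a)<1-\delta-\eta$ that contains $\mc F$, so it is not $(\ell,\delta,a,b)$ curve-decodable by \cref{thm:cd-casting}. This contradicts $\tau(d)\le\tau(a)$ combined with \cref{thm:sdc-good-cd} and monotonicity of curve-decodability in $\delta$ and $b$. The paper uses the monotonicity in $\delta$ without comment, so your caveat that it is not literally immediate from \cref{def: curve-decodability}, and your fix via the radius-monotonicity of the proof of \cref{thm:sdc-good-cd}, are extra care rather than a departure.
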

\begin{proof}
We claim that every $\tau$-subspace design code $C$ with $\tau(a)\le 1-\delta-\eta$ must avoid $\mc{F}$. Indeed, as $d\le a$ and $\tau$ is monotone, we know that 
\begin{equation}\label{eq:sdc-ab-choice}
\tau(d)\le\tau(a)\le 1-\delta-\eta.
\end{equation}
By \cref{thm:sdc-good-cd}, $C$ is $(\ell, 1-\tau(d)-\eta, a, \eta a/(d+\eta))$ curve-decodable. As $b\le \eta a/(d+\eta)$ and $\delta\le 1-\tau(d)-\eta$ by \cref{eq:sdc-ab-choice}, $C$ is $(\ell, \delta, a, b)$ curve-decodable, so it avoids $\mc{F}$.

Suppose $R_\mc{F}<1-\delta-\eta$, then there exists constant $\gamma>0$ such that $R_\mc{F}+\gamma \le1-\delta-\eta$. By \cref{cor:sdc-threshold-family}, there exists a $\tau$-subspace design code with $\tau(a)\le R_\mc{F}+\gamma \le1-\delta-\eta$ that contains $\mc{F}$. This contradicts the claim above, so it must be $R_\mc{F}\ge 1-\delta-\eta$, as desired.
\end{proof}

We conclude with a final specialization for the common choice of curve-decoding parameters.

\begin{corollary}
\label{cor:common-cd-params}
Let $\ell\in\mathbb N$ and $\delta, \eta\in(0,1)$. Define $a, b, d\in\mb{N}$ by
\begin{equation}
\label{eq:common-cd-params}
	d\coloneqq \left\lceil \frac{\ell+1}{\eta}\right\rceil,\qquad
	b\coloneqq \left\lceil \frac{4\ell}{\eta}+\ell\right\rceil,\qquad
	a\coloneqq \left\lceil \frac{100\ell^2}{\eta^3}\right\rceil.
\end{equation}
Let $\mc{F}$ be the row-span constrained $a$-local LCL family from \cref{thm:cd-casting}, so avoiding $\mc{F}$is equivalent to being
$(\ell,\delta,a,b)$ curve-decodable. Then, $R_{\mc{F}}
	\ge
	1-\delta-\eta$. Additionally, $|\mathcal F|\leq q^{a+a^2}2^{an}$.
\end{corollary}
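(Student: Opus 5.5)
This is a direct specialization of \cref{cor:cd-casting}, so the plan is to verify that the parameters in \cref{eq:common-cd-params} satisfy its hypotheses \cref{eq:param-abd-cond}. We also need $\ell+1\le b\le a\le q$ so that \cref{thm:cd-casting} applies; $a\le q$ is implicit in the regime we care about (it is needed anyway for the family to be defined). Once the hypotheses hold, the bounds $R_{\mc{F}}\ge 1-\delta-\eta$ and $|\mc{F}|\le q^{a+a^2}2^{an}$ are inherited verbatim.

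\textbf{Inequalities to check.} Assume $\ell\ge1$ and $\eta<1$.
\begin{enumerate}
\item $b\ge \ell+1$: since $4\ell/\eta>4\ell\ge 1$, we get $b\ge \ell+4\ell/\eta>\ell+1$.
\item $d\le a$: since $d\le (\ell+1)/\eta+1\le 3\ell/\eta$ and $a\ge 100\ell^2/\eta^3\ge 100\ell/\eta$.
\item $b\le a$: since $b\le 5\ell/\eta+1\le 6\ell/\eta$, which is well below $100\ell^2/\eta^3$.
\end{enumerate}

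\textbf{Main inequality.} The substantive check is $b\le \eta a/(d+\eta)$. Upper bound $d+\eta\le (\ell+1)/\eta+2\le 4\ell/\eta$, using $\ell+1\le 2\ell$ and $2\le 2\ell/\eta$. Then
\[
\frac{\eta a}{d+\eta}\ \ge\ \frac{\eta\cdot 100\ell^2/\eta^3}{4\ell/\eta}\ =\ \frac{25\ell}{\eta}.
\]
Meanwhile $b\le 4\ell/\eta+\ell+1\le 4\ell/\eta+2\ell\le 6\ell/\eta$. Since $6\ell/\eta\le 25\ell/\eta$, the condition holds with lots of room.

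\textbf{Conclusion.} Invoke \cref{cor:cd-casting} to obtain both conclusions. The only mild obstacle is tracking the ceilings, which the crude bounds above absorb. One should also note that $\ell\ge 1$ is needed: the case $\ell=0$ would make $b=0$, which is degenerate.
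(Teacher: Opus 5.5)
Your proposal is correct and follows the paper's proof. Like the paper, you reduce to \cref{cor:cd-casting} and verify \cref{eq:param-abd-cond} through the same bound $d+\eta\le 4\ell/\eta$, which gives $\eta a/(d+\eta)\ge 25\ell/\eta\ge b$; your extra checks of $\ell+1\le b\le a$, and your remark that $\ell\ge 1$ is needed, are correct details the paper leaves implicit.
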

\begin{proof}
It suffices to verify \cref{eq:param-abd-cond} to apply \cref{cor:cd-casting}. Clearly, $d\le a$. We check that
\begin{equation}
d+\eta\le \frac{\ell+1}{\eta}+1+\eta\le \frac{4\ell}{\eta}\implies \frac{\eta a}{d+\eta}\ge \frac{100\ell^2/\eta^2}{4\ell/\eta } =\frac{25\ell}{\eta}\ge \frac{4\ell}{\eta}+\ell+1\ge b.
\end{equation}
Therefore, \cref{cor:cd-casting} applies to give the conclusion.
\end{proof}

\section{Improved Proximity Gaps}
\label{sec:pg}

In this section we combine the result from \cref{sec:cd-local} that curve-decoding is a row-span constrained LCL property with list-decoding results for random linear codes, random Reed--Solomon codes, and Gallager's ensemble of random
LDPC codes, in order to establish proximity gaps (as well as correlated agreement and mutual correlated agreement) for these ensembles. For each ensemble of codes, we use
\cref{cor:common-cd-params} to prove curve-decodability, then we combine this with
list-decodability via \cref{cor: main-ca-mca-gg}.

\subsection{Random Linear Codes}
\label{sec:rlc}

We begin by citing the list-decoding fact we use for random linear codes. This is immediate from \cite{ld} up to plugging in some parameters.

\begin{theorem}[{\cite[Theorem 1.3]{ld}}]\label{thm:rlc-ld}
For every $\eta\in(0,1)$ and random linear code $C\subset\mb{F}_q^n$
 of rate $R$, if $q\ge 2^{\Omega(1/\eta^2)}$ and $n\ge \Omega(1/\eta)$, then with high probability $C$ is
\begin{equation}
	\left(1-R-\frac{3\eta}{2},\left\lceil\frac{1-R}{\eta}\right\rceil\right)
	\text{ list-decodable}.
\end{equation}
\end{theorem}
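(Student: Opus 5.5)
The plan is to derive this directly from the generalized-Singleton-bound list-decoding theorem of \cite{ld}, by choosing parameters and checking that the radius and field-size conditions come out as claimed. No new probabilistic argument should be needed. Throughout I assume (and this is the one thing to confirm against \cite{ld}) that \cite[Theorem 1.3]{ld} has the following form: for every list size $L\ge 1$ and slack $\varepsilon>0$, a random linear code $C\subset\mb{F}_q^n$ of rate $R$ is, with high probability,
\[
\left(\tfrac{L}{L+1}\bigl(1-R-\varepsilon\bigr),\,L\right)\text{ list-decodable},
\]
provided $q\ge 2^{\Omega(L/\varepsilon)}$ and $n$ is at least a suitable function of $L$ and $\varepsilon$.

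First I fix the parameters: $L\coloneqq\lceil(1-R)/\eta\rceil$ and $\varepsilon\coloneqq\eta/2$. The radius check is then a short computation:
\[
\tfrac{L}{L+1}\bigl(1-R-\tfrac{\eta}{2}\bigr)\;\ge\;1-R-\tfrac{\eta}{2}-\tfrac{1-R}{L+1}\;\ge\;1-R-\tfrac{\eta}{2}-\eta\;=\;1-R-\tfrac{3\eta}{2}.
\]
The last inequality uses $L+1> (1-R)/\eta$. List-decodability is monotone in the radius, so $C$ is $\bigl(1-R-\tfrac{3\eta}{2},L\bigr)$ list-decodable, which is exactly the claimed statement.

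Next I check the side conditions. The field-size requirement becomes $q\ge 2^{\Omega(L/\varepsilon)}=2^{\Omega((1-R)/\eta^2)}$. Since $1-R\le 1$, this is implied by the hypothesis $q\ge 2^{\Omega(1/\eta^2)}$. The block-length requirement should similarly reduce to $n\ge\Omega(L)$, and $L=O(1/\eta)$.

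The main obstacle is purely bookkeeping: matching the exact form of the cited theorem, including how it normalizes $\varepsilon$, whether its list size is $L$ or $L+1$, and its precise lower bound on $n$. If it is phrased with a different list-size convention, I would adjust $L$ by one and rerun the same computation, using the $\eta/2$ of slack we have.

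A fallback, if a citation in this form is unavailable, is to use the paper's own machinery. List-decodability with list size $L$ is an ordinary $(L+1)$-local LCL property, so one could compute its threshold $R_{\mc F}$ and apply \cref{cor:rlc-threshold-family}(1) with a union bound over $|\mc F|$. That route would need the threshold computation from \cite{ld} anyway, so I would only take it if necessary.
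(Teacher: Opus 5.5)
Your proposal is correct and matches the paper, which gives no argument beyond saying the statement is immediate from \cite[Theorem 1.3]{ld} up to plugging in parameters. Your choice $L=\lceil(1-R)/\eta\rceil$ and slack $\varepsilon=\eta/2$ in the bound $\tfrac{L}{L+1}(1-R-\varepsilon)$ is exactly the plug-in that yields radius $1-R-\tfrac{3\eta}{2}$ with field size $2^{O(L/\varepsilon)}=2^{O(1/\eta^2)}$, your radius computation is right, and the LCL fallback is not needed.
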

Next, we put together the list-decoding and curve-decoding results required to apply \cref{cor: main-ca-mca-gg} for random linear codes. 
\begin{theorem}\label{thm:cd+ld-rlc}
Let $\ell\in\mb{N}$ and $\eta,\delta\in(0,1)$. Let $C\subset \FF_q^n$ be a
random linear code of rate $R\le 1-\delta-2\eta$.  If $q\ge \exp(\Omega(\ell^2/\eta^4))$ and $n\ge\Omega(\ell^4/\eta^7)$, then with high probability $C$ is both
\begin{equation}
\left(\delta\left(1+\frac{\eta}{2}\right),\left\lceil\frac{1-R}{\eta}\right\rceil\right)
	\text{ list-decodable}
\quad\text{and}\quad
	\left(
	\ell,\delta,
	\left\lceil\frac{100\ell^2}{\eta^3}\right\rceil,
	\left\lceil\frac{4\ell}{\eta}+\ell\right\rceil
	\right)
	\text{ curve-decodable}.
\end{equation}
\end{theorem}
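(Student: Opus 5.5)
The statement splits into a list-decoding half and a curve-decoding half. We prove each with high probability and then take a union bound.

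\textbf{List decoding.} We want to invoke \cref{thm:rlc-ld}, which gives list-decodability radius $1-R-\frac{3\eta}{2}$ with list size $\lceil (1-R)/\eta\rceil$. It therefore suffices to check that
\[
\delta\left(1+\frac{\eta}{2}\right)\le 1-R-\frac{3\eta}{2}.
\]
Since $R\le 1-\delta-2\eta$, we have $1-R-\frac{3\eta}{2}\ge \delta+\frac{\eta}{2}$. Moreover $\delta(1+\eta/2)=\delta+\delta\eta/2\le \delta+\eta/2$ because $\delta<1$. List-decodability is monotone in the radius, so the claim follows. The hypotheses $q\ge 2^{\Omega(1/\eta^2)}$ and $n\ge\Omega(1/\eta)$ of \cref{thm:rlc-ld} are implied by the stronger assumptions in the statement.

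\textbf{Curve decoding.} Set $a=\lceil 100\ell^2/\eta^3\rceil$ and $b=\lceil 4\ell/\eta+\ell\rceil$, and let $\mc{F}$ be the row-span constrained $a$-local family from \cref{thm:cd-casting}. By \cref{cor:common-cd-params}, $R_{\mc{F}}\ge 1-\delta-\eta$, and since $R\le 1-\delta-2\eta$ we get $R\le R_{\mc{F}}-\eta$. Now apply \cref{cor:rlc-threshold-family}(1):
\[
\mb{P}(C\text{ contains }\mc{F})\le |\mc{F}|\,q^{-\eta n}\le q^{a+a^2}\,2^{an}\,q^{-\eta n}.
\]
Because avoiding $\mc{F}$ is equivalent to $(\ell,\delta,a,b)$ curve-decodability, it remains to show that this bound is $o(1)$.

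\textbf{The main obstacle: parameter bookkeeping.} Two terms must be controlled.
\begin{itemize}
\item The term $2^{an}q^{-\eta n}$ requires $a\log_2 2 < \tfrac{\eta}{2}\log_2 q$, i.e.\ $\log q\gtrsim a/\eta\approx 100\ell^2/\eta^4$. This is exactly the assumption $q\ge \exp(\Omega(\ell^2/\eta^4))$, and it makes this term at most $q^{-\eta n/2}$.
\item The term $q^{a+a^2}$ must be absorbed by the remaining $q^{-\eta n/2}$, which requires $a+a^2\le \eta n/4$, i.e.\ $n\ge \Omega(a^2/\eta)=\Omega(\ell^4/\eta^7)$. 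This matches the hypothesis on $n$.
\end{itemize}
The total failure probability is then $q^{-\Omega(\eta n)}$, and a union bound with the list-decoding event completes the argument. There is also a subtlety in \cref{thm:cd-casting}: it requires $a\le q$, which holds since $q$ is exponentially large in $\ell^2/\eta^4$.

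Note that the code here is over $\mb{F}_q$ (so $s=1$), and hence no folding issues arise.
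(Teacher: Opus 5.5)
Your proposal is correct and follows the paper's proof essentially step for step. You use \cref{cor:common-cd-params} and \cref{cor:rlc-threshold-family}(1) together with $|\mc{F}|\le q^{a+a^2}2^{an}$ for curve-decodability, the chain $\delta(1+\eta/2)\le\delta+\eta/2\le 1-R-3\eta/2$ with \cref{thm:rlc-ld} for list-decodability, and then a union bound; splitting $q^{-\eta n}$ into two halves also shows where $q\ge\exp(\Omega(\ell^2/\eta^4))$ and $n\ge\Omega(\ell^4/\eta^7)$ are each used, which the paper leaves implicit in ``whenever $q$ is sufficiently large.''
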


\begin{proof}
Let $a,b,d$ be as in \cref{cor:common-cd-params}, and let $\mc{F}$ be the $a$-LCL property whose avoidance encodes $(\ell,\delta,a,b)$ curve-decoding constructed in \cref{thm:cd-casting}.  By
\cref{cor:common-cd-params}, $R_{\mc{F}}\ge 1-\delta-\eta$, so
\begin{equation}
	R\le 1-\delta-2\eta\le R_{\mc{F}}-\eta.
\end{equation}
Therefore, by \cref{cor:rlc-threshold-family},
\begin{equation}
	\mb{P}(C\text{ contains }\mc{F})
	\le
	|\mc{F}|q^{-\eta n}
	\le
	q^{a^2+a-\eta n}2^{an}.
\end{equation}
Since $a=O(\ell^2/\eta^3)$, the right-hand side is $o(1)$ whenever $q$ is sufficiently large, as assumed in the theorem. Hence, with high
probability, $C$ avoids $\mc{F}$, so it is
$(\ell,\delta,a,b)$ curve-decodable.

For list-decoding, observe that as $R\le 1-\delta-2\eta$,
\begin{equation}
	\delta\left(1+\frac{\eta}{2}\right)
	\le
	\delta+\frac{\eta}{2}
	\le
	1-R-\frac{3\eta}{2}.
\end{equation}
For $q$ sufficiently large per the assumption in the theorem,
\cref{thm:rlc-ld} implies that, with high probability, $C$ is
$\left(\delta(1+\eta/2),\lceil(1-R)/\eta\rceil\right)$ list-decodable.

Taking a union bound over the list-decoding and curve-decoding events
completes the proof.
    \end{proof}

We are now ready to prove our main theorem on proximity gaps, correlated agreement, mutual correlated agreement of random linear codes.

\begin{theorem}[PG, CA, and MCA for Random Linear Codes]\label{thm:rlc-main}
Let $\ell\in\mb{N}$ and $\eta,\delta\in(0,1)$. Let $C\subset \FF_q^n$ be a
random linear code of rate $R\le 1-\delta-2\eta$.  If $q\ge \exp(\Omega(\ell^2/\eta^4))$ and $n\ge\Omega(\ell^4/\eta^7)$, then with high probability:
\begin{enumerate}
    \item For every $T\ge 1$, $C$ is curve-decodable with parameters
    \begin{equation}
    	\left(
    	\ell,\delta,
    	(T-1)\left\lceil\frac{1-R}{\eta}\right\rceil
    	+O\left(\frac{\ell^2}{\eta^3}\right),
    	T
    	\right).
    \end{equation}
    \item For every $m > 1$, $C$ has correlated agreement, and hence proximity gap, with parameters
    \begin{equation}
    	\left(
    	\ell,\delta,
    	\frac{
    	\ell m\left\lceil(1-R)/\eta\right\rceil
    	+O(\ell^2/\eta^3)}
    	{q},
    	\frac{1}{m}
    	\right).
    \end{equation}
    \item $C$ has mutual correlated agreement with parameters
    \begin{equation}
    	\left(
    	\ell,\delta,
    	\frac{
    	\ell n\left\lceil(1-R)/\eta\right\rceil
    	+O(\ell^2/\eta^3)}
    	{q}
    	\right).
    \end{equation}
\end{enumerate}
\end{theorem}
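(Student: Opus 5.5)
This is a direct assembly of earlier results. First invoke \cref{thm:cd+ld-rlc}: under the stated hypotheses on $q$ and $n$, with high probability $C$ is simultaneously $\bigl(\delta(1+\eta/2), L\bigr)$ list-decodable with $L=\lceil (1-R)/\eta\rceil$, and $(\ell,\delta,a,b)$ curve-decodable with $a=\lceil 100\ell^2/\eta^3\rceil$ and $b=\lceil 4\ell/\eta+\ell\rceil$. We condition on this event for the rest of the argument; everything afterwards is deterministic, so the three conclusions all hold with high probability together.

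Next, check the hypothesis of \cref{cor: main-ca-mca-gg}, which asks for list-decodability at radius $\delta\bigl(1+\frac{\ell}{b-\ell}\bigr)$. Since $b-\ell\ge 4\ell/\eta$, we have $\frac{\ell}{b-\ell}\le \eta/4\le \eta/2$. List-decodability is monotone in the radius, so $\bigl(\delta(1+\eta/2),L\bigr)$ list-decodability implies $\bigl(\delta(1+\tfrac{\ell}{b-\ell}),L\bigr)$ list-decodability. (The edge case $\ell=0$ is trivial.)

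Now apply \cref{cor: main-ca-mca-gg} with these $L$, $a$, $b$:
\begin{itemize}
\item Item 1 gives $(\ell,\delta,(T-1)L+a,T)$ curve-decodability, with $a=O(\ell^2/\eta^3)$.
\item Item 2 gives $\bigl(\ell,\delta,(\ell m L+a)/q,1/m\bigr)$ correlated agreement, hence the proximity gap.
\item Item 3 gives $\bigl(\ell,\delta,(\ell nL+a)/q\bigr)$ mutual correlated agreement.
\end{itemize}
Substituting $L=\lceil(1-R)/\eta\rceil$ and $a=O(\ell^2/\eta^3)$ yields exactly the stated parameters.

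The only step needing care is the radius comparison $\ell/(b-\ell)\le \eta/2$, and it is immediate from the choice of $b$. There is no substantive obstacle beyond correctly matching the parameters of the cited results.
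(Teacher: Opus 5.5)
Your proposal is correct and follows essentially the same route as the paper: invoke \cref{thm:cd+ld-rlc}, check that $\delta\bigl(1+\tfrac{\ell}{b-\ell}\bigr)\le\delta(1+\eta/2)$ from the choice of $b$, and apply \cref{cor: main-ca-mca-gg} with $L=\lceil(1-R)/\eta\rceil$ and $a=O(\ell^2/\eta^3)$. Your explicit bound $\ell/(b-\ell)\le\eta/4$ and your remark that all three conclusions hold simultaneously on the single high-probability event are precisely the details the paper leaves implicit.
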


\begin{proof}
Let $a,b$ be as in \cref{cor:common-cd-params}, and let
$L=\lceil(1-R)/\eta\rceil$. By \cref{thm:cd+ld-rlc}, 
$C$ is $(\delta(1+\eta/2),L)$ list-decodable and
$(\ell,\delta,a,b)$ curve-decodable with high probability.
By \cref{cor:common-cd-params}, we observe that $\delta\left(1+{\ell}/(b-\ell)\right)
	\le
	\delta\left(1+{\eta}/{2}\right)$, and therefore the hypotheses of \cref{cor: main-ca-mca-gg} hold with this value of
$L$.  Since $a=O(\ell^2/\eta^3)$, the three conclusions follow directly from
\cref{cor: main-ca-mca-gg}.
\end{proof}

\subsection{Random Reed-Solomon Codes}
\label{sec:rrs}
Our results naturally generalize to Reed-Solomon codes with random evaluation points (\cref{def: RRS}). The proof is
parallel to the random-linear-code case, except that the low-rate local
threshold uses \cref{cor:rrs-threshold-family} and the list-decoding input is
the following result from \cite{ld}.

\begin{theorem}[{\cite[Theorem 1.1]{ld}}]\label{thm:rrs-ld}
For every $\eta\in(0,1)$ and random Reed-Solomon code $C\subset\mb{F}_q^n$ of rate $R$, if $q\ge n\cdot 2^{\Omega(1/\eta^2)}$, then with high probability $C$ is
\begin{equation}
	\left(1-R-\frac{3\eta}{2},\left\lceil\frac{1-R}{\eta}\right\rceil\right)
	\text{ list-decodable}.
\end{equation}
\end{theorem}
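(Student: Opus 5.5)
This is a result cited from \cite{ld} (their Theorem 1.1), so we will not re-derive it. The plan is to check that the stated parameters are a specialization of that theorem, in the same way \cref{thm:rlc-ld} was obtained from \cite[Theorem 1.3]{ld}. The cited result states that Reed--Solomon codes with random evaluation points over $\mb{F}_q$, with $q \ge n\cdot 2^{\Omega(1/\eps^2)}$, are with high probability $(1-R-\eps, L)$ list-decodable with $L$ of order $(1-R)/\eps$; the precise form we want is $L=\lceil(1-R-\eps')/\eps'\rceil$ or similar.

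The first step is to choose the slack parameter in \cite{ld}. Take it to be $\eps'=\eta$ in the list-size bound and $3\eta/2$ in the radius. Concretely, instantiate the theorem with a radius gap of $3\eta/2$. This makes the list-size bound $\lceil (1-R)/(3\eta/2)\cdot c\rceil$ for whatever constant $c$ appears there. We then check that this is at most $\lceil (1-R)/\eta\rceil$. If the cited bound is exactly $(1-R-\eps)/\eps$ with $\eps = 3\eta/2$, the check is immediate, since $(1-R-3\eta/2)/(3\eta/2)\le (1-R)/\eta$.

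The second step is the field-size condition. Rescaling $\eps$ by a constant factor of $3/2$ changes $2^{\Omega(1/\eps^2)}$ only inside the $\Omega$, so $q\ge n\cdot 2^{\Omega(1/\eta^2)}$ is enough. The "with high probability" conclusion carries over verbatim.

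The only possible obstacle is the exact form of the list-size bound in \cite{ld}. If it is not the optimal $\frac{1-R-\eps}{\eps}$ but, for instance, $\frac{(1-R)(1+o(1))}{\eps}$, then a slightly different split of the slack absorbs the discrepancy. One option is radius gap $3\eta/2$ and list parameter $\eta$; another is to use the fact that list-decodability is monotone in the radius. Either way, no new argument beyond the citation is needed.
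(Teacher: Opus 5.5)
Your approach matches the paper's. The statement is quoted from \cite[Theorem 1.1]{ld} with no further argument, so all that is needed is to check that the stated parameters are a specialization. The concrete check you give, though, rests on a guessed form of the cited theorem, and your main instantiation, a single radius gap $\epsilon=3\eta/2$, would not give the claimed radius.

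As the $3\eta/2$ in the statement suggests, the cited result is stated in generalized-Singleton form. Up to exact constants, for parameters $L$ and $\epsilon$, a random RS code over a field of size $n\cdot 2^{O(L/\epsilon)}$ is with high probability $\bigl(\frac{L}{L+1}(1-R-\epsilon),L\bigr)$ list-decodable. Plugging in $\epsilon=3\eta/2$ gives radius $\frac{L}{L+1}(1-R-3\eta/2)$. This is strictly below $1-R-3\eta/2$ for every $L$, so the loss from the $\frac{L}{L+1}$ factor has to be budgeted separately.

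The specialization that gives exactly the stated parameters is $L=\lceil (1-R)/\eta\rceil$ and $\epsilon=\eta/2$. Since $L+1>(1-R)/\eta$,
\[
\frac{L}{L+1}\Bigl(1-R-\frac{\eta}{2}\Bigr) = 1-R-\frac{\eta}{2}-\frac{1-R-\eta/2}{L+1} \ge 1-R-\frac{\eta}{2}-\eta = 1-R-\frac{3\eta}{2}.
\]
Also $L/\epsilon\le 2(1/\eta+1)/\eta=O(1/\eta^2)$, so the field-size requirement becomes $q\ge n\cdot 2^{O(1/\eta^2)}$.

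So the $3\eta/2$ is $\eta/2$ of slack in the cited theorem plus at most $\eta$ lost to the $\frac{L}{L+1}$ factor. This is the ``different split'' your last paragraph gestures at. You should state it explicitly rather than treat $3\eta/2$ as a single slack parameter.
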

Next, we put together the list-decoding and curve-decoding results required to apply \cref{cor: main-ca-mca-gg} for random Reed-Solomon codes. 

\begin{theorem}\label{thm:cd+ld-rrs}
Let $\ell\in\mb{N}$ and $\eta,\delta\in(0,1)$. Let $C\subset \FF_q^n$ be a
random Reed-Solomon code of rate $R\le 1-\delta-2\eta$.  If $q\ge n\cdot  \exp(\Omega(\ell^4/\eta^7))$ and $n\ge\Omega(\ell^4/\eta^7)$, then with high probability $C$ is both
\begin{equation}
\left(\delta\left(1+\frac{\eta}{2}\right),\left\lceil\frac{1-R}{\eta}\right\rceil\right)
	\text{ list-decodable}
\quad\text{and}\quad
	\left(
	\ell,\delta,
	\left\lceil\frac{100\ell^2}{\eta^3}\right\rceil,
	\left\lceil\frac{4\ell}{\eta}+\ell\right\rceil
	\right)
	\text{ curve-decodable}.
\end{equation}
\end{theorem}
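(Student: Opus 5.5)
The proof mirrors that of \cref{thm:cd+ld-rlc}, replacing the random-linear-code threshold with the low-rate random Reed--Solomon threshold \cref{cor:rrs-threshold-family}, and the list-decoding input with \cref{thm:rrs-ld}. Let $a,b,d$ be as in \cref{eq:common-cd-params}, and let $\mc{F}$ be the row-span constrained $a$-local family from \cref{thm:cd-casting}, whose avoidance is equivalent to $(\ell,\delta,a,b)$ curve-decodability. By \cref{cor:common-cd-params}, $R_{\mc{F}}\ge 1-\delta-\eta$. Hence $R\le 1-\delta-2\eta\le R_{\mc{F}}-\eta$, which puts us in the low-rate regime.

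For curve-decodability, apply \cref{cor:rrs-threshold-family} with $r=a$:
\begin{equation*}
\mb{P}(C\text{ contains }\mc{F})\le q^{a+a^2}2^{an}(2^a-1)\left(\frac{(4a)^{4a}Rn}{\eta q}\right)^{\eta n/2a}.
\end{equation*}
The main obstacle is showing this is $o(1)$, and the bookkeeping goes as follows. Write $q = n\cdot Q$, with $Q = \exp(\Omega(\ell^4/\eta^7))$ to be chosen. The last factor is at most $\bigl((4a)^{4a}/(\eta Q)\bigr)^{\eta n/2a}$. Taking $\log$, the good term is
\[
-\frac{\eta n}{2a}\bigl(\log Q-4a\log(4a)-\log(1/\eta)\bigr).
\]
The bad terms are $an\log 2+a\log 2$ and $(a+a^2)\log q$. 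The $a^2\log q = a^2(\log n+\log Q)$ term must also be absorbed. Choose $\log Q\ge C a^2/\eta$ for a large constant $C$; since $a=\Theta(\ell^2/\eta^3)$, this is $\Theta(\ell^4/\eta^7)$, matching the hypothesis. Then the good term is at least $\Omega(a n)$ times a large constant, which dominates $an\log 2$. The remaining $(a+a^2)\log q$ term is handled by the assumption $n\ge\Omega(\ell^4/\eta^7)$, which is of order $a^2/\eta$. Indeed, the good term is also at least $(\eta n/4a)\log Q$, which exceeds $a^2\log Q$ once $n\gtrsim a^3/\eta$. If that is too lossy, I would instead compare $(\eta n/2a)\log(\eta Q/(4a)^{4a})$ directly against $a^2\log(nQ)$, using the large constant $C$ in $\log Q$ together with $n\ge\Omega(a^2/\eta)$ so that $\eta n/a\gg a$. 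I will tune the constants so these inequalities hold; the $\log n$ piece is negligible because $\eta n/a\cdot\log Q\gg a^2\log n$ for $n$ large. This gives curve-decodability with high probability.

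For list-decoding, as in \cref{thm:cd+ld-rlc}, note that $\delta(1+\eta/2)\le\delta+\eta/2\le 1-R-3\eta/2$. Also, $q\ge n\cdot\exp(\Omega(\ell^4/\eta^7))\ge n\cdot 2^{\Omega(1/\eta^2)}$. So \cref{thm:rrs-ld} gives $(\delta(1+\eta/2),\lceil(1-R)/\eta\rceil)$ list-decodability with high probability. List-decodability is monotone in the radius, so this is the required statement. A union bound over the two failure events finishes the proof.
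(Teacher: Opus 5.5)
Your argument is the paper's proof. It uses the family $\mathcal{F}$ from \cref{thm:cd-casting}, the bound $R_{\mathcal{F}}\ge 1-\delta-\eta$ from \cref{cor:common-cd-params}, \cref{cor:rrs-threshold-family} with $r=a$ for curve-decodability, \cref{thm:rrs-ld} for list-decoding, and a union bound. The paper only asserts that the resulting bound is $o(1)$, so your bookkeeping goes further than it does. However, your fallback step is wrong.

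In the logarithm of the bound, $\log Q$ (with $Q=q/n$) appears with coefficient $(a+a^2)-\eta n/(2a)$. So whenever $\eta n/(2a)\le a^2+a$, the bound increases with $q$, and no choice of the constant $C$ in $\log Q\ge Ca^2/\eta$ can help. The comparison that matters is $\eta n/a$ against $a^2$, not against $a$.

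The hypothesis $n\ge\Omega(a^2/\eta)=\Omega(\ell^4/\eta^7)$ is enough to absorb the $q^{a}$ factor from choosing $A$. It does not absorb the $q^{a^2}$ factor from choosing $U$. For that you need $n\gtrsim a^3/\eta=\Theta(\ell^6/\eta^{10})$, exactly as in your first estimate.

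The proof still closes because ``with high probability'' is asymptotic in $n$ with $\ell,\eta$ fixed. For all large $n$, your first estimate applies uniformly over $q\ge n\exp(Ca^2/\eta)$, and the linear-in-$n$ surplus swallows the $(a^2+a)\log n$ term. Drop the claim that the constants can be tuned at $n=\Theta(\ell^4/\eta^7)$, and state this asymptotic argument instead.
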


\begin{proof}
Let $a,b,d$ be as in \cref{cor:common-cd-params}, and let $\mc{F}$ be the $a$-LCL property whose avoidance encodes $(\ell,\delta,a,b)$ curve-decoding constructed in \cref{thm:cd-casting}.  By
\cref{cor:common-cd-params}, $R_{\mc{F}}\ge 1-\delta-\eta$, so
\begin{equation}
	R\le 1-\delta-2\eta\le R_{\mc{F}}-\eta.
\end{equation}
Therefore, by \cref{cor:rrs-threshold-family} and that $|\mc{F}|\le q^{a^2+a}2^{an}$,
\begin{equation}
\mb{P}(C\text{ contains }\mc{F})
\le
|\mc{F}|(2^a-1)
\left(\frac{(4a)^{4a}Rn}{\eta q}\right)^{\eta n/2a}.
\end{equation}
Since $a=O(\ell^2/\eta^3)$, the right-hand side is $o(1)$ whenever $q$ is sufficiently large per the assumption in the theorem. Hence, with high
probability, $C$ avoids $\mc{F}$, so it is
$(\ell,\delta,a,b)$ curve-decodable.

For list-decoding, observe that as $R\le 1-\delta-2\eta$,
\begin{equation}
	\delta\left(1+\frac{\eta}{2}\right)
	\le
	\delta+\frac{\eta}{2}
	\le
	1-R-\frac{3\eta}{2}.
\end{equation}
For $q$ sufficiently large as assumed in the theorem,
\cref{thm:rrs-ld} implies that, with high probability, $C$ is
$\left(\delta(1+\eta/2),\lceil(1-R)/\eta\rceil\right)$ list-decodable.

Taking a union bound over the list-decoding and curve-decoding events
completes the proof.
    \end{proof}

We are now ready to prove our main theorem on proximity gaps, correlated agreement, mutual correlated agreement of random Reed-Solomon codes.

\begin{theorem}[PG, CA, and MCA for Random Reed-Solomon Codes]\label{thm:rrs-main}
Let $\ell\in\mb{N}$ and $\eta,\delta\in(0,1)$. Let $C\subset \FF_q^n$ be a
random Reed-Solomon code of rate $R\le 1-\delta-2\eta$.  If $q\ge n\cdot \exp(\Omega(\ell^4/\eta^7))$ and $n\ge\Omega(\ell^4/\eta^7)$, then with high probability:
\begin{enumerate}
    \item For every $T\ge 1$, $C$ is curve-decodable with parameters
    \begin{equation}
    	\left(
    	\ell,\delta,
    	(T-1)\left\lceil\frac{1-R}{\eta}\right\rceil
    	+O\left(\frac{\ell^2}{\eta^3}\right),
    	T
    	\right).
    \end{equation}
    \item For every $m > 1$, $C$ has correlated agreement, and hence proximity gap, with parameters
    \begin{equation}
    	\left(
    	\ell,\delta,
    	\frac{
    	\ell m\left\lceil(1-R)/\eta\right\rceil
    	+O(\ell^2/\eta^3)}
    	{q},
    	\frac{1}{m}
    	\right).
    \end{equation}
    \item $C$ has mutual correlated agreement with parameters
    \begin{equation}
    	\left(
    	\ell,\delta,
    	\frac{
    	\ell n\left\lceil(1-R)/\eta\right\rceil
    	+O(\ell^2/\eta^3)}
    	{q}
    	\right).
    \end{equation}
\end{enumerate}
\end{theorem}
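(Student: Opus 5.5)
The plan is to mirror the proof of \cref{thm:rlc-main} verbatim, with \cref{thm:cd+ld-rrs} taking the place of \cref{thm:cd+ld-rlc}. Set $a,b$ as in \cref{cor:common-cd-params}, i.e.\ $a=\lceil 100\ell^2/\eta^3\rceil$ and $b=\lceil 4\ell/\eta+\ell\rceil$, and set $L=\lceil(1-R)/\eta\rceil$. The hypotheses of \cref{thm:rrs-main} on $q$, $n$ and $R\le 1-\delta-2\eta$ are exactly those of \cref{thm:cd+ld-rrs}. So with high probability the random Reed--Solomon code $C$ is simultaneously $(\delta(1+\eta/2),L)$ list-decodable and $(\ell,\delta,a,b)$ curve-decodable.

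Next I check the list-decoding radius required by \cref{cor: main-ca-mca-gg}, namely $\delta(1+\ell/(b-\ell))$. Since $b-\ell\ge 4\ell/\eta\ge 2\ell/\eta$, we get $\ell/(b-\ell)\le \eta/2$. Hence $\delta(1+\ell/(b-\ell))\le\delta(1+\eta/2)$. List-decodability at the larger radius implies it at the smaller one with the same $L$, so the hypotheses of \cref{cor: main-ca-mca-gg} hold.

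Applying \cref{cor: main-ca-mca-gg} then yields all three items. Item 1 gives $(\ell,\delta,(T-1)L+a,T)$ curve-decodability for every $T$. Item 2 gives $(\ell,\delta,(\ell mL+a)/q,1/m)$ correlated agreement, and hence proximity gaps. Item 3 gives $(\ell,\delta,(\ell nL+a)/q)$ mutual correlated agreement. Substituting $a=O(\ell^2/\eta^3)$ gives the stated bounds.

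There is no real obstacle here; the substantive work was already done in \cref{thm:cd+ld-rrs}, through \cref{cor:rrs-threshold-family} and the bound $|\mc{F}|\le q^{a+a^2}2^{an}$. The one step to double-check is the elementary inequality $\ell/(b-\ell)\le\eta/2$, which matches the one already used in the random-linear-code case.
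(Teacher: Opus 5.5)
Your proposal is correct and matches the paper's proof, which is stated to be identical to that of \cref{thm:rlc-main} with \cref{thm:cd+ld-rrs} in place of \cref{thm:cd+ld-rlc}. As in the paper, you take list- and curve-decodability from \cref{thm:cd+ld-rrs}, check $\ell/(b-\ell)\le\eta/2$, and apply \cref{cor: main-ca-mca-gg} with $L=\lceil(1-R)/\eta\rceil$ and $a=O(\ell^2/\eta^3)$.
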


\begin{proof}
The proof is identical to that of \cref{thm:rlc-main}, using
\cref{thm:cd+ld-rrs} in place of \cref{thm:cd+ld-rlc}.
\end{proof}

\subsection{Random LDPC Codes}
\label{sec:rlpdc}
Our results also naturally generalize to random LDPC codes from Gallager's ensemble (\cref{def: RLDPC}). The proof is
parallel to the random-linear-code case: the low-rate local
threshold is supplied by
\cref{cor:rldpc-threshold-family}, and the list-decoding input follows from
\cite[Theorem 1.2]{mrrsw20}, which transfers list-decodability of random
linear codes (\cref{thm:rlc-ld}) to sufficiently sparse random LDPC codes.
We begin by recording the list-decoding fact we use.  
\begin{theorem}
\label{thm:rldpc-ld}
For every $\eta, R\in(0,1)$, there exists
$s_0=s_0(\eta,q,R)$ such that the following holds for every odd
$s>s_0$.  Let $C=\mathsf{RLDPC}(n,q,s,R)$ be a random LDPC code from
Gallager's ensemble with rate $R$. If
$q\ge 2^{\Omega(1/\eta^2)}$ and $n$ is sufficiently large, then with high
probability $C$ is
\begin{equation}
    \left(
        1-R-\frac{3\eta}{2},
        \left\lceil\frac{2(1-R)}{\eta}\right\rceil
    \right)
    \text{ list-decodable}.
\end{equation}
\end{theorem}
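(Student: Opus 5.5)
The plan is to obtain \cref{thm:rldpc-ld} by transferring \cref{thm:rlc-ld} through the local-property transfer theorem of \cite[Theorem 1.2]{mrrsw20}. That theorem says that local properties satisfied with high probability by random linear codes of rate $R$ are also satisfied by sufficiently sparse Gallager LDPC codes of rate $R$, up to an arbitrarily small loss in rate. List-decodability with list size $L$ is exactly such a property. A code fails to be $(\rho,L)$ list-decodable iff there are $L+1$ distinct codewords inside a Hamming ball of radius $\rho n$. This is witnessed by an $(L+1)$-local configuration: an $n\times(L+1)$ matrix with distinct columns in $C$, each within distance $\rho n$ of a common center. The family of such configurations, over all agreement-set patterns, has size $2^{O(Ln)}$, so it is ``reasonable'' in the sense needed by \cite{mrrsw20}.

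I would proceed in four steps.

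\emph{Step 1: pick the RLC parameter.} Apply \cref{thm:rlc-ld} with $\eta' = \eta/2$, which allows a slack in rate. This gives that a random linear code of rate $R'$ is $\bigl(1-R'-\tfrac{3\eta}{4},\lceil (1-R')/(\eta/2)\rceil\bigr) = \bigl(1-R'-\tfrac{3\eta}{4},\lceil 2(1-R')/\eta\rceil\bigr)$ list-decodable with high probability. This requires $q\ge 2^{\Omega(1/\eta^2)}$, which absorbs the constant change. The halving of $\eta$ is exactly what produces the factor $2$ in the list size.

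\emph{Step 2: transfer.} Invoke \cite[Theorem 1.2]{mrrsw20}, which states the following. For every $\eps>0$ there is $s_0(\eps,q,R,L)$ such that for odd $s>s_0$, if random linear codes of rate $R+\eps$ (in their formulation, threshold rate slack $\eps$) satisfy an $(L+1)$-local property w.h.p., then $\mathsf{RLDPC}(n,q,s,R)$ does as well. I would take $\eps=\eta/4$ and $R'=R+\eta/4$. Here $L$ depends only on $\eta$ and $R$, so $s_0$ depends only on $(\eta,q,R)$, as claimed.

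\emph{Step 3: track the radius.} The radius becomes $1-R-\tfrac{\eta}{4}-\tfrac{3\eta}{4} = 1-R-\eta$, which is at least $1-R-\tfrac{3\eta}{2}$. The list size is $\lceil 2(1-R-\eta/4)/\eta\rceil \le \lceil 2(1-R)/\eta\rceil$. Since list-decodability is monotone in both radius and list size, this yields the claimed statement.

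\emph{Step 4 (main obstacle): match the transfer theorem's interface.} The main difficulty is the precise form of the transfer theorem. \cite{mrrsw20} phrase it via a threshold rate $R_{\mathcal{P}}$ for local properties and require the LDPC rate to lie strictly below the RLC threshold by a margin. So I must check that the list-decoding property's RLC threshold is at least $R+\eta/4$. This follows from Step 1 applied at rate $R+\eta/4$, since w.h.p. success at that rate forces the threshold to be no smaller; the threshold theorem of \cref{cor:rlc-threshold-family} gives the dichotomy. I must also check that the locality $L+1$ and the family size $2^{O(Ln)}$ are within their hypotheses. I would also use \cref{lem:rldpc-good-distance} to handle the conditioning on $\mathsf{Good}$, contributing only an $o_{n\to\infty}(1)$ failure probability; this is why the statement requires $n$ sufficiently large.
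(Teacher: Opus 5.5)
Your proposal is correct and matches the paper, which likewise obtains this statement as an immediate consequence of transferring \cref{thm:rlc-ld} through \cite[Theorem 1.2]{mrrsw20}, with the factor of two in the list size coming from the rate slack (\cref{rem:factor of two}); your choice $\eta'=\eta/2$ and $R'=R+\eta/4$ makes that bookkeeping explicit. The extra material in Step 4 (re-deriving the threshold, conditioning on $\mathsf{Good}$, and counting $2^{O(Ln)}$ profiles) is unnecessary once that theorem is used as a black box for list-decodability.
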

This is
an immediate consequence of \cite[Theorem 1.2]{mrrsw20} and
\cref{thm:rlc-ld}, which is exactly {\cite[Theorem 1.3]{ld}}.  

\begin{remark}\label{rem:factor of two}The factor of two loss in the list size of \cref{thm:rldpc-ld} compared to \cref{thm:rlc-ld} comes from the
small slack in the rate in \cite[Theorem 1.2]{mrrsw20}, and can be made arbitrarily small.
\end{remark}

Next, we put together the list-decoding and curve-decoding results required to apply \cref{cor: main-ca-mca-gg} for random LDPC codes. 

\begin{theorem}\label{thm:cd+ld-rldpc}
Let $\ell\in\mb{N}$ and $\eta,\delta\in(0,1)$, and let $R \leq 1 -\delta - 2\eta$. There exists
$s_0=s_0(\ell,\eta,\delta,q,R)$ such that the following holds for every odd
$s>s_0$. Let $C=\mathsf{RLDPC}(n,q,s,R)$ be a random LDPC code from
Gallager's ensemble with rate $R$. If
$q\ge \exp(\Omega(\ell^2/\eta^4))$ and
$n$ is sufficiently large, then with high probability $C$ is both
\begin{equation}
\left(\delta\left(1+\frac{\eta}{2}\right),\left\lceil\frac{2(1-R)}{\eta}\right\rceil\right)
	\text{ list-decodable}
\quad\text{and}\quad
	\left(
	\ell,\delta,
	\left\lceil\frac{100\ell^2}{\eta^3}\right\rceil,
	\left\lceil\frac{4\ell}{\eta}+\ell\right\rceil
	\right)
	\text{ curve-decodable}.
\end{equation}
\end{theorem}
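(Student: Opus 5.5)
The argument follows the proofs of \cref{thm:cd+ld-rlc} and \cref{thm:cd+ld-rrs}. The only change is that the random-linear-code threshold is replaced by the LDPC threshold \cref{cor:rldpc-threshold-family}(1), and the list-decoding input is replaced by \cref{thm:rldpc-ld}.

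\textbf{Curve-decodability.} Let $a,b,d$ be as in \cref{cor:common-cd-params}, and let $\mc{F}$ be the row-span constrained $a$-local family of \cref{thm:cd-casting}. Avoiding $\mc{F}$ is equivalent to $(\ell,\delta,a,b)$ curve-decodability. By \cref{cor:common-cd-params}, $R_{\mc{F}}\ge 1-\delta-\eta$, so $R\le 1-\delta-2\eta\le R_{\mc{F}}-\eta$.

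We apply \cref{cor:rldpc-threshold-family}(1) with locality $r=a$ and slack $\eta$. This requires $s>s_0(\eta,a,R,q)$; since $a$ depends only on $\ell,\eta$, this is a function of $(\ell,\eta,q,R)$, and it is absorbed into the $s_0$ of the statement. It gives
\[
\mb{P}(C\text{ contains }\mc{F})\le |\mc{F}|q^{-\eta n/2}+o_{n\to\infty}(1)\le q^{a^2+a}2^{an}q^{-\eta n/2}+o(1).
\]
The exponent is $n(a\log_q 2-\eta/2)+a^2+a$. Because $a=O(\ell^2/\eta^3)$, the hypothesis $q\ge\exp(\Omega(\ell^2/\eta^4))$ gives $a\log_q 2\le \eta/4$. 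The bound then becomes $q^{a^2+a-\eta n/4}+o(1)$, which is $o(1)$ for sufficiently large $n$.

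\textbf{List-decodability.} As before, $\delta(1+\eta/2)\le\delta+\eta/2\le 1-R-3\eta/2$. Since $q\ge 2^{\Omega(1/\eta^2)}$ is implied by the assumption on $q$, \cref{thm:rldpc-ld} applies for $s>s_0(\eta,q,R)$ and large $n$. List-decodability is monotone in the radius, so $C$ is $(\delta(1+\eta/2),\lceil 2(1-R)/\eta\rceil)$ list-decodable with high probability.

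\textbf{Conclusion.} Take $s_0$ to be the maximum of the two thresholds and union bound the two failure events.

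\textbf{Main obstacle.} The only point that needs care is the quantifier order. The LDPC threshold result holds only for $s$ beyond a threshold that depends on the locality $r=a$, so $s_0$ must be allowed to depend on $\ell$, as the statement permits. We also need that the $|\mc{F}|\le q^{a^2+a}2^{an}$ blow-up is dominated by $q^{-\eta n/2}$ even though the exponent here is halved relative to the random-linear-code case. This is what the constant in $q\ge\exp(\Omega(\ell^2/\eta^4))$ is chosen to guarantee.
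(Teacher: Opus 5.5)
Your proposal is correct and follows the paper's proof essentially step for step. You use the same family $\mathcal{F}$ from \cref{thm:cd-casting} with the parameters of \cref{cor:common-cd-params}, apply \cref{cor:rldpc-threshold-family}(1) with locality $a$ to get $|\mathcal{F}|q^{-\eta n/2}+o(1)=o(1)$, get list-decoding from \cref{thm:rldpc-ld} via monotonicity in the radius, and finish with a union bound; your explicit checks that $q\ge\exp(\Omega(\ell^2/\eta^4))$ forces $a\log_q 2\le\eta/4$ and that $s_0$ must absorb the dependence on $a$ (hence on $\ell$) only make precise what the paper leaves implicit.
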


\begin{proof}
Let $a,b,d$ be as in \cref{cor:common-cd-params}, and let $\mc{F}$ be the $a$-LCL property whose avoidance encodes $(\ell,\delta,a,b)$ curve-decoding constructed in \cref{thm:cd-casting}.  By
\cref{cor:common-cd-params}, $R_{\mc{F}}\ge 1-\delta-\eta$, so
\begin{equation}
	R\le 1-\delta-2\eta\le R_{\mc{F}}-\eta.
\end{equation}
By \cref{cor:rldpc-threshold-family}, for sufficiently large odd
$s$, we get
\begin{equation}
	\mb{P}(C\text{ contains }\mc{F})
	\le
	|\mc{F}|q^{-\eta n/2}+o_{n\to\infty}(1)
	\le
	q^{a^2+a-\eta n/2}2^{an}+o_{n\to\infty}(1).
\end{equation}
Since $a=O(\ell^2/\eta^3)$, the right-hand side is $o(1)$ whenever $q$ is sufficiently large, as assumed in the theorem. Hence, with high
probability, $C$ avoids $\mc{F}$, so it is
$(\ell,\delta,a,b)$ curve-decodable.

For list-decoding, observe that as $R\le 1-\delta-2\eta$,
\begin{equation}
	\delta\left(1+\frac{\eta}{2}\right)
	\le
	\delta+\frac{\eta}{2}
	\le
	1-R-\frac{3\eta}{2}.
\end{equation}
For $q$ sufficiently large per the assumption in the theorem,
\cref{thm:rldpc-ld} implies that, with high probability, $C$ is
$\left(\delta(1+\eta/2),\lceil 2(1-R)/\eta\rceil\right)$ list-decodable.

Taking a union bound over the list-decoding and curve-decoding events
completes the proof.
    \end{proof}

We are now ready to prove our main theorem on proximity gaps, correlated agreement, mutual correlated agreement of random LDPC codes.

\begin{theorem}[PG, CA, and MCA for Random LDPC Codes]\label{thm:rldpc-main}
Let $\ell\in\mb{N}$ and $\eta,\delta\in(0,1)$. There exists
$s_0=s_0(\ell,\eta,\delta,q)$ such that the following holds for every odd
$s>s_0$. Let $C=\mathsf{RLDPC}(n,q,s,R)$ be a random LDPC code from
Gallager's ensemble with rate $R\le 1-\delta-2\eta$. If
$q\ge \exp(\Omega(\ell^2/\eta^4))$ and
$n$ is sufficiently large, then with high probability:
\begin{enumerate}
    \item For every $T\ge 1$, $C$ is curve-decodable with parameters
    \begin{equation}
        \left(
        \ell,\delta,
        (T-1)\left\lceil\frac{2(1-R)}{\eta}\right\rceil
        +O\left(\frac{\ell^2}{\eta^3}\right),
        T
        \right).
    \end{equation}

    \item For every $m > 1$, $C$ has correlated agreement, and hence
    proximity gap, with parameters
    \begin{equation}
        \left(
        \ell,\delta,
        \frac{
        \ell m\left\lceil 2(1-R)/\eta\right\rceil
        +O(\ell^2/\eta^3)}
        {q},
        \frac{1}{m}
        \right).
    \end{equation}

    \item $C$ has mutual correlated agreement with parameters
    \begin{equation}
        \left(
        \ell,\delta,
        \frac{
        \ell n\left\lceil 2(1-R)/\eta\right\rceil
        +O(\ell^2/\eta^3)}
        {q}
        \right).
    \end{equation}
\end{enumerate}
\end{theorem}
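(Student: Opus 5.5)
The proof will run exactly parallel to that of \cref{thm:rlc-main}, with \cref{thm:cd+ld-rldpc} in place of \cref{thm:cd+ld-rlc}. Let $a,b,d$ be as in \cref{cor:common-cd-params}, so that $a=\lceil 100\ell^2/\eta^3\rceil$ and $b=\lceil 4\ell/\eta+\ell\rceil$, and set $L\coloneqq\lceil 2(1-R)/\eta\rceil$. Choose $s_0$ to be the threshold supplied by \cref{thm:cd+ld-rldpc}; that result allows $s_0$ to depend on $R$ as well.

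\textbf{Step 1 (both decoding properties hold).} For odd $s>s_0$, $q\ge\exp(\Omega(\ell^2/\eta^4))$ and $n$ sufficiently large, \cref{thm:cd+ld-rldpc} gives that, with high probability, $C$ is $(\delta(1+\eta/2),L)$ list-decodable and $(\ell,\delta,a,b)$ curve-decodable.

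\textbf{Step 2 (match the hypothesis of \cref{cor: main-ca-mca-gg}).} We need $\delta(1+\ell/(b-\ell))\le\delta(1+\eta/2)$. Since $b-\ell\ge 4\ell/\eta$, we get $\ell/(b-\ell)\le\eta/4\le\eta/2$. List-decodability is monotone in the radius, so $C$ is $(\delta(1+\ell/(b-\ell)),L)$ list-decodable. Together with $(\ell,\delta,a,b)$ curve-decodability, this is exactly the hypothesis of \cref{cor: main-ca-mca-gg}.

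\textbf{Step 3 (read off the conclusions).} Applying \cref{cor: main-ca-mca-gg} with this $L$ and $a=O(\ell^2/\eta^3)$ gives the three items directly. Its item 1 gives $(\ell,\delta,(T-1)L+a,T)$ curve-decodability. Its item 2 gives $(\ell,\delta,(\ell mL+a)/q,1/m)$ correlated agreement, and hence the proximity gap. Its item 3 gives $(\ell,\delta,(\ell nL+a)/q)$ mutual correlated agreement.

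\textbf{Main obstacle.} There is no genuine obstacle; the only point needing care is the quantifier on $s_0$. The statement lets $s_0$ depend on $(\ell,\eta,\delta,q)$, while \cref{thm:cd+ld-rldpc} also lets it depend on $R$. I would handle this in one of two ways. First, treat $R$ as fixed, with $R\le 1-\delta-2\eta$, so the dependence is harmless. Alternatively, note that the constraint $R\le1-\delta-2\eta$ bounds $R$ away from $1$, and take $s_0$ uniform over $R$ in this range, assuming the thresholds $s_0(\cdot,R)$ of \cref{lem:rldpc-good-distance,lem: RLDPC-threshold,thm:rldpc-ld} can be chosen monotone in $R$.
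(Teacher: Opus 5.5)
Your proposal is correct and is exactly the paper's argument: the paper simply reruns the proof of \cref{thm:rlc-main} with \cref{thm:cd+ld-rldpc} in place of \cref{thm:cd+ld-rlc}. As in your Steps 2 and 3, it checks $\delta(1+\ell/(b-\ell))\le\delta(1+\eta/2)$ and reads the three items off \cref{cor: main-ca-mca-gg} with $L=\lceil 2(1-R)/\eta\rceil$ and $a=O(\ell^2/\eta^3)$.

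You are right that the $s_0$ inherited from \cref{thm:cd+ld-rldpc} depends on $R$, and the paper passes over this silently. The paper effectively adopts your first reading, with $R$ fixed. Your second option would need the cited LDPC thresholds to be uniform in $R$, which nothing in the paper establishes.
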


\begin{proof}
The proof is identical to that of \cref{thm:rlc-main}, using
\cref{thm:cd+ld-rldpc} in place of \cref{thm:cd+ld-rlc}.
\end{proof}

\bibliographystyle{alpha}
\bibliography{ref.bib}

\appendix

\section{Appendix: Gallager's Ensemble}\label{app: RLDPC}

In this appendix we prove \cref{lem: RLDPC-threshold}, the fixed-pair
containment estimate for row-span constrained local profiles in Gallager's
ensemble.  The proof follows the same philosophy as \cite{mrrsw20}.  
\begin{enumerate}
    \item 
\cite{mrrsw20} proves that a fixed smooth matrix is contained in a random LDPC
code with essentially the same probability as in a random linear code. 
\item Random LDPC codes have good distance with high probability ({\cite[Theorem 2.14]{mrrsw20}}, stated as \cref{lem:rldpc-good-distance} earlier), so non-smooth matrices will not be contained after conditioning on the event
\[
    \mathsf{Good}
    \coloneqq
    \left\{
        \Delta(C)\ge \frac{1}{2}h_q^{-1}(1-R)
    \right\}.
\]
\end{enumerate}
Combining these two facts gives the matrix-containment estimate that we aim to prove. 
We begin with the notion of smoothness used in \cite{mrrsw20}.

\begin{definition}[$\delta$-smooth matrices]
$M\in\FF_q^{n\times r}$ is \emph{$\delta$-smooth} if for
every nonzero $x\in\FF_q^r$,
\begin{equation}
    \Delta(Mx,0)\ge \delta.
\end{equation}
Equivalently, the linear code generated by the columns of $M$ has relative
distance at least $\delta$. In particular, a $\delta$-smooth matrix has full
column rank.
\end{definition}

The next input is the fixed smooth matrix containment estimate from
\cite{mrrsw20}. Here $M\subset C$ means that every column of $M$ is contained in $C$.

\begin{lemma}[{\cite[Lemma 2.13]{mrrsw20}}]
\label{lem:rldpc-smooth-matrix}
For every $\delta,\xi>0$ and $r\in\mb{N}$, there
exists $s_0=s_0(\delta,\xi,q,r)$ such that the following holds for every odd
$s>s_0$ and all sufficiently large $n$.
Let $M\in\FF_q^{n\times r}$ be a $\delta$-smooth matrix. If
$C=\mathsf{RLDPC}(n,q,s,R)$, then
\begin{equation}
    \mb{P}\left(M\subset C\right)
    \le
    q^{(1-\xi)(R-1)rn}.
\end{equation}
\end{lemma}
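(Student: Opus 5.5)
The plan is to exploit the product structure of Gallager's ensemble and reduce everything to a single-layer, single-check computation done with characters. Write $C=\bigcap_{k=1}^{t}C_k$, where $C_k$ is the code cut out by the checks of the $k$-th layer $G_k$, and $t=(1-R)s$. The layers (graphs and weights) are independent, so
\[
\mb{P}(M\subset C)=\prod_{k=1}^t\mb{P}(M\subset C_k)=\mb{P}(M\subset C_1)^{t}.
\]
It therefore suffices to prove
\[
\mb{P}(M\subset C_1)\le q^{-(1-\xi)rn/s}.
\]
In one layer, the graph is a uniformly random partition of $[n]$ into $n/s$ blocks $B$ of size $s$. Moreover, $M\subset C_1$ if and only if, for every block $B$, we have $\sum_{u\in B}w_uM_{u\star}=0\in\FF_q^r$, with i.i.d.\ weights $w_u\in\FF_q^*$. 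Conditioned on the partition, the blocks are independent in the weights, so $\mb{P}(M\subset C_1\mid\text{partition})=\prod_B P_B$.

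For a single block we expand with additive characters. Since $\mb{E}_{w\in\FF_q^*}\chi(wy)$ equals $1$ if $y=0$ and $-1/(q-1)$ otherwise,
\[
P_B=q^{-r}\sum_{x\in\FF_q^r}\Bigl(\tfrac{-1}{q-1}\Bigr)^{\on{wt}_B(Mx)}
\le q^{-r}\Bigl(1+\sum_{x\ne0}(q-1)^{-\on{wt}_B(Mx)}\Bigr),
\]
where $\on{wt}_B$ counts the nonzero coordinates of $Mx$ inside $B$. Call a block \emph{good} if $\on{wt}_B(Mx)\ge K$ for every nonzero $x$, with $K$ a large constant chosen below. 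On a good block, $P_B\le q^{-r}(1+q^r(q-1)^{-K})\le q^{-r}q^{\xi r/2}$ once $K=K(q,r,\xi)$ is large enough. On a bad block we use the trivial bound $P_B\le 1$.

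Hence, if at most a $\xi/2$ fraction of blocks are bad, then
\[
\prod_B P_B\le q^{-(r-\xi r/2)(1-\xi/2)n/s}\le q^{-(1-\xi)rn/s}.
\]

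It remains to control the number of bad blocks over the random partition. This is the step I expect to be the main obstacle, because the blocks of a random partition are not independent. Fix a nonzero $x$. By $\delta$-smoothness, $Mx$ has at least $\delta n$ nonzero coordinates, so $\on{wt}_B(Mx)$ for a single block is hypergeometric with mean at least $\delta s$. Consequently $\mb{P}(\on{wt}_B(Mx)<K)\le e^{-\Omega(\delta s)}$ once $s\gg K/\delta$. A union bound over the $q^r$ choices of $x$ shows that a single block is bad with probability $p\le q^re^{-\Omega(\delta s)}$, which is at most $\xi/4$ for $s>s_0(\delta,\xi,q,r)$.

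To pass from one block to the count over all $n/s$ blocks, I would use that the block-occupation indicators of a uniformly random partition are negatively associated; equivalently, one can expose the blocks sequentially while only $o(n)$ coordinates have been used in a way that shifts the conditional probabilities. This yields a Chernoff bound:
\[
\mb{P}\bigl(\#\text{bad}>\tfrac{\xi}{2}\tfrac ns\bigr)\le e^{-\Omega(\xi n/s)}.
\]
Because the target bound $q^{-(1-\xi)rn/s}$ is itself only exponentially small in $n/s$, this error term has to be absorbed. I would do this by making the per-block bad probability $p$ much smaller than the target: take $s$ larger, so that the bad-count tail is at most $\binom{n/s}{\xi n/2s}p^{\xi n/2s}\le q^{-2rn/s}$. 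Working directly with this tail over the sets of bad blocks, rather than a generic Chernoff bound, keeps the estimate clean.

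Combining the conditional product bound with the tail estimate gives $\mb{P}(M\subset C_1)\le 2q^{-(1-\xi)rn/s}$. Raising this to the power $t=(1-R)s$ and readjusting $\xi$ to absorb the constant factor $2^t$ completes the argument. One further point needs care: the parity of $s$ only enters through the sign $(-1/(q-1))^{\on{wt}}$ in the character sum. Since I bound that sum in absolute value, I expect oddness of $s$ to be unnecessary for this upper bound; it matters in \cite{mrrsw20} for other estimates.
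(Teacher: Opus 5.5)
Your single-layer reduction, the character expansion of $P_B$, and the tail bound over sets of bad blocks are sound when $q\ge 3$. The argument collapses at $q=2$, however, and the lemma as stated covers that case. Over $\mathbb{F}_2$ we have $\mathbb{F}_2^*=\{1\}$, so the weights carry no randomness. Then $\mb{E}_w\chi(wy)=-1$ for $y\neq 0$, the factor $(q-1)^{-K}$ equals $1$, and your good-block estimate degenerates to $P_B\le 1$. Indeed, conditioned on the partition, $P_B=\mathbf{1}[\sum_{u\in B}M_{u\star}=0]$ is deterministic. So all of the decay has to come from the random partition, which you only use to certify that blocks are good. Your closing claim that oddness of $s$ is unnecessary is false exactly here. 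Take $q=2$, $r=1$, $s$ even, and $M=\mathbf{1}$, which is $1$-smooth. Every check has $s$ neighbors with weight $1$, so $\mathbf{1}\in C$ with probability $1$, violating the bound $2^{(1-\xi)(R-1)n}$. The hypothesis that $s$ is odd is there precisely because of this case.

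For $q=2$ you are missing a way to extract a factor of roughly $2^{-r}$ per check from the random assignment of rows to checks. For one random block, $\mb{P}(\sum_{u\in B}M_{u\star}=0)=2^{-r}\sum_{x}\mb{E}_B(-1)^{\on{wt}_B(Mx)}$. In the i.i.d.\ approximation the $x$-term is $(1-2p_x)^s$, where $p_x\ge\delta$ is the relative weight of $Mx$. This term is small when $p_x$ is bounded away from $1$, but it tends to $(-1)^s$ as $p_x\to 1$, and only for odd $s$ are these terms harmless. Turning this into a bound for the whole layer requires handling the dependence among essentially all $n/s$ blocks of the partition, not just a $\xi$-fraction as in your bad-block count. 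It also has to cope with the remaining row-type drifting as blocks are exposed. This is the part of \cite{mrrsw20} that analyzes the random graph directly; the paper does not reprove it and imports the lemma as a black box.

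For $q\ge 3$, which covers all of the paper's applications (there $q\ge\exp(\Omega(\ell^2/\eta^4))$), your argument is a valid and more elementary alternative, with one small repair. Exposing the $\xi n/(2s)$ candidate bad blocks removes $\xi n/2$ coordinates, not $o(n)$. So first reduce to $\xi\le\delta$, which only strengthens the conclusion, to keep at least a $\delta/2$ fraction of the support of each $Mx$ in the unexposed pool.
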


We now upgrade the smooth-matrix estimate to an arbitrary fixed matrix,
conditioned on $\mathsf{Good}$.

\begin{lemma}[Fixed matrix containment under good distance]
\label{lem:rldpc-fixed-matrix}
For every $\eps,R\in(0,1)$, prime power $q$, and positive integer $r$, there
exists $s_0=s_0(\eps,r,R,q)$ such that for every odd $s>s_0$, the following
holds for all sufficiently large $n$.
If $C=\mathsf{RLDPC}(n,q,s,R)$, then for every matrix $M\in\FF_q^{n\times r}$,
\begin{equation}
    \mb{P}\left(
        M\subset C
       \middle|
        \mathsf{Good}
    \right)
    \le
    q^{(R-1)n\on{rank}(M)+\eps n}\quad\text{where}\quad   \mathsf{Good}
    \coloneqq
    \left\{
        \Delta(C)\ge \frac{1}{2}h_q^{-1}(1-R)
    \right\}.
\end{equation}
\end{lemma}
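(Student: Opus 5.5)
Let $k\coloneqq\on{rank}(M)$. The plan is to reduce to the $\delta$-smooth case of \cref{lem:rldpc-smooth-matrix}: either some column combination of $M$ is light, and then $\mathsf{Good}$ excludes $M\subset C$ outright; or $M$ has a full-rank $n\times k$ submatrix that is smooth, and \cref{lem:rldpc-smooth-matrix} applies to it.

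Set $\delta_0\coloneqq\frac12 h_q^{-1}(1-R)$. First replace $M$ by a matrix $M'\in\FF_q^{n\times k}$ whose columns form a basis of the column span of $M$. Then $M\subset C$ if and only if $M'\subset C$, so it suffices to bound $\mb{P}(M'\subset C\mid\mathsf{Good})$. If $k=0$ the bound is trivial, so assume $k\ge 1$. We split into two cases according to whether $M'$ is $\delta_0$-smooth.

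\textbf{Case 1: $M'$ is not $\delta_0$-smooth.} Then there is a nonzero $x\in\FF_q^k$ with $0<\Delta(M'x,0)<\delta_0$. We have $M'x\neq 0$ because $M'$ has full column rank. If $M'\subset C$, then $M'x$ is a nonzero codeword of weight below $\delta_0$, which contradicts $\mathsf{Good}$. Hence $\mb{P}(M'\subset C\mid\mathsf{Good})=0$.

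\textbf{Case 2: $M'$ is $\delta_0$-smooth.} Here we use $\mb{P}(A\mid\mathsf{Good})\le \mb{P}(A)/\mb{P}(\mathsf{Good})$. Then we apply \cref{lem:rldpc-smooth-matrix} with smoothness $\delta_0$, dimension $k\le r$, and slack $\xi$ chosen so that $\xi(1-R)r\le\eps/2$. This gives
\[
\mb{P}(M'\subset C)\le q^{(1-\xi)(R-1)kn}\le q^{(R-1)kn+\eps n/2}.
\]
Take $s_0$ to be the maximum of the thresholds from \cref{lem:rldpc-smooth-matrix} over $k\in[r]$ and the threshold from \cref{lem:rldpc-good-distance}. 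For large $n$ this gives $\mb{P}(\mathsf{Good})\ge 1/2$. The factor $2=q^{O(1)}$ is then absorbed into $q^{\eps n/2}$ for large $n$.

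The main subtlety is that all thresholds must be uniform in $M$. This holds because \cref{lem:rldpc-smooth-matrix} depends only on $(\delta_0,\xi,q,k)$, and $\delta_0$ depends only on $R$ and $q$. We also need $R<1-1/q$ to invoke \cref{lem:rldpc-good-distance}. Otherwise we can still conclude, since we only need $\mb{P}(\mathsf{Good})>0$ in a form that does not decay exponentially. I would simply inherit that hypothesis, as the paper's use of the lemma implicitly does.
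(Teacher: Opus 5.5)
Your proposal is correct and follows essentially the same route as the paper. Both arguments pass to $\on{rank}(M)$ linearly independent columns, show that a non-smooth basis matrix has conditional probability $0$ under $\mathsf{Good}$, and in the smooth case apply \cref{lem:rldpc-smooth-matrix} with $\xi\approx\eps/(2r)$ and divide by $\mb{P}(\mathsf{Good})=1-o(1)$. You also make explicit two points the paper leaves implicit: taking $s_0$ uniform over all column counts $k\le r$, and the hypothesis $R<1-1/q$ needed to invoke \cref{lem:rldpc-good-distance}.
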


\begin{proof}
Let $r'=\operatorname{rank}(M)$. Choose $r'$ linearly independent columns of
$M$ and let $N\in\FF_q^{n\times r'}$ be the corresponding submatrix. Since
$C$ is linear, $M\subset C$ if and only if $N\subset C$. Fix $\delta\coloneqq h_q^{-1}(1-R)/2$.

If $N$ is not $\delta$-smooth, then there exists a nonzero
$x\in\FF_q^{r'}$ such that $\Delta(Nx,0)<\delta$. If $N\subset C$, then
$Nx$ is a nonzero codeword of $C$ of relative weight less than $\delta$,
contradicting $\mathsf{Good}$. Thus in this case $\mb{P}\left(N\subset C\middle|\mathsf{Good}\right)=0$, which trivially satisfies the lemma as $M\subset C$ if and only if $N\subset C$.

If $N$ is $\delta$-smooth, then by \cref{lem:rldpc-smooth-matrix} with
$\xi=\eps/(2r)$ and that $r'\le r$, we bound
\begin{equation}
\mb{P}(N\subset C) \le
    q^{-(1-\xi)(1-R)r'n}
    \le
    q^{(R-1)r'n+\eps n/2}.
\end{equation}
By \cref{lem:rldpc-good-distance}, after increasing $s_0$ if necessary,
$\mb{P}(\mathsf{Good})\ge 1-o_{n\to\infty}(1)$. Hence, 
\begin{equation}
    \mb{P}\left(
        N\subset C
       \middle| 
        \mathsf{Good}
    \right) \le 
    \frac{\mb{P}(N\subset C)}{\mb{P}(\mathsf{Good})}
    \le
    q^{(R-1)r'n+\eps n}
\end{equation}
for sufficiently large $n$.
Since $M\subset C$ if and only if $N\subset C$, the lemma follows.
\end{proof}

We now prove \cref{lem: RLDPC-threshold}, restated below for the reader's convenience.

\rldpclocalprofiles*

\begin{proof}
Let $\mc{M}_{\mbf{V}}(U)$
be the set of matrices $M\in\FF_q^{n\times r}$ with row-span exactly $U$
such that $M_{i\star}\in V_i$ for every $i\in[n]$. If $C$ contains
$(\mbf{V},U)$, then it contains some matrix $M\in\mc{M}_{\mbf{V}}(U)$.
Every matrix $M\in\mc{M}_{\mbf{V}}(U)$ has rank $r'=\dim U$. Therefore, by
\cref{lem:rldpc-fixed-matrix},
\begin{equation}\label{eq:A1}
    \mb{P}\left(
        M\subset C
       \middle| 
        \mathsf{Good}
    \right)
    \le
    q^{(R-1)nr'+\eps n}.
\end{equation}
By \cref{lem:profile-size}, we have
\begin{equation}\label{eq:A2}
    |\mc{M}_{\mbf{V}}(U)|
    \le
    q^{n((1-R)\dim U+\phi_{\mbf{V}}(U,R))}.
\end{equation}
Union bounding over all $M\in\mc{M}_\mbf{V}(U)$ and combining \cref{eq:A1,eq:A2} gives
\begin{equation}
\mb{P}\left(C\text{ contains }(\mbf{V},U)\middle| \mathsf{Good}
\right)
\le
q^{n((1-R)r'+\phi_{\mbf{V}}(U,R))}
q^{(R-1)nr'+\eps n} =
q^{n\phi_{\mbf{V}}(U,R)+\eps n}.
\end{equation}
This proves the lemma.
\end{proof}

\end{document}